\documentclass[12pt]{article}

\usepackage{microtype} 
\usepackage{xr}
\usepackage{amsfonts}
\usepackage{amsmath}
\usepackage{amssymb,amsthm}
\usepackage{mathabx}
\usepackage[round]{natbib}
\usepackage{graphicx}
\usepackage[T1]{fontenc}
\usepackage{lmodern}
\usepackage{titlesec}
\usepackage{rotating,multirow}
\usepackage{geometry}
\usepackage{color}
\usepackage{longtable,lscape}
\usepackage[normalem]{ulem}
\usepackage[onehalfspacing]{setspace}{\color{blue}}
\usepackage{bbm}
\usepackage{caption}
\usepackage{subcaption}
\usepackage{tikz}
\usepackage{pgfplots}
\pgfplotsset{compat=newest}
\pgfplotsset{plot coordinates/math parser=false}
\newlength\figureheight
\newlength\figurewidth
\usepackage{accents}
\usepackage{float}
\usepackage{multibib}

\usepackage[todonotes={textsize=tiny}, defaultcolor=red]{changes}

\usepackage[pdftitle={Process Intangible},
  pdfauthor={},
  pdftoolbar=true,
  colorlinks=true,
  linkcolor=blue,
  citecolor=blue,
  bookmarksopen=true,
  bookmarksnumbered=true]{hyperref}
\usepackage{cleveref}

\usepackage{pdflscape}
\usepackage{epstopdf}
\usepackage{bm}
\usepackage{enumitem}
\usepackage{threeparttable}
\usepackage{booktabs}
\usepackage{makecell}

\usepackage{tikz}
\usetikzlibrary{matrix,positioning}
\tikzset{bullet/.style={circle,fill,inner sep=2pt}}

\newtheorem{theorem}{Theorem}[section]

\newtheorem{corollary}[theorem]{Corollary}

\newtheorem{definition}[theorem]{Definition}

\newtheorem{lemma}[theorem]{Lemma}

\newtheorem{proposition}[theorem]{Proposition}

\numberwithin{equation}{section} 

\makeatother

\begin{document}

\begin{titlepage}

\title{{\bf Flexible Information Acquisition \\in the Kyle Model}\footnote{First draft March 2026. Emails: S. Viswanathan \href{mailto:viswanat@duke.edu}{(viswanat@duke.edu)} 
 and Hao Xing \href{mailto:haoxing@bu.edu}{(haoxing@bu.edu)}.
}}

\author{
\large{S. Viswanathan}
\\
       \small{\sc Duke University \& NBER}\\
   \and
       \large{Hao Xing}
      \\
       \small{\sc Boston University}
       }

\date{March 2026}

\thispagestyle{empty}
\maketitle

{\vspace{-1 cm}}

\begin{abstract}



We study an information acquisition problem in which an informed trader acquires costly information prior to trading in the Kyle equilibrium. The cost of information acquisition is represented by an entropy cost. Regardless of the prior distribution of the asset payoff, continuous signals are optimal. Moreover, any continuously distributed signal, together with an associated logit-type posterior distribution of the payoff, yields the same ex-ante value for the informed trader, the same distribution of posterior expected payoff, and the same unconditional distribution of the informed trader’s trading strategy. Consequently, a normally distributed signal can be adopted without loss of generality. We further show that when the information acquisition cost increases or the volatility of noise trades decreases, the variance of the posterior expected payoff declines, the profit potential from trading diminishes, meanwhile the posterior expected payoff increasingly resembles a normal distribution, and the information leakage cost from trading decreases.

\end{abstract}


\flushleft {\bfseries Keywords:}  Kyle model, information acquisition, mutual information, entropy regularized optimal transport\\[3pt]

\end{titlepage}

\newpage
\setcounter{page}{1}
\section{Introduction}

Information acquisition in the pioneering \cite{Kyle1985} model is typically restricted to a decision problem in which the informed trader acquires costly normally distributed signals about the asset payoff. Specifically, the informed trader begins with a normal prior on the asset payoff and pays a cost to obtain a normal posterior, after which the Kyle trading game is played; see, for example, \cite{kyle1984market}, \cite{admati1988theory}, \cite{BackPedersen1998},  \cite{vives2014possibility}, and \cite{khorramithereze2025magic}. 

Rather than restricting attention to normal signals about a normally distributed payoff, we study information acquisition in the Kyle model when the informed trader has an arbitrary prior distribution on the asset payoff --- continuous or discrete, and possibly non-normal --- and can acquire signals with arbitrary marginal and joint distributions with the payoff. We follow \cite{Sims1998, Sims2003} to model the information acquisition cost using a mutual information cost. The informed trader designs the optimal signal structure by balancing the expected profit from trading using the observed signal and the ex-ante information acquisition cost. For this flexible information acquisition problem in the Kyle model, we provide sharp results on the optimal signal structure and the welfare of the informed trader. 

Our analysis is motivated by a recent development on characterizing Kyle's equilibrium from an optimal transport point of view. In an important paper, \cite{Back1992} extends the continuous-time \cite{Kyle1985} model by considering non-normal informed trader information. Recently, using an optimal transport approach, \cite{Back_et_al_2021} provided a representation for the ex-ante expected profit of the informed trader in \cite{Back1992} extension of the Kyle model. We use this representation as a starting point for our analysis. This representation suggests a tradeoff between the profit potential from trading and the information leakage cost associated with trading. 

On the one hand, the informed trader desires a precise signal so that the posterior distribution of the asset payoff has a high variance, which indicates a greater informational advantage and a higher profit potential in the subsequent Kyle trading game. However, the informed trader must also consider the inferences made by the market makers in this trading game. More aggressive trading leads to more precise inferences by market makers, increasing the cost of information leakage for the informed trader. Consider the case where the asset payoff follows a normal distribution. Extreme payoffs in the tails have small probabilities, but they generate large trading profits for the informed trader. Overweighting tail scenarios in the signal structure relative to the normal distribution and trading aggressively in those scenarios increases the informed trader's expected profit. However, such aggressive trading may be detected by the market makers, who adjust the price accordingly and thereby reduce the informed trader's profit. In the representation of \cite{Back_et_al_2021}, the profit potential from trading is measured by the variance of the asset payoff, and the information leakage cost is represented by the Wasserstein distance between the payoff distribution and the normal distribution of noise trades. In the absence of information acquisition cost, the former effect always dominates --- the informed trader maximizes the variance of the asset payoff using her precise information about the payoff.  

We consider a model in which the informed trader acquires a costly signal about the asset payoff prior to trading. Three competing forces shape the informed trader's information acquisition decision. First, the informed trader desires a precise signal that generates a high variance of the posterior conditional expectation of the asset payoff after signal observation, thereby increasing the profit potential from trading. Second, market makers' inference imposes a cost on the informed trader's information strategy. This cost is captured by the Wasserstein distance between the distribution of the posterior expected payoff and the distribution of noise trades. Third, the informed trader incurs an entropy cost that increases with the informational content of the signal. These three forces interact in forming the informed trader's information acquisition decision.  

We first consider the case in which the asset payoff has a normal prior distribution. In the rational inattention literature, following \cite{Sims1998, Sims2003}, it is well known that a decision maker with an entropy cost of information and a normal prior acquires information that leads to a normal posterior. However, in our problem, we must also consider the impact from the Wasserstein distance. To this end, the \cite{Gelbrich1990} bound implies that the distribution that minimizes the Wasserstein distance to the normal noise trade distribution is also normal. Therefore, the normal signal is optimal for the flexible information acquisition decision given a normal prior. 

We show that the optimal signal precision decreases with the information acquisition cost parameter, but increases with the standard deviation of the noise trades. When the information acquisition becomes more costly, the informed trader chooses a less precise signal, which reduces the variance of the posterior expected payoff and the profit potential from trading. When the noise trades become more volatile, the informed trader can better hide her informed orders among noise trades, hence the concern about information leakage is reduced. Consequently, the informed trader increases the signal precision to take advantage of the market environment. 

We then consider general payoff distributions. We show that the informed trader, no matter whether her prior distribution on the payoff is discrete or continuous, always chooses a continuous signal, which generates a continuously-distributed posterior expected payoff. At first glance, this result seems surprising for a discretely distributed prior. In the discrete choice literature, the recommendation lemma (see e.g. \cite[Lemma 1]{MatejkaMcKay2015}) implies that it is suboptimal to acquire distinct costly signals which lead to the same action, hence the same utility for the decision maker. However, because the noise trades in the Kyle model have a continuous distribution, price depends continuously on the informed trader's expectation of the asset payoff after observing the signal. Therefore, there is an infinite number of potential actions for the informed trader, each corresponding to a conditional expected payoff given signal realizations. Moreover, the informed trader's value is convex in the partition of the signal space, incentivizing the informed trader to use a continuously distributed signal. As a result, even when the prior distribution of the asset payoff is a two-point distribution on $\{L,H\}$, the optimal posterior expected payoff has a continuous distribution on the closed interval $[L,H]$. 

Regardless of the prior distribution on the asset payoff, we show that any continuous signal, which implies to a higher conditional expected payoff from a higher observed signal, leads to the same ex-ante value for the informed trader. Moreover, the distribution of the posterior expected payoff is invariant of the signal choice. This, perhaps surprising, indeterminacy in the signal choice resembles the independence of trading strategy with respect to the payoff distribution in \cite{Back1992}. For any continuously distributed payoff with a strictly increasing cumulative distribution function (CDF), the informed agent trades toward a target which shares the same distribution as the noise trades.\footnote{\cite{bagnoli2001existence} show that with one informed agent, the distribution of the informed agent's order must be identical to the noise trader distribution to to have a linear equilibrium.} This trading target is constructed by composing the quantile function of the payoff distribution with the CDF of the noise trade. This composition generates the same normal distribution for any continuously distributed payoff. This is also the reason why the informed trader's optimal value and the distribution of her conditional expected payoff do not depend on the signal choice in our setting. This composition function is the unique monotone optimal transport map from the conditional expected payoff distribution to the noise trade distribution; see \cite[Section 4]{galichon2016optimal} for optimal transport between 1-dimensional distributions. 

With non-normal priors (either discrete or continuous), the interaction between the profit potential (variance of the conditional expected payoff), the information leakage cost measured by the Wasserstein distance, and the information acquisition cost determine the shape of the posterior expected payoff. Specifically, consider a prior distribution that exhibits heavier tails than a normal distribution. When the information acquisition cost parameter is small, the informed trader acquires a precise signal that generates a high variance of the conditional expected payoff with excess kurtosis compared to a normal distribution. However, as the information acquisition cost parameter increases, the variance of the conditional expected payoff decreases, and the information leakage cost becomes increasingly important, so that the conditional expected payoff increasingly resembles a normal distribution, with its tails becoming lighter and the kurtosis converging to 3 (the kurtosis for normal distributions). As the information acquisition cost parameter approaches infinity, the information content of the signal vanishes, and the conditional expected payoff converges to the prior expected payoff. 

The impact of the noise trade volatility is exactly the opposite. As noise trades become more volatile, the cost of information leakage decreases, and the informed trader acquires a more precise signal that
 generates a heavy-tailed distribution of the conditional expected payoff. When the payoff has a two-state symmetric prior, the ratio between the information acquisition cost parameter and the standard deviation of the noise trade becomes a sufficient statistic for the conditional expected payoff distribution. A general implication of these results is that higher information acquisition costs  or lower noise trade volatility lead to posterior distributions of the expected payoff that are more similar to normal distributions.

Our paper presents a new flexible approach to information acquisition in the continuous time trading model of \cite{Kyle1985} and the extensions by \cite{Back1992} and \cite{Back_et_al_2021}. In particular, we allow for general priors and signal structures, either of them can be discrete, continuous, and non-normal. We provide a complete characterization of the optimal information acquisition decision. We show that the posterior expected payoff has a continuous distribution, even for discretely distributed priors. Further, there is an indeterminacy in the signal choice, though the distributions of the conditional expected payoff and the informed trader's trading strategy are both determined. Therefore, without loss of generality, the informed trader can pick a normal signal. This rationalizes the normal signal choice in \cite{kyle1984market} and \cite{BackPedersen1998}. We show that the flexible entropy cost mediates between the two tensions --- the desire for high trading profits from a precise signal, which generates a high variance of the conditional expected payoff, and the desire to reduce the information leakage cost induced by market makers' inference. Such a trade-off is robust for different types of prior payoff distributions. 

This paper sits at the intersection of three different literatures. First, our paper contributes to the information acquisition decision in the \cite{Kyle1985} model. Two notable recent contributions are \cite{Banerjee2020, Banerjee2022} and \cite{khorramithereze2025magic}. \cite{Banerjee2020} consider a dynamic information acquisition problem where the informed trader can either acquire a smooth flow of information or a lumpy information. They show that the informed trader increases the signal precision when the noise trading volatility is high. They assume that the payoff has a normal prior, the trader can acquire a normal signal about the payoff, and the information acquisition cost is a convex function of the signal precision. In our static information acquisition setting, general distributions of payoff and signal are considered. When the information acquisition incurs a fixed cost, \cite{Banerjee2020} show the equilibrium may fail to exist. \cite{Banerjee2022} model the costly research activity of the informed trader, who controls the research intensity and the likelihood of research success. Conditional on successful research, the informed trader observes a normal signal about payoff.
\cite{khorramithereze2025magic} consider a convex cost of acquiring normal signals in the generalized Kyle model studied by \cite{lambert2018strategic}. They tie the information efficiency in the large market limit with the  marginal cost of signal precision at the prior. When the information structure is exogenous, \cite{BackBaruch2004} consider a Bernoulli-distributed payoff for the informed trader's signal. \cite{CaldenteyStacchetti2010} examine informed trading with a  random announcement time. \cite{CollinDufresneFos2016} study informed trading with stochastic volatility in noise trading.

In the context of noisy rational expectations model (\cite{GrossmanStiglitz1980} and \cite{Hellwig1980}), signals on non-Gaussian payoffs have been considered by \cite{BreonDrish2015}, \cite{Chabakauri2022}, and \cite{Detemple2020}.  However, this literature does not consider the optimal information acquisition for non-Gaussian payoffs. A notable exception is \cite{Han2021}, who study a dynamic information acquisition problem in the setting of \cite{Hellwig1980} and \cite{BreonDrish2015}. \cite{Han2021} highlights the dynamic interaction of learning and price informativeness. 

Second, our paper is related to the rational inattention literature; see \cite{MackowiakMatejkaWiederholt2023} for an extensive review. \cite{MatejkaMcKay2015} provide a foundation for the multinomial logit model in a discrete choice setting. Different from \cite{MatejkaMcKay2015}, we show that the optimal signal has a continuous distribution even if the prior distribution on the state is discrete, due to the continuously distributed noise trades in the subsequent trading game. Our apporach is similar to the posterior approach by \cite{CaplinDean2013}, \cite{CaplinDeanLeahy2019, CaplinDeanLeahy2022}. However, the informed trader's problem does not lie in the posterior separable class studied in this literature. In contrast to this literature, where the net utility of the decision maker depends only on the posterior distribution of states, our net utility depends on the entire marginal of the signal that is chosen. 

Finally, our paper connects to the entropy regularized optimal transport and the static Schrodinger bridge problem considered in physics and machine learning. Given the marginal distributions of payoff and signal, we solve the Lagrangian multipliers for Bayes plausibility via the Sinkhorn problem; see \cite{Sinkhorn1964} and \cite{RuschendorfThomsen1993} for problems with discrete and continuous marginals. \cite{PeyreCuturi2019} provide an extensive review on the computational aspect of the Sinkhorn problem and its connection with the entropy regularized optimal transport. Optimal transport has been applied to matching problems and other topics in economics (see \cite{galichon2016optimal}).
The entropy regularized optimal transport has also been used to study the the Bayesian persuasion problem recently by \cite{justiniano2025entropy}.

The rest of the paper is organized as follows. Section \ref{sec:model} introduces the model setup and an equivalent formulation of the problem. Section \ref{sec:normal} considers the normally-distributed payoff. We start by assuming a normally-distributed signal and investigate the optimal signal precision. Then we prove that the normally-distributed signal is optimal for a normal prior. Section \ref{sec:discrete} presents the results for discretely-distributed payoffs. Section \ref{sec:cont_signal} presents the optimal signal structure among continuous signals. The optimality of continuous signals is presented in Section \ref{sec:cont_optimal}, followed by several examples in Section \ref{sec:cont_example}. Section \ref{sec:cont_payoff} studies the continuously distributed payoff. Section \ref{sec:conclusion} concludes.

\section{Model setup}\label{sec:model}
\subsection{Information acquisition and equilibrium}
The asset payoff is represented by a random variable $\tilde{v}$, whose value is realized and made public at the time $T$. At time $0$, the insider acquires a signal $\tilde{s}$. The state space for $\tilde{v}$ and $\tilde{s}$ is represented by the sets $\mathcal{V}$ and $\mathcal{S}$, respectively. We will specify these sets in different settings later. The marginal distribution of $\tilde{v}$ is denoted as $\{p(v)\}_{v\in \mathcal{V}}$, the marginal distribution of $\tilde{s}$ is $\{q(s)\}_{s\in \mathcal{S}}$, and the conditional distribution of $\tilde{v}$ given $\tilde{s}$ is $\{p(v|s)\}_{(v,s)\in \mathcal{V}\times \mathcal{S}}$.\footnote{When $\tilde{v}$ and $\tilde{s}$ have discrete distributions, $p(\cdot), q(\cdot)$, and $p(\cdot|\cdot)$ are probability distribution functions. When $\tilde{v}$ and $\tilde{s}$ have (conditional) continuous distributions, $p$ and $q$ represent the densities.}   

Information acquisition is costly, we follow the framework introduced by \cite{Sims1998, Sims2003} to model the information acquisition cost by mutual information. Let $\mathbb{H}(\mu) := - \sum_s \mu(s) \log \mu(s)$ be the entropy for the distribution $\mu$. The cost of acquiring the signal $\tilde{s}$ is given by
\begin{equation}
    \label{def:info_cost}
    \lambda I (\tilde{s}, \tilde{v}) := \lambda \big[\mathbb{H}(p) - \sum_s q(s) \mathbb{H}(p(\cdot| s))\big],
\end{equation}
where $\lambda>0$ represents the marginal cost of information. The difference $\mathbb{H}(p) -\sum_s q(s) \mathbb{H}(p(\cdot|s))$ represents the reduction in entropy from the prior distribution of the payoff to its posterior. A more accurate signal leads to a lower posterior entropy, hence a larger reduction from the prior entropy. 

The interaction among the informed trader, market makers, and noise traders is the same as in \cite{Kyle1985} and \cite{Back1992}. The cumulative demand from the informed trader is $X_t$ and the cumulative demand from all noise traders is $Z_t := \sigma_Z B_t$, where $B_t$ is a standard one-dimensional Brownian motion and independent with $\tilde{v}$. The cumulative aggregate demand observed by all market makers is $Y_t:= X_t + Z_t$. Competition and risk neutrality force market makers set the price as 
\begin{equation}\label{def:price}
 P_t = \mathbb{E} [\tilde{v}\,|\, \mathcal{F}^Y_t],
\end{equation}
where $\{\mathcal{F}^Y_t\}_{t\geq 0}$ is the filtration generated by $Y$ and it represents the information available to market makers. In the equilibrium considered, $Y_t$ is a sufficient statistic for $P_t$, i.e., the equilibrium price is given by $P_t = H(t,Y_t)$ for some function $H$. 

The informed trader observes the market price $P_t$ and computes $Y_t$, then she is able to infer $Z_t$. Hence, we assume that the informed trader observes $Z_t$ directly. The information available to the informed trader is the signal realization at time $0$, drawn from the distribution $\big(p(\cdot | \cdot), q(\cdot) \big)$ that she chooses, and the history of $Z$. The informed trader's realized profit for a given trading strategy $\theta_t$ is given by\footnote{\cite{Back1992} shows that it is suboptimal for the informed trader to submit orders so that $X_t$ has jumps or a nonzero quadratic variation. Therefore, without loss of generality, we represent the trading rate of the informed trader by a trading rate $\theta_t$.} 
\begin{equation*}
(\tilde{v}-P_T) X_T + \int_0^T X_{t-} dP_t = \int_0^T (\tilde{v} - P_t) \theta_t dt. 
\end{equation*}
The informed trader chooses the distribution $(p(\cdot|\cdot), q(\cdot))$ and a trading strategy $\theta_t$, measurable with respect to $\sigma(\tilde{s}) \vee \mathcal{F}^Z_t$, to maximize her (unconditional) expected profit net the information acquisition cost:
\begin{equation}
    \label{def:insider_obj}
    \mathbb{E} \Big[\int_0^T (\tilde{v} - P_t) \theta_t dt \Big] - \lambda I (\tilde{s}, \tilde{v}).
\end{equation}

An equilibrium among market participants is defined as follows.
\begin{definition}\label{def:equilibrium}
A Nash equilibrium is a triplet $(p(\cdot |\cdot), q(\cdot)), \theta_t$, and $P_t$ such that 
\begin{enumerate}
 \item[(i)] Given the informed trader's strategy $(p(\cdot|\cdot), q(\cdot))$ and $\theta_t$, the market makers set the price as in \eqref{def:price};
 \item[(ii)] Given the market makers' price setting strategy $P_t$, the informed trader chooses a distribution $(p(\cdot |\cdot), q(\cdot))$ and a trading strategy $\theta_t$ to maximize the objective in \eqref{def:insider_obj}.
\end{enumerate}
\end{definition}
Compared to the notion of equilibrium in \cite{Kyle1985}, the informed trader's information acquisition strategy $(p(\cdot|\cdot), q(\cdot))$ is incorporated into \eqref{def:equilibrium} and it is the main focus of our paper.

\subsection{An equivalent formulation}
In this section, we will introduce an equivalent formulation of an equilibrium. We call this equivalent formulation \emph{equilibrium with the estimated payoff}. In such an equilibrium, the payoff is replaced by the informed trader's best estimate of the payoff.

To introduce this equivalent formulation, we start from the informed trader's objective function. Because the informed trader's strategy $\theta_t$ is measurable with respect to $\sigma(\tilde{s})$, the law of iterated conditional expectation yields
\begin{equation*}
 \mathbb{E} \Big[\int_0^T \big(\tilde{v} - \mathbb{E}[\tilde{v}|\tilde{s}] \big) \theta_t dt \, \Big|\, \tilde{s}\Big]=0.
\end{equation*}
The previous identity and the risk neutrality of the informed trader imply that, conditioning on the signal observed, the expected payoff from trading is 
\[
\mathbb{E}\Big[\int_0^T \big(\tilde{v} - P_t \big)\theta_t dt \,\Big|\, \tilde{s} = s \Big]=\mathbb{E}\Big[\int_0^T \big(\mathbb{E}[\tilde{v}| \tilde{s}] - P_t \big)\theta_t dt \,\Big|\, \tilde{s} \Big].
\]
After taking the unconditional expectations on both sides of the previous equality, the informed trader's objective in \eqref{def:insider_obj} can be equivalently stated as 
\begin{equation}
    \label{def:insider_obj_trans}
    \mathbb{E} \Big[\int_0^T \big(\mathbb{E}[\tilde{v}|\tilde{s}] - P_t\big) \theta_t dt \Big] - \lambda I (\tilde{s}, \tilde{v}).
\end{equation}
Equation \eqref{def:insider_obj_trans} implies that the expected profit of the informed trader is the same if the asset payoff $\tilde{v}$ is replaced by its conditional expectation $\mathbb{E}[\tilde{v}|\tilde{s}]$, the best estimate (minimizing the mean squared error) for the payoff after the informed trader observes the signal. 

Given the price $p$, consider the informed trader's problem of designing the optimal trading strategy to maximize the expected profit. 
Consider a pricing rule
\begin{equation}\label{pricing_rule}
 P_t = H(t, Y_t),
\end{equation}
where $H(t,y) := \mathbb{E}\big[H(T,Z_T) \,|\, Z_t = y\big]$ for a function $H(T,\cdot)$.
Among all strategies satisfying `no doubling strategies' condition, \cite[Lemma 2]{Back1992} shows that the optimal trading strategy, which maximizes the conditional expected profit $\mathbb{E}\big[\int_0^T \big(\mathbb{E}[\tilde{v}|\tilde{s}] -P_t \big)\theta_t dt \,|\, \tilde{s}\big]$, satisfies 
\begin{equation}\label{P_T_id}
    P_T = H(T,Y_T) = \mathbb{E}[\tilde{v}|\tilde{s}].
\end{equation}
At the end of the trading horizon, informed agent's trading strategy reveals all her private information so that the market price converges to the informed agent's best estimate of the asset payoff. Equation \eqref{P_T_id} implies that $\mathbb{E}[\tilde{v}|\tilde{s}]$ is measurable with respect to $\sigma(Y_T)$. Hence $Y_T$ should be weakly more informative than $\tilde{s}$. Meanwhile, $Y_T$ is measurable with respect to the information generated by the history of noise trades $Z$ and $\tilde{s}$. Given that $\tilde{v}$ and $Z$ are independent, predicting $\tilde{v}$ using $Y_T$ should be less precise than predicting $\tilde{v}$ using $\tilde{s}$. Therefore, $Y_T$ should be weakly less informative than $\tilde{s}$. Combining both directions of the argument above, we obtain the following result.

\begin{lemma}\label{lem:cond_exp}
Suppose that 
$\mathbb{E}[\tilde{v}|\tilde{s}]$ is measurable with respect to $\sigma(Y_T)$. Then 
\begin{equation}
    \label{eq:cond_exp}
    \mathbb{E}[\tilde{v}|\tilde{s}] = \mathbb{E}[\tilde{v}|\mathcal{F}^Y_T].
\end{equation}
\end{lemma}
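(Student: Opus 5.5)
The argument has two inequalities of information, matching the heuristic given just before the statement. Write $\hat v := \mathbb{E}[\tilde v|\tilde s]$ and $\mathcal{G} := \sigma(\tilde s)\vee \mathcal{F}^Z_T$. The key structural fact is that $Y_T = X_T + Z_T$ with $X_T=\int_0^T\theta_t\,dt$ and $\theta_t$ measurable with respect to $\sigma(\tilde s)\vee\mathcal{F}^Z_t$. Hence $\mathcal{F}^Y_T\subseteq \mathcal{G}$. I will show that $\hat v$ is a version of $\mathbb{E}[\tilde v|\mathcal{F}^Y_T]$ by verifying the defining property of conditional expectation directly.

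\emph{Step 1 (conditional independence).} First I will show $\mathbb{E}[\tilde v|\mathcal{G}] = \hat v$. The Brownian motion $B$ is independent of $\tilde v$. To get the full conditional statement I will assume, as is implicit in the model, that $B$ is independent of the pair $(\tilde v,\tilde s)$: the signal is acquired from the payoff alone, before trading. Under this assumption, $\tilde v$ and $\mathcal{F}^Z_T$ are conditionally independent given $\tilde s$. To check this, take bounded $f$, $g$, $h$ and compute
\[
\mathbb{E}[f(\tilde v)g(\tilde s)h(Z)] = \mathbb{E}[f(\tilde v)g(\tilde s)]\,\mathbb{E}[h(Z)] = \mathbb{E}[\hat f(\tilde s)g(\tilde s)h(Z)],
\]
where $\hat f(\tilde s) = \mathbb{E}[f(\tilde v)|\tilde s]$. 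The second equality again uses independence of $Z$ from $\tilde s$. A monotone class argument then extends this from product sets to all of $\mathcal{G}$, which gives $\mathbb{E}[\tilde v|\mathcal{G}]=\hat v$. This step is where the modelling assumption is needed, so I expect it to be the main point to state carefully. Independence of $Z$ from $\tilde v$ alone is not enough if $\tilde s$ could be correlated with $Z$; the paper's setup, where $\tilde s$ is drawn at time $0$ from $p(\cdot|\cdot)$, rules that out.

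\emph{Step 2 (tower property).} Since $\mathcal{F}^Y_T\subseteq\mathcal{G}$, the tower property gives
\[
\mathbb{E}[\tilde v|\mathcal{F}^Y_T] = \mathbb{E}\big[\mathbb{E}[\tilde v|\mathcal{G}]\,\big|\,\mathcal{F}^Y_T\big] = \mathbb{E}[\hat v|\mathcal{F}^Y_T].
\]
This is the precise form of the claim that ``$Y_T$ is weakly less informative than $\tilde s$''.

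\emph{Step 3 (measurability).} By hypothesis $\hat v$ is $\sigma(Y_T)$-measurable, and $\sigma(Y_T)\subseteq\mathcal{F}^Y_T$. Therefore $\mathbb{E}[\hat v|\mathcal{F}^Y_T]=\hat v$, and combining this with Step 2 yields \eqref{eq:cond_exp}.

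Integrability of $\tilde v$ is assumed throughout so that all these conditional expectations exist. The only delicate point is therefore the conditional independence in Step 1. The remaining steps are standard properties of conditional expectation.
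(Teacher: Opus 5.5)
Your proof is correct. It uses the same three ingredients as the paper's proof:
\begin{itemize}
\item the inclusion $\mathcal{F}^Y_T\subseteq\mathcal{G}:=\sigma(\tilde s)\vee\mathcal{F}^Z_T$;
\item the identity $\mathbb{E}[\tilde v|\mathcal{G}]=\hat v:=\mathbb{E}[\tilde v|\tilde s]$;
\item the $\mathcal{F}^Y_T$-measurability of $\hat v$.
\end{itemize}
The two proofs differ in how they combine these facts.

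The paper runs a mean-squared-error sandwich. Measurability of $\hat v$ gives $\mathbb{E}[(\tilde v-\hat v)^2]\ge\mathbb{E}[(\tilde v-\mathbb{E}[\tilde v|\mathcal{F}^Y_T])^2]$. The inclusion $\mathcal{F}^Y_T\subseteq\mathcal{G}$ gives the reverse inequality. Uniqueness of the $L^2$ projection onto $\mathcal{F}^Y_T$-measurable variables then forces the two estimators to coincide.

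You instead pass through $\mathcal{G}$ with the tower property and then pull out the $\mathcal{F}^Y_T$-measurable $\hat v$. This is shorter, needs only $\tilde v\in L^1$ rather than $L^2$, and avoids the uniqueness-of-minimizer step. The paper's version has the merit of literally formalizing its ``weakly more/less informative'' heuristic in terms of prediction error.

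Your Step 1 is also more careful than the paper. The paper justifies $\mathbb{E}[\tilde v|\mathcal{F}^Z_T\vee\sigma(\tilde s)]=\mathbb{E}[\tilde v|\tilde s]$ only by independence of $\tilde v$ and $Z$, which is not sufficient. For example, with $\tilde s=\tilde v+Z_T$, the payoff $\tilde v$ becomes $\mathcal{G}$-measurable, so $\mathbb{E}[\tilde v|\mathcal{G}]=\tilde v\neq\mathbb{E}[\tilde v|\tilde s]$. You correctly identify that independence of $Z$ from the pair $(\tilde v,\tilde s)$ is what is actually needed, and this is implicit in the model's timing, where the signal is drawn at time $0$ before any noise trading.
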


Suppose that the total demand $Y$, in its own filtration, has the same distribution as $Z$, then \eqref{pricing_rule}, \eqref{P_T_id}, and \eqref{eq:cond_exp} combined implies
\begin{equation}\label{price_cond_exp_id}
    P_t = H(t, Y_t) = \mathbb{E}\big[H(T,Y_T) \,|\, \mathcal{F}^Y_t\big] = \mathbb{E} \Big[\mathbb{E}\big[\tilde{v}| \tilde{s}\big]\, \Big| \,\mathcal{F}^Y_t\Big] = \mathbb{E} \Big[\mathbb{E}\big[\tilde{v}| \mathcal{F}^Y_T\big]\, \Big| \,\mathcal{F}^Y_t\Big] = \mathbb{E}\big[\tilde{v}\,|\, \mathcal{F}^Y_t\big],
\end{equation}
where second to the last equality holds thanks to \eqref{eq:cond_exp}, and the last equality follows from the law of iterated conditional expectation. Therefore, the pricing rule \eqref{pricing_rule} also satisfies the equilibrium requirement \eqref{def:price}. Equation \eqref{price_cond_exp_id} indicates that the pricing rule is the same when the payoff $\tilde{v}$ is replaced by its conditional expectation $\mathbb{E}[\tilde{v}|\tilde{s}]$. Therefore, market makers learn from the total order flow what the informed trader has learned from her private signal.

These observations on the informed trader's expected profit and the pricing rule motivate us to introduce the following notion of equilibrium.
\begin{definition}\label{def:equilibrium_2}
 A Nash equilibrium with the estimated payoff is a triplet $(p(\cdot|\cdot), q(\cdot))$, $\theta_t$, and $P_t$ such that 
 \begin{enumerate}
 \item[(i)] Given the informed trader's strategy $(p(\cdot |\cdot), q(\cdot))$ and $\theta_t$, the market makers set the price as
 \begin{equation}\label{def:price_2}
 P_t = \mathbb{E}\big[\mathbb{E}[\tilde{v}|\tilde{s}] \,\big|\, \mathcal{F}^Y_t\big].
 \end{equation}
 \item[(ii)] Given the market makers' price setting strategy $P_t$, the informed trader chooses a distribution $(p(\cdot |\cdot), q(\cdot))$ and a trading strategy $\theta_t$ to maximize the objective in \eqref{def:insider_obj_trans}
 \end{enumerate}
\end{definition}

When the informed trader observes the payoff $\tilde{v}$ at time $0$, \cite{Back1992} establishes a Markovian equilibrium, where the total order $Y$ is a sufficient static for market makers, and the informed agent's trade is inconspicuous, which implies the total order $Y$ in its own filtration has the same distribution as the noise order $Z$. Considering among the Markovian inconspicuous equilibria, we obtain the following result.
\begin{proposition}\label{prop:corres}
 The Markovian inconspicuous equilibria in Definitions \ref{def:equilibrium} and \ref{def:equilibrium_2} are 1-to-1 correspondence.
\end{proposition}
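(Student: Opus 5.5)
The plan is to show that the identity map on triplets $\big((p(\cdot|\cdot),q(\cdot)),\theta_t,P_t\big)$ sends Markovian inconspicuous equilibria of Definition \ref{def:equilibrium} to those of Definition \ref{def:equilibrium_2}, and conversely. Throughout, the restriction to the Markovian inconspicuous class is used in two places. First, $P_t=H(t,Y_t)$ for a pricing rule of the form \eqref{pricing_rule}. Second, $Y$ has the same law as $Z$ in its own filtration, so that $H(t,Y_t)=\mathbb{E}[H(T,Y_T)|\mathcal{F}^Y_t]$. The argument then checks that each of the two equilibrium conditions (i) and (ii) carries over in both directions.

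\emph{Condition (ii), the informed trader's problem.} Fix the pricing rule. For any information structure $(p(\cdot|\cdot),q(\cdot))$ and any $\theta_t$ measurable with respect to $\sigma(\tilde{s})\vee\mathcal{F}^Z_t$, the tower-property computation preceding \eqref{def:insider_obj_trans} shows that the objectives \eqref{def:insider_obj} and \eqref{def:insider_obj_trans} coincide. This uses that $Z$, and hence $P_t$ and $\theta_t$, are conditionally independent of $\tilde{v}$ given $\tilde{s}$, because $B$ is independent of $\tilde{v}$ and the signal is drawn from the joint law with $\tilde{v}$. The information cost $\lambda I(\tilde{s},\tilde{v})$ is identical in both objectives. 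The two objectives therefore agree as functionals on the common strategy set, so they have the same maximizers, and (ii) holds in one definition iff it holds in the other.

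\emph{Condition (i), the pricing rule.} Take an equilibrium of Definition \ref{def:equilibrium_2}. Optimality of $\theta_t$ given the Markovian rule gives \eqref{P_T_id} by \cite[Lemma 2]{Back1992}, so $\mathbb{E}[\tilde{v}|\tilde{s}]$ is $\sigma(Y_T)$-measurable. Lemma \ref{lem:cond_exp} then gives \eqref{eq:cond_exp}, and the chain \eqref{price_cond_exp_id} converts \eqref{def:price_2} into \eqref{def:price}. Conversely, take an equilibrium of Definition \ref{def:equilibrium}. Optimality again yields \eqref{P_T_id}, and hence \eqref{eq:cond_exp}. Then
\[
\mathbb{E}\big[\mathbb{E}[\tilde{v}|\tilde{s}]\,\big|\,\mathcal{F}^Y_t\big]=\mathbb{E}\big[\mathbb{E}[\tilde{v}|\mathcal{F}^Y_T]\,\big|\,\mathcal{F}^Y_t\big]=\mathbb{E}[\tilde{v}|\mathcal{F}^Y_t]=P_t,
\]
which is \eqref{def:price_2}. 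Since the map is the identity and both directions hold, it is a bijection between the two classes.

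The main obstacle is justifying \eqref{P_T_id} in each direction. Back's Lemma 2 is stated for an insider who knows the terminal value. Here it must be applied conditionally on $\tilde{s}=s$, with the terminal value $\mathbb{E}[\tilde{v}|\tilde{s}=s]$ and the pricing rule $H(T,\cdot)$ fixed. This requires that the optimal strategy, under the no-doubling condition, drives $H(T,Y_T)$ to that target, which in turn needs $H(T,\cdot)$ monotone with range containing the support of $\mathbb{E}[\tilde{v}|\tilde{s}]$. I would handle this by restricting to Markovian inconspicuous equilibria whose terminal rule is increasing, as in \cite{Back1992}, and by verifying that the insider's value function is the same in both settings by the objective equivalence above. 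The remaining steps are routine applications of Lemma \ref{lem:cond_exp} and iterated expectations.
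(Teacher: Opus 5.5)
Your proposal is correct and follows the paper's proof essentially step for step. The objectives \eqref{def:insider_obj} and \eqref{def:insider_obj_trans} coincide by the tower property, Back's Lemma 2 yields \eqref{P_T_id} in each direction, and Lemma~\ref{lem:cond_exp} with iterated expectations converts \eqref{def:price} into \eqref{def:price_2} and back; your remarks on the conditional independence of $Z$ and $\tilde{v}$ given $\tilde{s}$, and on needing an increasing terminal rule $H(T,\cdot)$ to apply Back's lemma with target $\mathbb{E}[\tilde{v}|\tilde{s}]$, make explicit hypotheses the paper leaves implicit.
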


This result allows us to focus on searching an equilibrium with the estimated payoff introduced in Definition \ref{def:equilibrium_2}. For a given signal structure $(p(\cdot|\cdot), q(\cdot))$, the distribution of $\mathbb{E}[\tilde{v}|\tilde{s}]$ is fixed. An equilibrium between the optimal trading strategy, which maximizes the expected profit 
\[
\mathbb{E}\Big[\int_0^T \big(\mathbb{E}[\tilde{v}|\tilde{s}] -P_t \big)\theta_t dt\Big],
\]
and a pricing rule, which satisfies \eqref{def:price_2}, is a standard equilibrium in Kyle's model, where the asset payoff is replaced by its conditional expectation. In the next section, we recall a characterization of such an equilibrium by \cite{Back_et_al_2021}.

\subsection{Characterization of the Kyle's equilibrium via optimal transport}
\cite{Back_et_al_2021} provides a characterization of the Kyle's equilibrium from an optimal transport point of view. In particular, they identify a representation for the informed trader's ex ante expected profit, which is instrumental in our analysis of the optimal signal structure. 

Given distribution of $\mathbb{E}[\tilde{v}|\tilde{s}]$, let us first recall the main results of \cite{Back_et_al_2021} for a risk neutral informed trader. To this end, we denote $F_s$ as the distribution function of $\mathbb{E}[\tilde{v}|\tilde{s}]$ and $G$ as the distribution function of $Z_T$, which has a normal distribution $N(0, \sigma_Z^2 T)$. The following result restates \cite[Theorems 3.1 and 3.2]{Back_et_al_2021} in a one-dimensional setting with the asset payoff therein replaced by $\mathbb{E}[\tilde{v}|\tilde{s}]$. We also assume that $F_s$ is absolutely continuous and is strictly increasing. We will show in later sections that this without loss of generality when the signal $\tilde{s}$ is chosen optimally.

\begin{proposition}[\cite{Back_et_al_2021}]\label{prop:back_et_al}
Let $H(T, y) = F_s^{-1}\circ G(y)$ and consider the pricing rule in \eqref{pricing_rule}. The trading strategy $dX_t = \theta_t dt$ where 
\begin{equation}\label{opt:trading}
 \theta_t = \frac{\tilde{\zeta} - Y_t}{T-t}, \quad \tilde{\zeta} = G\circ F_s^{-1} \big( \mathbb{E}[\tilde{v}|\tilde{s}]\big),
\end{equation}
maximizes the informed trader's expected profit $\mathbb{E}\big[\int_0^T \big(\mathbb{E}[\tilde{v}|\tilde{s}] -P_t \big)\theta_t dt \,|\, \tilde{s}\big]$ in the class of trading strategies satisfying $\mathbb{E}\big[\int_0^T H(t, X_t + Z_t) dZ_t \big]=0$. Given the trading strategy in \eqref{opt:trading}, $Y$, in its own filtration, has the same distribution as $Z$, and the pricing rule \eqref{pricing_rule} satisfies the equilibrium condition \eqref{def:price_2}. Moreover, the informed trader's optimal (unconditional) expected profit is 
\begin{equation}\label{exp_pro_wass}
 \max_{\theta} \mathbb{E}\Big[\int_0^T \big(\mathbb{E}[\tilde{v}|\tilde{s}] -P_t \big)\theta_t dt\Big] = \frac12 \Big\{\mathbb{E}\big[\mathbb{E}[\tilde{v}|\tilde{s}]^2\big] + \mathbb{E}[Z_T^2] - W_2^2(F_s, G) \Big\},
\end{equation}
where $W^2_2(F_s, G)$ is the squared Wasserstein-2 distance between $F_s$ and $G$. 
\end{proposition}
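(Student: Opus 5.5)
Since the statement restates \cite[Theorems 3.1 and 3.2]{Back_et_al_2021} with $\tilde v$ replaced by $\tilde u:=\mathbb{E}[\tilde v|\tilde s]$, the plan is to re-derive it directly for $\tilde u$. The approach is a verification argument built on the harmonic structure of the pricing rule, followed by a bridge argument and an optimal-transport identity for the value.

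\emph{Step 1 (value function).} Set $H(T,y)=F_s^{-1}\circ G(y)$ and $H(t,y)=\mathbb{E}[H(T,Z_T)\,|\,Z_t=y]$, so that $H_t+\frac{\sigma_Z^2}{2}H_{yy}=0$. Define
\[
J(t,y,u)=\int_y^{\zeta(u)}\big(u-H(t,x)\big)\,dx+\frac{\sigma_Z^2}{2}\int_t^T H_y(r,\zeta(u))\,dr,\qquad \zeta(u)=G\circ F_s^{-1}...
\]
Concretely, $\zeta(u)$ is the point with $H(T,\zeta(u))=u$, namely $\zeta(u)=G^{-1}(F_s(u))$. This is well defined because $F_s$ is absolutely continuous and strictly increasing. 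Apply It\^o's formula to $J(t,Y_t,\tilde u)$ for an arbitrary absolutely continuous $X$. Using the heat equation, $J_y=-(u-H)$ and $J_t+\frac{\sigma_Z^2}{2}J_{yy}=0$, one gets
\[
\int_0^T(\tilde u-P_t)\theta_t\,dt=J(0,0,\tilde u)-J(T,Y_T,\tilde u)+\text{(a $dZ$ integral)}.
\]
The admissibility condition $\mathbb{E}\int_0^T H\,dZ=0$ kills the expected stochastic integral. Moreover, $J(T,y,u)=\int_y^{\zeta(u)}(u-H(T,x))\,dx\ge 0$, with equality iff $H(T,y)=u$, because $H(T,\cdot)$ is increasing. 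Hence any admissible strategy yields at most $\mathbb{E}J(0,0,\tilde u)$, and a strategy is optimal iff $P_T=\tilde u$.

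\emph{Step 2 (the bridge).} Under \eqref{opt:trading}, $Y$ is a Brownian bridge from $0$ to $\tilde\zeta$. Because $\tilde\zeta=\zeta(\tilde u)$ has law $G=N(0,\sigma_Z^2T)$ and is independent of $Z$, standard filtering (enlargement of filtration) shows that $Y$ in its own filtration is a Brownian motion with volatility $\sigma_Z$. Thus $Y\overset{d}{=}Z$. In addition $Y_T=\tilde\zeta$, so $P_T=H(T,\tilde\zeta)=\tilde u$, which gives optimality by Step 1. Then $\tilde u$ is $\sigma(Y_T)$-measurable, and the chain \eqref{price_cond_exp_id} gives \eqref{def:price_2}. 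Admissibility of the bridge (integrability of $\int H\,dZ$) must be checked from the square integrability of $\tilde u$.

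\emph{Step 3 (value formula), the main obstacle.} One must evaluate $\mathbb{E}J(0,0,\tilde u)$ and identify it with \eqref{exp_pro_wass}. I would avoid computing $J$ directly. Instead, since $\tilde u=H(T,\tilde\zeta)$, the expected profit equals
\[
\mathbb{E}\int_0^T(\tilde u-P_t)\,dY_t-\mathbb{E}\int_0^T(\tilde u-P_t)\,dZ_t .
\]
The first term is computed by integration by parts: $\mathbb{E}[\tilde u Y_T]-\mathbb{E}\int_0^T P_t\,dY_t=\mathbb{E}[\tilde u\tilde\zeta]-0$, because $P$ is a martingale in $Y$'s filtration and $Y$ is a martingale there. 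The second term equals $\mathbb{E}[\tilde u Z_T]=0$ by independence, together with the admissibility condition. The result is $\mathbb{E}[\tilde u\,\tilde\zeta]$, where $(\tilde u,\tilde\zeta)$ is the comonotone (monotone) coupling of $F_s$ and $G$. By the one-dimensional optimal transport result \cite[Section 4]{galichon2016optimal}, this coupling attains $W_2^2(F_s,G)=\mathbb{E}[\tilde u^2]+\mathbb{E}[Z_T^2]-2\mathbb{E}[\tilde u\tilde\zeta]$. Rearranging yields \eqref{exp_pro_wass}.

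The delicate points are the integrability needed to discard local-martingale terms, and the step $\mathbb{E}\int P\,dY=0$. For the latter I would use that $P_t=H(t,Y_t)$ with $H(\cdot,Y)$ square integrable under the Brownian law of $Y$.
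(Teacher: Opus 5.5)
The paper gives no proof of this proposition. It imports it from \cite[Theorems 3.1 and 3.2]{Back_et_al_2021} and credits the construction to \cite{Back1992}. Your argument is correct and reconstructs the result.

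Steps 1--2 follow Back's own argument: the verification function $J$, the backward heat equation for $H$, and $J(T,\cdot,u)\ge 0$ with equality only at $\zeta(u)$, followed by the Brownian-bridge/inconspicuousness argument. You are right to read the trading target as $\tilde\zeta=G^{-1}(F_s(\tilde u))$; the composition $G\circ F_s^{-1}$ in the statement is a typo. Clean up the stray version in your definition of $J$.

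Your genuine departure is Step 3. The duality route behind the cited result evaluates the value function itself. Take the convex (Brenier) potential $\Gamma$ with $\Gamma'=H(T,\cdot)$, and set $\Gamma(t,y):=\mathbb{E}[\Gamma(Z_T)\,|\,Z_t=y]$. The heat equation then gives $J(0,0,u)=u\zeta(u)-\Gamma(\zeta(u))+\mathbb{E}[\Gamma(Z_T)]=\Gamma^*(u)+\mathbb{E}[\Gamma(Z_T)]$. The Fenchel equality $\Gamma(\tilde\zeta)+\Gamma^*(\tilde u)=\tilde u\tilde\zeta$ on the monotone coupling (Kantorovich duality) turns $\mathbb{E}[J(0,0,\tilde u)]$ into $\mathbb{E}[\tilde u\tilde\zeta]$.

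You bypass $J$ by showing that $\mathbb{E}\int_0^T P_t\theta_t\,dt=0$ along the equilibrium path. The profit then collapses to $\mathbb{E}[\tilde u X_T]=\mathbb{E}[\tilde u(\tilde\zeta-Z_T)]=\mathbb{E}[\tilde u\tilde\zeta]$. This is shorter and needs no convex analysis, and the integrability it requires is mild. Since $Y_t\sim N(0,\sigma_Z^2t)$ and $H(T,Z_T)$ has the law of $\tilde u$, Jensen gives $\mathbb{E}\int_0^T H(t,Y_t)^2\,dt\le T\,\mathbb{E}[\tilde u^2]$. This bound does three things at once:
\begin{itemize}
\item it makes $\int P\,dY$ a true martingale in $\mathcal{F}^Y$;
\item it makes $\int P\,dZ$ a true martingale in $\sigma(\tilde s)\vee\mathcal{F}^Z$;
\item it verifies admissibility of the bridge, which you left as ``must be checked.''
\end{itemize}
Record also that $\int P\,dY$ is the same random variable in both filtrations, because $P$ is continuous and $\mathcal{F}^Y$-adapted.

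What your route gives up is information. The duality route yields the insider's conditional profit $\Gamma^*(u)+\mathbb{E}[\Gamma(Z_T)]$ for each signal realization, and its value computation uses only the law of $\tilde\zeta$. Your identity instead relies on $Y$ being a Brownian motion in its own filtration, and it delivers only the ex-ante number along the optimal strategy. That is all \eqref{exp_pro_wass} asserts, since your Steps 1--2 already identify the bridge as the maximizer.
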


The construction of the equilibrium is obtained by \cite{Back1992}. \cite{Back_et_al_2021} provides an equilibrium characterization in a more general setting using the theory of optimal transport.\footnote{\cite{Back_et_al_2021} allow for multi-variate and discrete distributions for $\mathbb{E}[\tilde{v}|\tilde{s}]$. When $\mathbb{E}[\tilde{v}|\tilde{s}]$ has a discrete distribution, the optimal trading strategy still takes the form in \eqref{opt:trading}, but $\tilde{\zeta}$ is constructed using additional randomization.} The pricing rule $F_s^{-1}\circ G$ is the gradient of a convex function $\Gamma$, Which is call a Brenier potential. The pricing rule 
$\Gamma' = F_s^{-1} \circ G$ is the unique monotone map that transport $\tilde{\zeta}$ to $\mathbb{E}[\tilde{v}|\tilde{s}]$.\footnote{See \cite[Section 4]{galichon2016optimal} for more discussions on the 1-dimensional optimal transport.} The random variable $\tilde{\zeta}$ has the same distribution as $Z_T$. When the informed trader follows the strategy \eqref{opt:trading}, the total order $Y$ is a Brownian bridge satisfying $Y_T = \tilde{\zeta}$. Moreover, $Y$, in its own filtration, shares the same distribution as $Z$. Therefore, the equilibrium identified in Proposition \ref{prop:back_et_al} is a Markovian inconspicuous equilibrium.



Using the optimal transport theory, \cite{Back_et_al_2021} identifies a intriguing representation for the informed trader's optimal (unconditional) expected profit in \eqref{exp_pro_wass}. Let us first transform this representation using the centered expected payoff $\mathbb{E}[\tilde{v}|\tilde{s}] - \mathbb{E}[\tilde{v}]$. To this end, note that 
\[
\mathbb{E}\big[\mathbb{E}[\tilde{v}|\tilde{s}]^2\big] = \text{Var}\big(\mathbb{E}[\tilde{v}| \tilde{s}] \big) + \mathbb{E}\big[\mathbb{E}[\tilde{v}| \tilde{s}]\big]^2 = \text{Var}\big(\mathbb{E}[\tilde{v}| \tilde{s}] \big) + \mathbb{E}[\tilde{v}]^2.
\]
Meanwhile, the Mallows distance decomposition property for Wasserstein-2 distance implies that 
\[
W^2_2(F_s, G) = \mathbb{E}[\tilde{v}]^2 + W^2_2 (F_s^c, G)^2, 
\]
where $F_s^c$ is the distribution for the centered expected payoff $\mathbb{E}[\tilde{v} \, |\, \tilde{s}] - \mathbb{E}[\tilde{v}]$.
Plugging the previous two equations into the right-hand side of \eqref{exp_pro_wass}, the unconditional expected profit of the insider is transformed to
\begin{equation}\label{insider_exp_pro_cen}
 \frac12 \Big\{\text{Var}\big(\mathbb{E}[\tilde{v} | \tilde{s}]\big) + \mathbb{E}\big[Z_T^2\big] - W_2^2 (F_s^c, G) \Big\}.
\end{equation}

The decomposition \eqref{insider_exp_pro_cen} highlights the impact of the conditional expected payoff $\mathbb{E}[\tilde{v}|\tilde{s}]$ and the noise trade distribution $G$ on the informed trader's expected profit. The variance of the conditional expected payoff increases the expected profit of the informed trader. The larger the variance, the more informational advantage the informed trader has over the market makers. After observing her private signal, the informed trader updates her payoff estimate to $\mathbb{E}[\tilde{v}|\tilde{s}]$. Meanwhile, market markers only know the prior expectation $\mathbb{E}[\tilde{v}]$. A higher variance of $\mathbb{E}[\tilde{v}|\tilde{s}]$ indicates a higher information asymmetry, hence a better trading opportunity for the informed trader. We call this the \emph{profit potential} of information. However, a higher variance of $\mathbb{E}[\tilde{v}|\tilde{s}]$ also introduces a higher cost to the informed trader. If the informed trader submits order aggressively, it is harder for her to hide her informed orders among noise trades, particularly if the distribution of trades in the tails differs from the normal distribution; the market makers learn informed trader's private information more efficiently, reducing the informational advantage of the informed trader. We call this the information \emph{leakage cost}. This effect introduces the third term in \eqref{insider_exp_pro_cen}. When the variance of $\mathbb{E}[\tilde{v}|\tilde{s}]$ increases, the centered distribution $F_s^c$ is further away from the noise orders' normal distribution $G$. This leads to a higher Wasserstein-2 distance, hence reduces the informed trader's expected profit. 

When the information acquisition cost is present, we will show in the next section that the information acquisition cost interacts with both the profit potential of information and the information leakage cost. The optimal design of the signal will balance the profit potential, the leakage cost, and the information acquisition cost. 

To conclude this section, we plug \eqref{exp_pro_wass} into \eqref{def:insider_obj_trans} to obtain the following information structure design problem for the informed trader:
\begin{align}
&\max_{p(\cdot |\cdot), q(\cdot)} \frac12 \Big\{\mathbb{E}\big[\mathbb{E}[\tilde{v}|\tilde{s}]^2\big] + \mathbb{E}[Z_T^2] - W_2^2(F_s, G) \Big\} -\lambda I(\tilde{s}, \tilde{v}), \label{signal_str_pro}\\
& \text{subject to} \nonumber\\
& \int q(s) ds = 1, \label{con:q_int} \\
& \int p(v|s) dv =1, \quad \text{for any } s,\label{con:p_int}\\
& \int p(v|s) q(s) ds = p(v), \quad \text{for any } v. \label{con:Bayes}
\end{align}
As we discussed before, the first three terms in \eqref{signal_str_pro} represents the informed trader's optimal (unconditional) expected profit from trading when the informed trader samples from a given signal distribution $(p(\cdot|\cdot), q(\cdot))$. The last term represents the information acquisition cost. The first two constraints ensures $q(\cdot)$ and $p(\cdot|\cdot)$ are probability densities. The last constraint is the Bayes formula between the posterior, marginal, and the given prior $p(\cdot)$. This is the Bayes plausibility condition.

In the next three sections, we will examine the trade-offs among the profit potential, the information leakage, and the information acquisitions cost. We will present the optimal signal structure for different payoff distributions. 

\section{Normal-distributed payoff}\label{sec:normal}

In this section, we consider the case where the payoff $\tilde{v}$ follows a normal distribution. We first illustrate the optimal signal precision when the signal has a normal distribution. Then we show that the normal signal distribution is optimal.

Consider a normal-distributed payoff $\tilde{v}\sim N(\mu_v, \sigma^2_v)$. Assume that the informed trader observes a signal 
\begin{equation}\label{normal:signal}
\tilde{s} = \tilde{v} + \tilde{\epsilon},
\end{equation}
where $\tilde{\epsilon}$ is independent of $\tilde{v}$ and follows a normal distribution $N(0,\sigma^2_{\epsilon})$. Let us first derive the optimal signal precision $1/\sigma^2_\epsilon$, then prove that the signal structure in \eqref{normal:signal} is optimal. 

Given the signal structure in \eqref{normal:signal}, the posterior distribution of $\tilde{v}$, conditioning on $\tilde{s}$, is normal with the mean and variance
\begin{align*}
 \mathbb{E}[\tilde{v}| \tilde{s}] =  \mathbb{E}[\tilde{v}] + \xi \big(\tilde{s} - \mathbb{E}[\tilde{v}]\big) \quad \text{and} \quad
 \sigma^2_{v|s} =  \sigma^2_{v} (1-\xi),
\end{align*}
respectively, where $\xi := \frac{\sigma^2_v}{\sigma^2_{s}} \in[0,1]$ and $\sigma^2_s := \sigma^2_v + \sigma^2_{\epsilon}$. Therefore, the conditional expectation of payoff $\mathbb{E}[\tilde{v} | \tilde{s}]$ follows a normal distribution with mean $\mathbb{E}[\tilde{v}]$ and variance $\xi^2 \sigma^2_s = \xi \sigma^2_v$. 

The informed trader's objective function in \eqref{signal_str_pro} can be simplified when both the payoff and the signal are normal distributed. To this end, given two normal distributions $F_s \sim N(\mathbb{E}[\tilde{v}], \xi \sigma_v^2)$ and $G \sim N(0, \sigma_Z^2 T)$, their Wasserstein-2 distance is 
\[
W_2^2(F_s, G) = \mathbb{E}[\tilde{v}]^2 + \big(\sigma_Z \sqrt{T} - \sqrt{\xi} \sigma_v\big)^2.
\]
Therefore, the informed trader's optimal (unconditional) expected profit in \eqref{exp_pro_wass} is 
\begin{align*}
\frac12 \Big\{\mathbb{E}\big[\mathbb{E}[\tilde{v}|\tilde{s}]^2\big] + \mathbb{E}[Z_T^2] - W_2^2(F_s, G) \Big\} = & \frac12 \Big\{\xi \sigma_v^2 + \mathbb{E}[\tilde{v}]^2 + \sigma_Z^2 T - \mathbb{E}[\tilde{v}]^2 -  \big(\sigma_Z \sqrt{T} - \sqrt{\xi} \sigma_v\big)^2\Big\} \\
= & \sigma_v\sigma_Z \sqrt{T} \sqrt{\xi}.
\end{align*}
In the previous equation, the net between the profit potential $\mathbb{E}\big[\mathbb{E}[\tilde{v}|\tilde{s}]^2\big]$ and the information leakage cost $W_2^2(F_s, G)$ increases with $\sqrt{\xi}$. Hence, a more precise signal increases the expected profit from trading.

For the information acquisition cost, recall that the entropy for a normal distribution $N(\mu, \sigma^2)$ is $\frac12 \log (2\pi  \sigma^2) + \frac12$. Therefore, the information acquisition cost is  
\begin{equation*}
 \lambda I(\tilde{s}; \tilde{v}) = \frac{\lambda}{2} \log \big(\frac{\sigma_v^2}{\sigma^2_{v|s}}\big) = \frac{\lambda}{2} \log \big( \frac{1}{1-\xi}\big).
\end{equation*}
Plugging the previous two equations into \eqref{signal_str_pro}, we obtain the following optimization problem for the informed trader:
\begin{equation}\label{signal_str_pro_normal}
\max_{\xi \in [0,1]} \Big\{ \sigma_v \sigma_Z \sqrt{T} \sqrt{\xi} - \frac{\lambda}{2} \log \big(\frac{1}{1-\xi}\big)\Big\}.
\end{equation}
The first term in the objective function of \eqref{signal_str_pro_normal} represents the expected profit. More precise signal yields a higher $\xi$, which improves the expected profit from trading. The second term in \eqref{signal_str_pro_normal} represents the information acquisition cost, whereas a higher $\xi$ means a more precise signal hence a higher cost. Therefore, the optimal choice of $\xi$ balances the marginal benefit and the marginal cost of information. The optimal $\xi$ and the optimal signal precision are presented in the following result.
\begin{proposition}\label{prop:normal}
 Suppose that the payoff $\tilde{v}$ follows a normal distribution $N(\mu_v, \sigma^2_v)$ and the informed trader observes a signal of the form \eqref{normal:signal}. Then the optimal conditional precision of the signal is 
 \begin{equation}\label{normal_opt_precision}
 \frac{1}{\sigma_{\epsilon}^2} = \frac{1}{\sigma_v^2} \frac{\xi^*}{1-\xi^*},
 \end{equation}
 where $\xi^*$ is the unique maximizer of the problem \eqref{signal_str_pro_normal}. This optimal precision and $\xi^*$ both decrease with $\frac{\lambda}{\sigma_v \sigma_Z \sqrt{T}}$.
\end{proposition}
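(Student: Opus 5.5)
The plan is to reduce \eqref{signal_str_pro_normal} to a one-parameter problem and then apply standard monotone comparative statics. Dividing the objective by $\sigma_v\sigma_Z\sqrt{T}>0$ does not change the maximizer. Set $k := \frac{\lambda}{\sigma_v\sigma_Z\sqrt{T}}$. The problem becomes maximizing
\[
\phi(\xi;k) := \sqrt{\xi} + \frac{k}{2}\log(1-\xi)
\]
over $\xi\in[0,1]$, so $\xi^*$ depends on the primitives only through $k$.

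\textbf{Existence and uniqueness.} First I show that the maximizer is unique and interior, using strict concavity. Note that $\sqrt{\xi}$ is strictly concave and $\log(1-\xi)$ is strictly concave on $[0,1)$. As $\xi\to 1$ we have $\phi\to-\infty$, so the maximum is attained on $[0,1)$. The derivative is
\[
\phi_\xi = \frac{1}{2\sqrt{\xi}} - \frac{k}{2(1-\xi)},
\]
which tends to $+\infty$ as $\xi\downarrow 0$ and to $-\infty$ as $\xi\uparrow1$. Moreover $\phi_\xi$ is strictly decreasing. Hence there is a unique root $\xi^*\in(0,1)$, characterized by
\[
1-\xi^* = k\sqrt{\xi^*}.
\]

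\textbf{Monotonicity of $\xi^*$.} The map $g(\xi):=\frac{1-\xi}{\sqrt{\xi}}$ is strictly decreasing from $+\infty$ to $0$ on $(0,1)$, and the first-order condition reads $g(\xi^*)=k$. So $\xi^* = g^{-1}(k)$ is strictly decreasing in $k$. Equivalently, the implicit function theorem gives $d\xi^*/dk = -\sqrt{\xi^*}/\phi_{\xi\xi}\cdot(\ldots)$ with the sign fixed by $\phi_{\xi\xi}<0$ and $\phi_{\xi k}<0$. One can also solve explicitly: with $x=\sqrt{\xi^*}$, the condition is $x^2 + kx - 1=0$, so
\[
\sqrt{\xi^*} = \frac{-k+\sqrt{k^2+4}}{2},
\]
which is visibly decreasing in $k$.

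\textbf{Formula and monotonicity of the precision.} For \eqref{normal_opt_precision}, recall $\xi = \sigma_v^2/(\sigma_v^2+\sigma_\epsilon^2)$. Inverting gives $\sigma_\epsilon^2 = \sigma_v^2(1-\xi)/\xi$, hence $1/\sigma_\epsilon^2 = \frac{1}{\sigma_v^2}\frac{\xi^*}{1-\xi^*}$. Since $\xi\mapsto \xi/(1-\xi)$ is increasing, the precision decreases in $k$ whenever $\xi^*$ does. (Using the first-order condition, the precision also equals $\sqrt{\xi^*}/(k\sigma_v^2)$.)

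\textbf{Main obstacle.} There is no real obstacle here. The only point needing care is the interpretation of ``decreases with $k$'' for the precision, since $\sigma_v^2$ also appears outside $k$. I would state the claim as holding for fixed $\sigma_v$, with $\lambda$ or $\sigma_Z\sqrt{T}$ varying through $k$. Under that reading the composition argument above suffices, because $\xi^*$ itself depends only on $k$.
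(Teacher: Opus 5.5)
Your argument is correct and is essentially the paper's: the first-order condition $\xi^{-1/2}=k/(1-\xi)$ together with strict concavity gives the unique interior $\xi^*$, monotonicity in $k$ is read off the FOC, and the precision formula follows by inverting $\xi=\sigma_v^2/(\sigma_v^2+\sigma_\epsilon^2)$. Your closed form $\sqrt{\xi^*}=(\sqrt{k^2+4}-k)/2$ is a nice extra, while the implicit-function line is garbled as written (it should read $d\xi^*/dk=-\phi_{\xi k}/\phi_{\xi\xi}$) but is not needed.

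Your caveat is well taken, and the paper's proof tacitly adopts the same reading. If $k$ moves through $\sigma_v$, the precision equals $2k/\big(c^2(k+\sqrt{k^2+4})\big)$ with $c=\lambda/(\sigma_Z\sqrt{T})$, which is increasing in $k$. So the precision claim holds only for fixed $\sigma_v$.
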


The previous result shows that there is a negative relation between the optimal signal precision and $\frac{\lambda}{\sigma_v \sigma_Z \sqrt{T}}$. When $\lambda$ increases, information acquisition is more costly, the informed trader acquires less precise signal as a result. When $\sigma_v$ increases, $\xi^*$ increases as well, leading to a larger variance of the conditional expected payoff $\mathbb{E}[\tilde{v}|\tilde{s}]$. Even though a higher variance of $\mathbb{E}[\tilde{v}|\tilde{s}]$ increases the information leakage cost, the improvement in profit potential $\mathbb{E}\big[\mathbb{E}[\tilde{v}|\tilde{s}]^2\big]$ is more, leading to a higher expected profit for the informed trader. When $\sigma_Z \sqrt{T}$ increases, noise trade is more volatile, the informed trader can better hide her orders, hence the concern on information leakage is reduced. Therefore, the informed trader increases the signal precision to take advantage of this market environment. 

To close this section, we show in the next result that for normal distributed payoff, the normal signal distribution $(p(\cdot|\cdot), q(\cdot))$ is optimal among the class of distributions satisfying some non-degeneracy condition. Moreover, this optimal signal structure design can be implemented by the standard signal-plus-noise form in \eqref{normal:signal}.

\begin{proposition}\label{prop:normal_opt}
Suppose that the payoff is normal-distributed $\tilde{v} \sim N(\mu_v, \sigma_v^2)$. Then an optimal $(p(\cdot|\cdot), q(\cdot))$ for the problem \eqref{signal_str_pro}, together with the constraints \eqref{con:q_int} - \eqref{con:Bayes}, is normal, among all distributions so that $\frac{d}{ds}\mathbb{E}[\tilde{v}|\tilde{s} = s] \neq 0$ for almost all $s$. Moreover, this optimal signal structure can be implemented by \eqref{normal:signal}.
\end{proposition}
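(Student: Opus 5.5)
The approach is to reduce the infinite-dimensional problem \eqref{signal_str_pro} to the scalar problem \eqref{signal_str_pro_normal}. For any admissible signal I will bound the objective from above by the value attained by a normal signal that delivers the same variance of the posterior mean. Fix an admissible $(p(\cdot|\cdot),q(\cdot))$ and set $m(s):=\mathbb{E}[\tilde v|\tilde s=s]$ and $\sigma_m^2:=\mathrm{Var}(m(\tilde s))\le\sigma_v^2$. Write $\sigma_m^2=\xi\sigma_v^2$ with $\xi\in[0,1]$. The objective splits into a trading-profit part, which by \eqref{insider_exp_pro_cen} depends only on the law $F_s^c$ of $m(\tilde s)-\mu_v$, and the entropy cost $\lambda I(\tilde s,\tilde v)$. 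I will show separately that each part is weakly improved by the normal signal with this same $\xi$. The problem then collapses to \eqref{signal_str_pro_normal}, which Proposition \ref{prop:normal} solves.

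\emph{Step 1: trading profit.} By the Gelbrich bound, for any centered $F_s^c$ with variance $\sigma_m^2$ and $G=N(0,\sigma_Z^2T)$,
\[
W_2^2(F_s^c,G)\ \ge\ \big(\sigma_Z\sqrt T-\sigma_m\big)^2 .
\]
Equality holds when $F_s^c$ is normal. Directly, $W_2^2=\sigma_m^2+\sigma_Z^2T-2\sup\mathbb{E}[XZ]\ge(\sigma_m-\sigma_Z\sqrt T)^2$ by Cauchy--Schwarz. Substituting into \eqref{insider_exp_pro_cen}, the trading profit is at most $\sigma_Z\sqrt T\,\sigma_m=\sigma_v\sigma_Z\sqrt T\sqrt\xi$, which is exactly the normal-signal value.

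\emph{Step 2: information cost.} I need $I(\tilde s,\tilde v)\ge\frac12\log\frac1{1-\xi}$. Write $I=\mathbb{H}(p)-\mathbb{H}(\tilde v|\tilde s)$, with $\mathbb{H}(p)=\frac12\log(2\pi e\sigma_v^2)$. Because the normal maximizes differential entropy for a given variance, $\mathbb{H}(p(\cdot|s))\le\frac12\log(2\pi e\,\mathrm{Var}(\tilde v|\tilde s=s))$. Jensen's inequality (concavity of $\log$) then gives
\[
\mathbb{H}(\tilde v|\tilde s)\le\tfrac12\log\big(2\pi e\,\mathbb{E}[\mathrm{Var}(\tilde v|\tilde s)]\big)=\tfrac12\log\big(2\pi e\,(\sigma_v^2-\sigma_m^2)\big).
\]
The equality uses the law of total variance. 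Hence $I\ge\frac12\log\frac{\sigma_v^2}{\sigma_v^2-\sigma_m^2}=\frac12\log\frac1{1-\xi}$. Equality holds iff the posteriors are normal with a constant variance, which is the jointly Gaussian case \eqref{normal:signal}.

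\emph{Step 3: conclusion and the role of the non-degeneracy condition.} Combining the two steps, every admissible signal has objective at most $\sigma_v\sigma_Z\sqrt T\sqrt\xi-\frac\lambda2\log\frac1{1-\xi}$, which is bounded by the value at $\xi^*$. This bound is attained by \eqref{normal:signal} with precision \eqref{normal_opt_precision}. Any optimizer must make both inequalities tight, so its posteriors are normal with constant variance and its posterior mean is normal; this gives the normality claim.

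I expect Step 3 to be the main obstacle, namely justifying the use of Proposition \ref{prop:back_et_al}, which assumes $F_s$ is absolutely continuous and strictly increasing. The condition $\frac{d}{ds}m(s)\ne0$ a.e. is what I would use here. It implies that $m$ is locally invertible, so $F_s$ has no atoms, and it lets me reparametrize the signal by $m(\tilde s)$ itself. This reparametrization leaves $I$ unchanged up to the data-processing inequality: replacing $\tilde s$ by $m(\tilde s)$ weakly lowers the cost and leaves the profit unchanged. The equality argument then identifies the optimal joint law of $(\tilde v,m(\tilde s))$ as bivariate normal, which is implementable as $\tilde s=\tilde v+\tilde\epsilon$ after an affine rescaling. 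Some care is needed in the equality case of Jensen's inequality and in confirming that the equilibrium \eqref{exp_pro_wass} applies throughout the comparison class.
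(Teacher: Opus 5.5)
Your proof is correct, and it differs from the paper's in how the information cost is handled. Both arguments use the Gelbrich bound to cap the trading profit at $\sigma_v\sigma_Z\sqrt{T}\sqrt{\xi}$ for a fixed variance of $\mathbb{E}[\tilde v|\tilde s]$.

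The paper then attaches Lagrange multipliers to the normalization, Bayes-plausibility and second-moment constraints. From the first-order conditions it obtains a posterior of the form $a(v)\,e^{-\frac{1/2+\tau}{\lambda}(v-\mathbb{E}[\tilde v|\tilde s=s])^2}$. It then uses Lemma \ref{lem:normal_posterior} (repeated differentiation of moment identities in $s$, plus moment determinacy) to conclude that this posterior is normal with constant variance $\lambda/(1+2\tau)$. The condition $\frac{d}{ds}\mathbb{E}[\tilde v|\tilde s=s]\neq0$ is needed exactly in that lemma.

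You instead solve the fixed-$\xi$ cost minimization directly with the classical Gaussian rate--distortion bound (maximum entropy of the normal, Jensen, law of total variance). This buys you several things:
\begin{itemize}
\item You never have to justify first-order conditions in an infinite-dimensional problem, and you need no differentiability in $s$.
\item Your upper bound holds for every admissible signal structure, so the non-degeneracy restriction in the statement is superfluous for your argument.
\item The equality cases give a uniqueness statement the paper does not prove: every optimizer has $\xi=\xi^*$ and normal posteriors of constant variance, so $(\tilde v,\mathbb{E}[\tilde v|\tilde s])$ is bivariate normal. As in the paper, $\tilde s$ itself is normal only up to a monotone reparametrization.
\end{itemize}

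Your closing worry is therefore misplaced. Problem \eqref{signal_str_pro} takes \eqref{exp_pro_wass} as its objective by definition, so no separate justification is needed. In any case, the non-degeneracy condition would not by itself make $F_s$ strictly increasing.

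What the paper's Lagrangian route buys is continuity of method. The same exponential-tilting first-order conditions produce the logit posteriors for discrete and general priors in Propositions \ref{prop:opt_cont} and \ref{prop:cont_v}. Your bound, by contrast, relies on the max-entropy property of the normal and is specific to a Gaussian prior.
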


The restriction $\frac{d}{ds}\mathbb{E}[\tilde{v}|\tilde{s} = s] \neq 0$ for almost all $s$ is a non-degeneracy condition. Without this restriction, there can be two signals $s_1$ and
$s_2$ such that $\mathbb{E}[\tilde{v}|\tilde{s}=s_1] = \mathbb{E}[\tilde{v}|\tilde{s}=s_2]$. Information acquisition is costly, then it is wasteful for the informed trader to design two distinct signals which lead to the same estimates on the payoff. 

Intuition for Proposition \ref{prop:normal_opt} is the following. The Wasserstein-2 distance between the distribution $F_s$ for $\mathbb{E}[\tilde{v}|\tilde{s}]$ and the normal distribution $G$ for noise trades admits the \cite{Gelbrich1990} lower bound:
\begin{equation}\label{Gelbrich1990}
W^2_2(F_s, G) \geq (\mu_G - \mathbb{E}[\tilde{v}])^2 + (\sigma_G - \sqrt{K})^2,
\end{equation}
where $\mu_G = 0$ is the mean of $G$, $\sigma_G = \sigma_Z \sqrt{T}$ and $\sqrt{K}$ are the standard deviations for $G$ and $F_s$, respectively. Moreover, the lower bound is attained when $F_s$ is a normal distribution. We replace $W_2^2(F_s, G)$ in \eqref{signal_str_pro} by the lower bound in \eqref{Gelbrich1990}, hence obtain a new objective function which is an upper bound of the objective in \eqref{signal_str_pro}. The entropy information acquisition cost yields an normal distribution $(p^*(\cdot|\cdot), q^*(\cdot))$ as a maximizer for this upper bound objective function. Because $(p^*(\cdot|\cdot), q^*(\cdot))$ is normal, the Gelbrich lower bound is attained. Hence, $(p^*(\cdot|\cdot), q^*(\cdot))$ is also a maximizer for the original objective function in \eqref{signal_str_pro}. This optimal distribution can be implemented by the signal \eqref{normal:signal}. Hence, this signal structure, which we assumed for Proposition \ref{prop:normal}, is optimal.

\section{Discretely distributed payoff}\label{sec:discrete}

In this section, we consider a discretely distributed payoff $\tilde{v}$ with a finite state space $\mathcal{V} = \{v_1, \dots, v_N\}$. Even though the payoff state space is discrete and finite, our analytical result shows that optimal signals have continuous distributions. Moreover, any continuous signal leads to the same ex ante optimal value for the informed trader and the same distribution of the conditional expected payoff. 

\subsection{Continuous signal}\label{sec:cont_signal}
We consider a signal $\tilde{s}$ with a continuous distribution which admits a density with respect to the Lebesgue measure on $\mathbb{R}$. We denote it cumulative distribution function (CDF) as $Q$ and its probability distribution function (density) as $q$. The support of $q$ can be a subset of $\mathbb{R}$ or $\mathbb{R}$ itself. Because the signal states can be reordered, it is without loss of generality to consider the class of distributions such that $\text{support}(q) \ni s \mapsto \mathbb{E}[\tilde{v}| s]$ is strictly increasing.  

Let us first formulate the optimization problem for the informed trader. To this end,  for a given signal structure and the associated distribution of $\mathbb{E}[\tilde{v}|\tilde{s}]$, recall that the unconditional expected profit of the insider is given by \eqref{exp_pro_wass}. Using the quantile functions $F_s^{-1}$ and $G^{-1}$, the Wasserstein 2 distance between $F_s$ and $G$ can be expressed as 
\[
W^2_2(F_s, G) = \int_0^1 \big[F_s^{-1}(u) - G^{-1}(u)\big]^2 du.
\]
Therefore, the expected profit of the informed trader is 
\begin{align}
    \frac12 \Big\{\mathbb{E} \big[\mathbb{E}[\tilde{v}| \tilde{s}]^2\big] + \mathbb{E}[Z_T^2] - W^2_2(F_s, G) \Big\} = &\frac12 \int_0^1  \Big\{ \big(F_s^{-1}(u)\big)^2 + \big(G^{-1}(u) \big)^2 - \big[F_s^{-1}(u) - G^{-1}(u)\big]^2     \Big\} du \nonumber\\
    = & \int_0^1 F^{-1}_s(u) G^{-1}(u)\, du. \label{expected_profit_cont}
\end{align}
The first equality above follows from the fact that $\mathbb{E} \big[\mathbb{E}[\tilde{v}| \tilde{s}]^2\big] = \int_0^1 (F^{-1}_s(u))^2 du$ and $\mathbb{E}[Z_T^2] = \int_0^1 (G^{-1}(u))^2 du$. Using the quantile formulation in \eqref{expected_profit_cont}, the following result presents the informed trader's optimization problem.

\begin{lemma}\label{lem:pro_cont}
 The informed trader's optimization problem is 
 \begin{align}
 &\max_{p(\cdot|\cdot), q(\cdot)} \underbrace{\int \mathbb{E}[\tilde{v}|s] G^{-1}\big(Q(s)\big) q(s) \, ds}_{\text{Unconditional Expected Profit}} - \underbrace{\lambda \Big\{\sum_{n=1}^N \int \log \big(p(v_n|s) \big) p(v_n|s) q(s) \, ds - \sum_{n=1}^N \log \big( p(v_n)\big) p(v_n) \Big\}}_{\text{Information Acquisition Cost}},\label{obj_cont}\\
 &\text{subject to} \nonumber\\
 & \sum_{n=1}^N p(v_n|s) = 1, \quad \text{for any } s,\label{con:p_cont}\\
 & \int p(v_n|s) q(s) \, ds = p(v_n), \quad \text{for any } 1\leq n\leq N. \label{con:Baye_cont}
 \end{align}
\end{lemma}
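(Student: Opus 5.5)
The plan is to start from the general design problem \eqref{signal_str_pro} with constraints \eqref{con:q_int}--\eqref{con:Bayes}, specialize it to a finite payoff space $\mathcal{V}=\{v_1,\dots,v_N\}$ and a continuous signal with density $q$, and rewrite each of the two parts of the objective separately.

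\textbf{The profit term.} We already have the quantile representation \eqref{expected_profit_cont}, $\int_0^1 F_s^{-1}(u)G^{-1}(u)\,du$. Within the class considered, $s\mapsto \mathbb{E}[\tilde v|s]$ is strictly increasing on $\text{support}(q)$, and $\tilde s$ has a continuous distribution $Q$. Hence $Q(\tilde s)$ is uniform on $[0,1]$, and the quantile function of $\mathbb{E}[\tilde v|\tilde s]$ satisfies $F_s^{-1}(Q(s))=\mathbb{E}[\tilde v|s]$. This is the usual fact that a strictly increasing transform of a continuous variable has quantile function equal to the transform composed with the quantile of the variable.

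Substituting $u=Q(s)$, with $du=q(s)\,ds$, gives
\[
\int_0^1 F_s^{-1}(u)G^{-1}(u)\,du=\int \mathbb{E}[\tilde v|s]\,G^{-1}\big(Q(s)\big)\,q(s)\,ds .
\]
Here $\mathbb{E}[\tilde v|s]=\sum_n v_n p(v_n|s)$. One subtlety is that $F_s$ need not be absolutely continuous a priori. Because $\mathbb{E}[\tilde v|s]$ is strictly increasing, however, $F_s$ has no atoms, so the change of variables is legitimate. I would remark that \eqref{exp_pro_wass} in its general form from \cite{Back_et_al_2021} still applies.

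\textbf{The information-cost term.} The mutual information \eqref{def:info_cost} becomes
\[
I(\tilde s,\tilde v)=-\sum_n p(v_n)\log p(v_n)+\int q(s)\sum_n p(v_n|s)\log p(v_n|s)\,ds ,
\]
with the entropy sum taken over $v$ and the outer expectation over the continuous $s$. Multiplying by $\lambda$ gives exactly the cost term in \eqref{obj_cont}.

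\textbf{The constraints.} Constraint \eqref{con:p_int} becomes the finite sum \eqref{con:p_cont}, and Bayes plausibility \eqref{con:Bayes} becomes \eqref{con:Baye_cont} for each $n$. Constraint \eqref{con:q_int} is implicit in $q$ being a density, which is why it is not listed separately.

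The main obstacle is justifying the identification $F_s^{-1}\circ Q=\mathbb{E}[\tilde v|\cdot]$ on the support, in particular when the support of $q$ is not an interval, since then $Q$ is flat on the gaps. I would handle this by noting that the gaps carry zero $q$-mass, so they do not affect the integral, and that $Q$ is strictly increasing on $\text{support}(q)$ up to a null set. These two facts make the substitution valid almost everywhere.
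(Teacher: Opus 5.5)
Your proposal is correct and follows the paper's proof essentially step for step: you obtain $F_s^{-1}(Q(s))=\mathbb{E}[\tilde v|s]$ from strict monotonicity of $s\mapsto\mathbb{E}[\tilde v|s]$, substitute $u=Q(s)$ in \eqref{expected_profit_cont}, and write out the mutual-information cost and the constraints for finite $\mathcal{V}$. Your point that the identity is only needed $q$-a.e.\ (it can fail at countably many $s$, e.g.\ at the right end of a gap in $\text{support}(q)$, where $F_s$ is flat just below $\mathbb{E}[\tilde v|s]$) actually tightens the paper's argument, which infers $v=\mathbb{E}[\tilde v|s]$ from $F_s(v)=Q(s)=F_s(\mathbb{E}[\tilde v|s])$ as though $F_s$ were strictly increasing.
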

The first term in \eqref{obj_cont} represents the unconditional expected profit from trading and the second term is the information acquisition cost. The constraint \eqref{con:p_cont} ensures that $p(\cdot|\cdot)$ is a posterior probability and the constraint \eqref{con:Baye_cont} represents the Bayes plausibility. The following result presents the informed trader's optimal choice of signal structure among continuous-distributed signals and the associated optimal value.

\begin{proposition}
    \label{prop:opt_cont}
    The following statements hold.
\begin{enumerate}
\item[(i)] The optimal posterior probabilities are given by 
    \begin{equation}
        \label{opt:post_cont}
        p(v_n | s) = \frac{e^{\frac{1}{\lambda}\big[v_n G^{-1}(Q(s)) + \mu_n\big]}}{\sum_{n'=1}^N e^{\frac{1}{\lambda}\big[v_{n'} G^{-1}(Q(s)) + \mu_{n'}\big]}}, \quad s \in \text{support}(q), n\in \{1,\dots, N\}.
    \end{equation}
    In the previous equation, the vector $(\mu_1, \dots, \mu_N)$ exists and is unique (up to additive constants, i.e., $(\mu_1, \dots, \mu_N)$ is identical to $(\mu_1 + c, \dots, \mu_N + c)$ for any constant $c$) such that the Bayes plausibility \eqref{con:Baye_cont} is satisfied. 
    
\item[(ii)] Any continuous signal $\tilde{s}$, such that $s\mapsto \mathbb{E}[\tilde{v}|s]$ is strictly increasing, achieves the same optimal value
    \begin{equation}
        \label{opt_value:cont}
         \lambda \, \mathbb{E}\Big[\log \Big(\sum_{n=1}^N e^{(v_n \tilde{z} + \mu_n)/\lambda} \Big)\Big] + \lambda \sum_{n=1}^N \log\big(p(v_n)\big) p(v_n) - \sum_{n=1}^N \mu_n p(v_n),
    \end{equation}
    where $\tilde{z}$ is a random variable with normal distribution $N(0,\sigma_Z^2 T)$. 
\item[(iii)] The conditional expected payoff $\mathbb{E}[\tilde{v}|\tilde{s}]$ has the same distribution as 
\begin{equation}
    \label{exp_payoff:cont}
     \frac{\sum_{n=1}^N v_n e^{(v_n \tilde{z} + \mu_n)/\lambda}} {\sum_{n=1}^N e^{(v_{n} \tilde{z} + \mu_{n})/\lambda}}. 
\end{equation}
In particular, any continuous signal $\tilde{s}$ leads to the same expected payoff distribution.
\end{enumerate}

\end{proposition}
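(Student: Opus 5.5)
The plan is to fix the signal marginal $Q$ (continuous, with $s\mapsto \mathbb{E}[\tilde v|s]$ strictly increasing) and optimize over the posteriors $p(\cdot|s)$ pointwise, using the Lagrangian of Lemma \ref{lem:pro_cont}. The key observation is that the objective \eqref{obj_cont} is linear in $p(v_n|s)$ through the profit term. Writing $\mathbb{E}[\tilde v|s]=\sum_n v_n p(v_n|s)$, the profit term is $\sum_n\int v_n G^{-1}(Q(s))\,p(v_n|s)\,q(s)\,ds$. The cost term is a negative-entropy term, convex in $p(\cdot|s)$. Attaching multipliers $\mu_n$ to the Bayes constraints \eqref{con:Baye_cont}, the Lagrangian for each fixed $s$ is
\begin{equation*}
q(s)\Big[\sum_n \big(v_n G^{-1}(Q(s))+\mu_n\big)p(v_n|s) - \lambda\sum_n p(v_n|s)\log p(v_n|s)\Big],
\end{equation*}
to be maximized over the simplex. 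This is the standard Gibbs variational problem. Its maximizer is the logit formula \eqref{opt:post_cont}, and its value is $\lambda\log\sum_n e^{(v_nG^{-1}(Q(s))+\mu_n)/\lambda}$. Strict concavity gives uniqueness of the posterior given $\mu$.

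The next step is existence and uniqueness of $\mu$ up to constants. Substituting $z=G^{-1}(Q(s))$ and noting that $z$ is $N(0,\sigma_Z^2T)$-distributed when $\tilde s\sim Q$, the Bayes constraints become
\begin{equation*}
\mathbb{E}\Big[\frac{e^{(v_n\tilde z+\mu_n)/\lambda}}{\sum_{n'}e^{(v_{n'}\tilde z+\mu_{n'})/\lambda}}\Big]=p(v_n), \qquad n=1,\dots,N.
\end{equation*}
These are exactly the first-order conditions of the convex dual function
\begin{equation*}
\Phi(\mu)=\lambda\,\mathbb{E}\Big[\log\sum_n e^{(v_n\tilde z+\mu_n)/\lambda}\Big]-\sum_n\mu_np(v_n).
\end{equation*}
This is a Sinkhorn/Schrödinger-type problem with one continuous marginal. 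I would argue three things. First, $\Phi$ is invariant under $\mu\mapsto\mu+c\mathbf 1$. Second, it is strictly convex on the orthogonal complement of $\mathbf 1$, since the Hessian is $\lambda^{-1}\mathbb{E}[\text{Cov}_{p(\cdot|\tilde z)}]$ and the $v_n$ are distinct, $\tilde z$ is nondegenerate and $p(v_n)>0$. Third, $\Phi$ is coercive there: if $\mu_n-\mu_m\to\infty$ along a direction, $\Phi$ grows linearly because $p(v_m)>0$. Coercivity plus strict convexity give a unique minimizer, and hence part (i).

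Duality then delivers (ii) and (iii). Because the Lagrangian is concave in $p$ and the constraints are linear, plugging the optimal posterior back gives the optimal value
\begin{equation*}
\int\lambda\log\sum_n e^{(v_nG^{-1}(Q(s))+\mu_n)/\lambda}q(s)\,ds-\sum_n\mu_np(v_n)+\lambda\sum_n p(v_n)\log p(v_n).
\end{equation*}
The change of variables $z=G^{-1}(Q(s))$ turns this into \eqref{opt_value:cont}. The resulting expression involves $Q$ only through the law of $\tilde z$, which is always $N(0,\sigma_Z^2T)$, and the same $\mu$ solves the constraint system for every $Q$. So the value is independent of the signal. Similarly, $\mathbb{E}[\tilde v|s]=\sum_nv_np(v_n|s)$ is a function of $z=G^{-1}(Q(s))$ alone, which gives \eqref{exp_payoff:cont} in distribution.

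Two consistency checks remain. One must verify that the optimal $\mathbb{E}[\tilde v|s]$ is strictly increasing in $s$, since the quantile representation \eqref{expected_profit_cont} of Lemma \ref{lem:pro_cont} requires that ordering. This follows because the derivative in $z$ of the logit mean equals $\lambda^{-1}\text{Var}_{p(\cdot|z)}(\tilde v)>0$. One must also justify strong duality rigorously in the infinite-dimensional setting, via a direct verification argument: for any feasible $p$, the Lagrangian bound plus the Gibbs inequality shows the value is at most the dual value, with equality at the logit posterior.

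I expect the main obstacle to be the existence and uniqueness of $\mu$ (the coercivity argument with unbounded Gaussian $\tilde z$), together with making the monotonicity restriction compatible with the pointwise optimization.
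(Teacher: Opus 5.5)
Your proposal is correct and follows the paper's proof. Both derive the logit posterior pointwise in $s$ from the Lagrangian, then use the change of variables $z=G^{-1}(Q(s))\sim N(0,\sigma_Z^2T)$ to show that the Bayes system, the value, and the law of $\mathbb{E}[\tilde v|\tilde s]$ depend on $Q$ only through the law of $\tilde z$, and finish with the variance computation for monotonicity. The differences are minor: you prove existence and uniqueness of $\mu$ by minimizing the convex dual $\Phi$ where the paper cites R\"uschendorf--Thomsen, and you use a Gibbs-inequality verification instead of first-order conditions plus concavity; the coercivity you flag as the main obstacle is immediate from $\Phi(\mu)\ge \max_n\mu_n-\sum_n\mu_n p(v_n)$ (use $\lambda\log\sum_n e^{a_n/\lambda}\ge\max_n a_n$ and $\mathbb{E}[\tilde z]=0$), and strict convexity modulo constants needs only that the softmax weights are positive.
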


It is surprising that any continuous signal achieves the same optimal value. To understand this result, we start from the optimization problem \eqref{obj_cont}. Introducing Lagrangian multipliers $\nu(s)$ for the constraint \eqref{con:p_cont} and $(\mu_1, \dots, \mu_N)$ for the constraint \eqref{con:Baye_cont}, we obtain the optimal posteriors in \eqref{opt:post_cont} from the first-order condition of $p(v_n|s)$. We also obtain the following objective function for the informed trader:
    \begin{equation}\label{obj:q_cont}
    \begin{split}
        \underbrace{\lambda \int \log \Big(\sum_{n=1}^N e^{\frac{1}{\lambda}\big(v_n G^{-1}(Q(s)) + \mu_n\big)} \Big) q(s) \, ds}_{\text{Expected Profit } + \lambda \,\mathbb{E}[\mathbb{H}(\tilde{v}|\tilde{s})]} + \underbrace{\lambda \sum_{n=1}^N \log\big(p(v_n)\big) p(v_n)}_{-\lambda \,\mathbb{H}(\tilde{v})} - \underbrace{\sum_{n=1}^N \mu_n p(v_n)}_{\text{Prior Adjustment}}.
    \end{split}
    \end{equation}
The first term represents the expected profit from trading and the expected entropy of the posterior payoff, when the informed trader chooses the optimal posterior as in \eqref{opt:post_cont}. The second term in \eqref{obj:q_cont} is proportional to the entropy of the prior payoff distribution. This term does not depend on choice of the informed trader.  The last term in \eqref{obj:q_cont} represents an adjustment term associated with the Bayes plausibility \eqref{con:Baye_cont}. The Lagrangian multipliers $(\mu_1, \dots, \mu_N)$ are chosen so that \eqref{con:Baye_cont} is satisfied. These Lagrangian multipliers introduce this adjustment term. In a two-state symmetric case (i.e., $N=2$ and $p(v_1) =p(v_2) =0.5$), due to the symmetric of noise trade distribution, Proposition \ref{prop:two_state_sym} in the appendix shows that $\mu_1=\mu_2=0$, hence this adjustment term vanishes.

Let us focus on the first term in \eqref{obj:q_cont}. Recall that $Q(\tilde{s})$ has a uniform distribution on $[0,1]$, when $\tilde{s}$ has a continuous distribution. Hence $G^{-1}(Q(\tilde{s}))$ has a normal distribution $N(0, \sigma_Z^2 T)$. Introducing a random variable $\tilde{z}$ with such a normal distribution, we obtain the following equivalent form for the first term in \eqref{obj:q_cont}:
\begin{align}
    \lambda \,\int \log \Big(\sum_{n=1}^N e^{\frac{1}{\lambda}\big(v_n G^{-1}(Q(s)) + \mu_n\big)} \Big) q(s) \, ds = & \lambda\,\mathbb{E} \Big[\log \Big(\sum_{n=1}^N e^{\frac{1}{\lambda}\big(v_n G^{-1}(Q(\tilde{s})) + \mu_n\big)} \Big)\Big] \nonumber\\
    = & \lambda\, \mathbb{E} \Big[\log \Big(\sum_{n=1}^N e^{\frac{1}{\lambda}\big(v_n \tilde{z} + \mu_n\big)} \Big)\Big].\label{first_term_obj_cont}
\end{align}
Combining \eqref{obj:q_cont} and \eqref{first_term_obj_cont} yields the value in \eqref{opt_value:cont}. Meanwhile, we show in the proof of Proposition \eqref{prop:opt_cont} that $(\mu_1, \dots, \mu_N)$ exists, unique (up to an additive constant), and is the same for any choice of continuously distributed signal $\tilde{s}$. Therefore, the value \eqref{opt_value:cont} is the same for any continuous signal, it is the optimal value for the informed trader's optimization problem in Lemma \ref{lem:pro_cont}.  

For any continuous signal $\tilde{s}$ and the associated optimal posterior in \eqref{opt:post_cont}, the conditional expected payoff is given by \eqref{exp_payoff:cont}. The distribution of the conditional expected payoff is the same for any continuous signal choice. This follows again from the fact that $G^{-1}(Q(\tilde{s}))$ has the same distribution as $\tilde{z}$ for any continuously distributed $\tilde{s}$. Given that the distribution of the conditional expected payoff is independent with the signal choice, both the unconditional expected profit and the information acquisition cost in \eqref{obj_cont} are independent of the signal choice as well, when the informed trader chooses the optimal posterior \eqref{opt:post_cont}.

Similar to the rational inattention literature, the first two terms in the objective \eqref{obj:q_cont} can be rewritten as \emph{posterior separable} form:
\[
\int V^{\text{net}}(s) q(s) \, ds - \lambda \mathbb{H}(\tilde{v}),
\]
where $V^{\text{net}}(s) := \lambda \log \Big(\sum_{n=1}^N e^{\frac{1}{\lambda}(v_n G^{-1}(Q(s)) + \mu_n)} \big)$ is the net utility associate with signal state $s$; see \cite{CaplinDean2013, CaplinDeanLeahy2019, CaplinDeanLeahy2022}. Different from these papers, where the net utility only depends on the posterior distribution of states, our net utility depends on the entire marginal for the signal via the signal CDF $Q$ and the Lagrangian multipliers $(\mu_1, \cdots, \mu_N)$.

Even though any continuous signal leads to the same optimal value for the informed trader, the optimal posterior in \eqref{opt:post_cont} depends on the choice of signal. Any pair of $\big(p(\cdot|\cdot), q(\cdot) \big)$ leads to the same optimal value. Hence the informed trader can choose the normal distributed $\tilde{z}$ as the signal. The associated optimal posterior is then
\begin{equation}
    \label{opt:post_cont_z}
p(v_n | z) = \frac{e^{(v_n z + \mu_n)/\lambda}}{\sum_{n'=1}^N e^{(v_{n'} z + \mu_{n'})/\lambda}}, \quad z \in \mathbb{R}, n\in \{1,\dots, N\}.
\end{equation}
This optimal posterior has a logit form. The optimal posteriors are determined by three forces. First, the posterior probability $p(v_n|z)$ increases with $v_n z$. When the informed trader receives a positive/negative signal, higher/lower payoff states receive larger posterior probabilities. When $z$ increases, high payoff states are associated with higher weights. In the limit $z\rightarrow \infty$, the highest payoff state receives the posterior probability one and other payoff states have zero posterior probability, i.e., $\lim_{z\rightarrow \infty}p(v_{\text{max}}|z) =1$, for $v_{\text{max}} = \max\{v_1, \dots, v_N\}$, and $\lim_{z\rightarrow \infty}p(v_n|z) =0$ for $v_n \neq v_{\text{max}}$.

Second, the distinction in the posterior probabilities $p(v_n|z)$ among different $v_n$ and $z$ reduces as $\lambda$ increases. A higher information acquisition cost increases the cost of distinguishing among different signal and payoff states. Therefore, the informed trader decreases the differences between different $p(v_n|z)$, i.e., the signal is less informative about the payoff. In the limit as $\lambda \rightarrow \infty$, $(\mu_1, \dots, \mu_N)$ adjusts so that the posterior $p(v_n|z)$ converges to the prior $p(v_n)$ for any $n$ and $z$. Hence the conditional expectation $\mathbb{E}[\tilde{v}|z]$ converges to the prior expectation $\mathbb{E}[\tilde{v}]$.  

Third, the vector $(\mu_1, \dots, \mu_N)$ represents the Lagrangian multipliers for the Bayes plausibility \eqref{con:Baye_cont}. Plugging \eqref{opt:post_cont_z} into \eqref{con:Baye_cont}, we obtain 
\begin{equation}\label{Sinkhorn-comp}
\int \xi_n k_{n}(z) \zeta(z) \,dz = p(v_n), \quad \text{where } \xi_n := e^{\mu_n/\lambda}, \, k_{n}(z) := e^{v_n z/\lambda}, \, \text{and} \, \zeta(z) := \frac{q(z)}{\sum_{n'=1}^N \xi_{n'} k_{n'}(z)},
\end{equation}
for $1\leq n\leq N$. These $N$ identities are equivalent to the following matrix identity
\begin{equation}\label{Sinknorn-Knopp}
\int\text{diag}(\xi_1, \dots, \xi_N) \, k(z) \, \zeta(z) \, dz= p,
\end{equation}
where $\text{diag}(\xi_1, \dots, \xi_N)$ is the diagonal matrix with diagonal entries $(\xi_1, \dots, \xi_N)$, $k(z)$ is a $N$-dimensional vector $(k_1(z), \dots, k_N(z))$, and $p$ is the column vector $(p(v_1), \dots, p(v_N))$. The identity \eqref{Sinknorn-Knopp} states that the row integral of  $\text{diag}(\xi_1, \dots, \xi_N) \, k(z) \, \zeta(z)$ is $p$. Meanwhile, the definition of $\zeta(z)$ implies that the column sums of $\text{diag}(\xi_1, \dots, \xi_N) \, k(z) \, \zeta(z)$ is $q(z)$. Problem \eqref{Sinknorn-Knopp} is the (continuous) \emph{Sinkhorn} problem, which matches a discrete marginal $p$ and a continuous marginal $q$. For given $k$, marginals $p$ and $q$, there exists a positive vector $(\xi_1, \dots, \xi_N)$ satisfying \eqref{Sinknorn-Knopp}. Moreover, this vector is unique up to a multiplicative scaling factor, i.e., if $(\xi_1, \dots, \xi_N)$ solves \eqref{Sinknorn-Knopp}, then so is $(c\,\xi_1, \dots, c\,\xi_N)$ for any nonzero $c$; see \cite{RuschendorfThomsen1993}.\footnote{Suppose that the support of the marginal $q$ is on finite number of points $\{s_1, \dots, s_M\}$. The Sinkhorn problem is also called the matrix scaling problem. When $M=N$, the existence and uniqueness of the solution for the Sinkhorn problem were obtained by \cite{Sinkhorn1964} and \cite{SinkhornKnopp1967}. Convergence of an iterative algorithm to identify the solution was also obtained in these papers. The existence and uniqueness of the solution are presented by \cite[Proposition 4.3]{PeyreCuturi2019} when $M$ and $N$ can be different. The result was generalized by \cite{RuschendorfThomsen1993} for general marginals. The convergence of the iterative algorithm for continuous marginals was proved by \cite{Ruschendorf1995}. We present the iterative algorithm to solve \eqref{Sinkhorn-comp} in Appendix \ref{app:sinkhorn_pos}. The Sinkhorn problem also relates to the entropic regularization of optimal transport; see \cite[Section 4]{PeyreCuturi2019} for the connection and further references.} From the definition of $\xi$'s, we see that there exists a vector $(\mu_1, \dots, \mu_N)$ so that $p$, defined in \eqref{opt:post_cont_z}, satisfies \eqref{con:Baye_cont}. Moreover, $(\mu_1, \dots, \mu_N)$ is unique up to an additive constant. 

Given $k$ and $q$, there is a positive relationship between $\mu_n$ and $p(v_n)$.\footnote{A proof is provided in Appendix \ref{app:sinkhorn_pos}.} Therefore, when the marginal $p(v_n)$ increases, $\mu_n$ increases as well, leading to a higher posterior $p(v_n| z)$ for all signal realizations. 

In an rational inattention discrete choice problem, \cite{MatejkaMcKay2015} obtain a logit form of optimal state-dependent choice probabilities, which provides a foundation for the multinomial logit model. Due to the recommendation lemma \cite[Lemma 1]{MatejkaMcKay2015}, signals one-to-one correspond to actions in \cite{MatejkaMcKay2015}, hence their optimal choice probabilities are the conditional probabilities of signals or actions given states; this is not the case in our problem. Different from \cite{MatejkaMcKay2015}, our posterior probabilities in \eqref{opt:post_cont_z} represent the conditional probabilities of states given signals.

Choosing the normal-distributed signal $\tilde{z}$, we show in the proof of Proposition \ref{prop:opt_cont} that $\mathbb{R} \ni z \mapsto \mathbb{E}[\tilde{v}|z]$ is strictly increasing. Hence the signal has a monotone relation with the informed trader's best estimate of the payoff. In particular, when the payoff has two states, i.e., $N=2$, the conditional expected payoff $\mathbb{E}[\tilde{v}|\tilde{z}]$ admits the following explicit density.

\begin{corollary}\label{cor:two_state_density}
 When $N=2$, the density of the conditional expected payoff $\mathbb{E}[\tilde{v}|\tilde{z}]$ is 
 \begin{equation}
     \label{two_state_density}
     f_s(v) = \frac{\lambda}{\sigma_Z \sqrt{T} (v_2 - v)(v-v_1)} \phi \Big(\frac{\lambda}{\sigma_Z \sqrt{T} (v_1 - v_2)} \log \Big(\frac{v_2 - v}{v-v_1} \Big) - \frac{\mu_1 - \mu_2}{\sigma_Z \sqrt{T} (v_1 - v_2)} \Big), 
 \end{equation}
 for $v\in(v_1, v_2)$, where $\phi(\cdot)$ is the density of the standard normal distribution.
\end{corollary}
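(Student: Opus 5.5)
The plan is a change of variables. Proposition \ref{prop:opt_cont}(iii) says $\mathbb{E}[\tilde{v}|\tilde{z}]=g(\tilde{z})$ with $\tilde{z}\sim N(0,\sigma_Z^2T)$, where $g$ comes from the logit posterior \eqref{opt:post_cont_z}. So the task is to show that $g$ is a strictly monotone bijection onto $(v_1,v_2)$, invert it explicitly, and apply the one-dimensional density transformation formula. Throughout I label the states so that $v_1<v_2$ and write $\sigma:=\sigma_Z\sqrt{T}$.

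\emph{Step 1: the map $g$ and its range.} With $N=2$, \eqref{opt:post_cont_z} gives
\[
p(v_2|z)=\frac{1}{1+e^{-a(z)}},\qquad a(z):=\frac{(v_2-v_1)z+\mu_2-\mu_1}{\lambda}.
\]
Hence $g(z)=v_1+(v_2-v_1)p(v_2|z)$. Since $a$ is strictly increasing and affine in $z$, and the logistic function is strictly increasing, $g$ is a smooth strictly increasing bijection from $\mathbb{R}$ onto $(v_1,v_2)$. Consequently $\mathbb{E}[\tilde{v}|\tilde{z}]$ is supported on $(v_1,v_2)$ and has no atoms.

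\emph{Step 2: explicit inverse.} From $g(z)=v$ we get $p(v_2|z)=(v-v_1)/(v_2-v_1)$ and $p(v_1|z)=(v_2-v)/(v_2-v_1)$. Taking the ratio,
\[
\frac{v_2-v}{v-v_1}=e^{-a(z)}.
\]
Taking logarithms gives $\lambda\log\frac{v_2-v}{v-v_1}=(v_1-v_2)z+\mu_1-\mu_2$. Therefore
\[
z=g^{-1}(v)=\frac{\lambda}{v_1-v_2}\log\Big(\frac{v_2-v}{v-v_1}\Big)-\frac{\mu_1-\mu_2}{v_1-v_2}.
\]
Dividing by $\sigma$ gives exactly the argument of $\phi$ in \eqref{two_state_density}.

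\emph{Step 3: Jacobian and conclusion.} Differentiating,
\[
\frac{d}{dv}\log\frac{v_2-v}{v-v_1}=-\frac{v_2-v_1}{(v_2-v)(v-v_1)},
\]
so $(g^{-1})'(v)=\lambda/\big((v_2-v)(v-v_1)\big)$, which is positive on $(v_1,v_2)$. The density of $\tilde{z}$ is $\sigma^{-1}\phi(z/\sigma)$. The change-of-variables formula
\[
f_s(v)=\sigma^{-1}\phi\big(g^{-1}(v)/\sigma\big)\,(g^{-1})'(v)
\]
then yields \eqref{two_state_density}.

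The only point needing care is Step 1: establishing strict monotonicity and the exact range, so that the inverse is globally defined and the transformation formula applies without absolute values or multiple branches. With the logistic representation this is immediate. The multipliers $\mu_1,\mu_2$ (which exist by Proposition \ref{prop:opt_cont}(i)) enter only through the shift $\mu_1-\mu_2$. That shift is invariant to the additive normalization, so no further properties of them are needed.
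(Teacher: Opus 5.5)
Your proposal is correct and follows essentially the same route as the paper. The paper writes $F_s(v)=\mathbb{P}\big(\tilde{z}\le g^{-1}(v)\big)$ with the same explicit inverse, standardizes by $\sigma_Z\sqrt{T}$, and differentiates, which is exactly your change-of-variables computation; you additionally make explicit the strict monotonicity of $g$ and that its range is $(v_1,v_2)$, which the paper uses implicitly when it manipulates the inequality.
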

We will illustrate several comparative statics for the conditional expected payoff after discussing the optimality of the continuous signals.

\subsection{Continuous signal is optimal}\label{sec:cont_optimal}
In our previous section, signal $\tilde{s}$ is assumed to have a continuous distribution. However, 
\begin{center}
\textit{
Is continuous signal optimal for a discretely-distributed payoff?}
\end{center}
To answer this question, we first consider discretely distributed signals: The signal $\tilde{s}$ could take $M$ different values $\{s_1, \dots, s_M\}$, where the number of signal states $M$ is not necessarily equal to the number of payoff states $N$. The probability distribution function of the discretely distributed signal is denoted as $q(s_m)$ and the CDF is $Q_m := \sum_{i=1}^m q(s_i)$ for $1\leq m \leq M$. We still denote the CDF for the conditional expected payoff $\mathbb{E}[\tilde{v}|\tilde{s}]$ as $F_s$.

For discretely distributed signals, let us first formulate the informed trader's optimization problem. To this end, note that the quantile function of $\mathbb{E}[\tilde{v}|\tilde{s}]$ takes the following form:
\begin{equation}\label{F_s_inv_disc}
    F_s^{-1} (u) = \mathbb{E}[\tilde{v} | s_m], \quad \text{ when } Q_{m-1} < u \leq Q_m.
\end{equation}
Therefore, the insider's unconditional expected profit in \eqref{expected_profit_cont} is
\begin{equation}\label{insider_exp_pro_disc}
\begin{split}
 \int_0^1 F_s^{-1}(u) G^{-1}(u) \, du = \sum_{m=1}^M \mathbb{E}[\tilde{v} | s_m]  \int_{Q_{m-1}}^{Q_m}  G^{-1}(u) du = \sum_{m=1}^M \mathbb{E}[\tilde{v}| s_m] I_m q(s_m),
\end{split}
\end{equation}
where $I_m := \frac{1}{q(s_m)} \int_{Q_{m-1}}^{Q_m} G^{-1}(u) du$. Incorporate the information acquisition cost, 
the insider's optimization problem is 
\begin{align}
&\max_{p(\cdot|\cdot), q(\cdot)} \sum_{m=1}^M \mathbb{E}[\tilde{v}| s_m] I_m q(s_m) - \lambda \Big\{\sum_{m=1}^M \sum_{n=1}^N \log \big(p(v_n|s_m) \big) p(v_n| s_m) q(s_m) -\sum_{n=1}^N \log \big(p(v_n)\big) p(v_n)\Big\},\label{insider_prob_disc}\\
& \text{subject to }\nonumber\\
 &\sum_{n=1}^N p(v_n| s_m) =1, \quad \text{for } 1\leq m\leq M,\label{con:p_prob}\\
 & \sum_{m=1}^M p(v_n| s_m) q(s_m) = p(v_n), \quad \text{for } 1\leq n\leq N.\label{con:bayesian}
\end{align}

The problem \eqref{insider_prob_disc} is the discrete analogue of \eqref{obj_cont}. To see the connection between these two problems, let us fix a signal state $s_m$ and let the number of signals go to infinite. In the limit, the probability of each signal, in particular $q(s_m)$, approaches zero. We then expect that 
\begin{equation}\label{lim_IM}
\lim_{M\rightarrow \infty} I_m = \lim_{M\rightarrow \infty} \frac{1}{q(s_m)} \int_{Q_{m-1}}^{Q_m} G^{-1}(u) du = G^{-1}(Q(s_m)).
\end{equation}
Changing the summations over $M$ in \eqref{insider_prob_disc} to integrals, we obtain the objective \eqref{obj_cont} for the continuous signal case.

The following result summarizes the optimal signal structure when only $M$ signal states are allowed.
\begin{proposition}
    \label{prop:disc_gen}
    The optimal posteriors for the signal state $s_m$ with $q(s_m)>0$ are given by 
    \begin{equation}
        \label{opt:post_disc}
        p(v_n| s_m) = \frac{e^{(v_n I_m + \mu_n)/\lambda}}{\sum_{n'=1}^N e^{(v_{n'}I_m + \mu_{n'})/\lambda}}, \quad 1\leq n\leq N, 1\leq m\leq M.
    \end{equation}
    In the previous equation, the vector $(\mu_1, \dots, \mu_N)$ exists and is unique (up to an additive constant) such that the Bayes plausibility \eqref{con:bayesian} is satisfied for given $q(s_1), \dots, q(s_N)$. The optimal value is
    \begin{equation}\label{obj:q_disc}
    \begin{split}
     \max_{q(\cdot)} \Big\{  \lambda \sum_{m=1}^M \log \Big(\sum_{n=1}^N e^{(v_n I_m + \mu_n)/\lambda} \Big) q(s_m) + \lambda \sum_{n=1}^N \log\big(p(v_n)\big) p(v_n) - \sum_{n=1}^N \mu_n p(v_n) \Big\},
    \end{split}
    \end{equation}
    and there exists an optimal $q$.
\end{proposition}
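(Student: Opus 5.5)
The plan is to split the problem \eqref{insider_prob_disc} into two stages. In the inner stage the marginal $q(s_1),\dots,q(s_M)$ is held fixed and we optimize over the posteriors $p(\cdot|s_m)$. In the outer stage we optimize over $q$.

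One preliminary point needs care. The weights $I_m$ depend only on $q$ and on the ordering of the signal states, and that ordering should be the one in which $\mathbb{E}[\tilde v|s_m]$ is increasing, because \eqref{F_s_inv_disc} requires $F_s^{-1}$ to be nondecreasing. I would relabel the signals so that this ordering is the natural one. I would then check that the logit posteriors found below indeed give $\mathbb{E}[\tilde v|s_m]$ increasing in $I_m$, and that $I_m$ is increasing in $m$ since $G^{-1}$ is increasing. This makes the assumed ordering self-consistent. Alternatively, I would note that for any fixed ordering $\sum_m \mathbb{E}[\tilde v|s_m]I_m q(s_m)\le\int_0^1F_s^{-1}G^{-1}du$ by the rearrangement inequality, so optimizing over orderings recovers \eqref{insider_exp_pro_disc}.

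\textbf{Inner stage.} With $q$ and hence $I_m$ fixed, the objective in \eqref{insider_prob_disc} is linear in the variables $\pi_{nm}:=p(v_n|s_m)q(s_m)$ plus $\lambda$ times the entropy term. This term is strictly concave in $\pi$, since $-\sum\pi_{nm}\log(\pi_{nm}/q(s_m))$ is concave. The constraints \eqref{con:p_prob}--\eqref{con:bayesian} are linear: they fix the row sums at $p(v_n)$ and the column sums at $q(s_m)$. So this is exactly an entropy-regularized optimal transport problem between the discrete marginals $p$ and $q$ with cost $-v_nI_m$.

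Introducing multipliers $\nu_m$ and $\mu_n$, the first-order condition gives $\lambda\log p(v_n|s_m)=v_nI_m+\mu_n-\nu_m$. Normalizing via \eqref{con:p_prob} yields \eqref{opt:post_disc}. Existence and uniqueness up to an additive constant of $(\mu_n)$ solving \eqref{con:bayesian} follows from the discrete Sinkhorn theorem \cite[Proposition 4.3]{PeyreCuturi2019}, applied to the positive kernel $e^{v_nI_m/\lambda}$ with marginals $p$ and $q$ restricted to the $m$ with $q(s_m)>0$. Strict concavity plus the constraint qualification (Slater holds, since the independent coupling $p\otimes q$ is interior) shows that the KKT point is the global maximizer.

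Substituting back, $\sum_n p(v_n|s_m)\log p(v_n|s_m)=\frac1\lambda\sum_n p(v_n|s_m)(v_nI_m+\mu_n)-\log\sum_{n'}e^{(v_{n'}I_m+\mu_{n'})/\lambda}$. Summing against $q$ and using \eqref{con:bayesian} collapses the $\mu$ terms to $\sum_n\mu_np(v_n)$. The profit term $\sum_m\mathbb{E}[\tilde v|s_m]I_mq(s_m)$ cancels exactly. What remains is the bracket in \eqref{obj:q_disc}.

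\textbf{Outer stage.} It remains to show that a maximizing $q$ exists. The feasible set is the simplex $\Delta_M$, which is compact, so it suffices to show that the reduced value $V(q)$ is upper semicontinuous, or better, continuous, on $\Delta_M$. This is the main obstacle, because two things can go wrong at the boundary.

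First, $I_m=\frac1{q(s_m)}\int_{Q_{m-1}}^{Q_m}G^{-1}$ is undefined when $q(s_m)=0$, and it can diverge as $q(s_m)\to0$ at the end intervals. Second, the multipliers $\mu(q)$ could blow up.

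I would avoid both issues by working not with the reduced formula but with the inner maximum itself. That is, $V(q)=\max_{\pi\in\Pi(p,q)}\{\sum_{n,m}v_n\pi_{nm}\int_{Q_{m-1}}^{Q_m}G^{-1}du/q(s_m)+\lambda H\text{-terms}\}$. Written in terms of $\pi$ and $J_m:=\int_{Q_{m-1}}^{Q_m}G^{-1}du$, the profit is $\sum_{n,m}v_n(\pi_{nm}/q_m)J_m$, and each $\pi_{nm}/q_m\in[0,1]$. Moreover, $|J_m|\le\int_{Q_{m-1}}^{Q_m}|G^{-1}|\to0$ as $q_m\to0$ because $G^{-1}\in L^1$. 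Hence terms with vanishing $q_m$ contribute vanishing profit, and their entropy terms $q_m\sum_np\log p$ are bounded by $q_m\log N$.

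With these bounds, the objective is jointly continuous in $(\pi,q)$ on the compact graph of the correspondence $q\mapsto\Pi(p,q)$. This correspondence is continuous: it is upper hemicontinuous trivially, and lower hemicontinuous via rescaled couplings. Berge's maximum theorem then gives continuity of $V$ on $\Delta_M$, and hence a maximizer $q^*$. The formula \eqref{opt:post_disc} applies on the states with $q^*(s_m)>0$.
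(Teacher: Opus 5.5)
Your proof is correct and follows the paper's route: first-order conditions of the Lagrangian give the logit posteriors, Sinkhorn scaling (existence and uniqueness up to a constant) pins down the multipliers, substitution gives \eqref{obj:q_disc}, and compactness of the simplex plus continuity gives an optimal $q$. Your additions go beyond the paper, which never addresses the signal-state ordering behind $I_m$ and simply asserts continuity of \eqref{obj:q_disc} on the closed simplex even though $I_m$ is undefined at $q(s_m)=0$. These are the rearrangement argument for the ordering and the Berge argument on the primal coupling, working with $J_m=\int_{Q_{m-1}}^{Q_m}G^{-1}(u)\,du$, which give exactly the rigor the paper omits.
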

The optimal posterior in \eqref{opt:post_disc} and the objective in \eqref{obj:q_disc} are the discrete analogue of \eqref{opt:post_cont} and \eqref{obj:q_cont} respectively. 

Come back to our question whether continous signals are optimal for discretely distributed payoffs.  The following result provides the answer.

\begin{proposition}
    \label{prop:cont_optimal}
    When the payoff has a discrete distribution, continuous signals are optimal. The optimal posterior, optimal value, and the conditional expected payoff are presented in Proposition \ref{prop:opt_cont}
\end{proposition}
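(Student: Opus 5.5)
Our plan is to compare every $M$-point signal with continuous signals through the dual (Lagrangian) representation of the value. For a fixed multiplier vector $\mu=(\mu_1,\dots,\mu_N)$, define
\[
\Phi(x;\mu):=\lambda \log\Big(\sum_{n=1}^N e^{(v_n x+\mu_n)/\lambda}\Big),
\]
which is convex in $x$ as a log-sum-exp of affine functions.

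First, we show that the optimal value of the discrete problem in Proposition \ref{prop:disc_gen} is bounded above by its Lagrangian dual. Fix an optimal discrete signal $q$ with multipliers $\mu^d$. Weak duality for the concave program \eqref{insider_prob_disc}--\eqref{con:bayesian}, in the variables $p(\cdot|s_m)$ with $q$ fixed, gives that its value equals
\[
\sum_m q(s_m)\,\Phi(I_m;\mu)+\lambda\sum_n p(v_n)\log p(v_n)-\sum_n\mu_n p(v_n)
\]
evaluated at $\mu=\mu^d$, and is at most this expression for any $\mu$. The same holds in the continuous case: for any continuous $\tilde s$ the continuous value equals $\mathbb{E}[\Phi(\tilde z;\mu^c)]+\lambda\sum_n p\log p-\sum_n\mu^c_n p(v_n)$ with $\tilde z\sim G$, by Proposition \ref{prop:opt_cont}. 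Moreover, the continuous value is the minimum over $\mu$ of $D_c(\mu):=\mathbb{E}[\Phi(\tilde z;\mu)]+\lambda\sum_n p\log p-\sum_n\mu_n p(v_n)$. To see this, note that $D_c$ is convex in $\mu$ and its first-order condition $\mathbb{E}[p(v_n|\tilde z)]=p(v_n)$ is exactly the Bayes plausibility that determines $\mu^c$. Hence
\[
V_c=\min_\mu D_c(\mu).
\]

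Second, we use Jensen's inequality. Let $\tilde z\sim G$ and let $\tilde m$ be the index with $Q_{m-1}<G(\tilde z)\le Q_m$. By definition of $I_m$, we have $I_m=\mathbb{E}[\tilde z\mid \tilde m=m]$. Convexity of $\Phi(\cdot;\mu)$ then gives, for every $\mu$,
\[
\sum_m q(s_m)\Phi(I_m;\mu)\le \mathbb{E}[\Phi(\tilde z;\mu)],
\]
so $D_d(\mu)\le D_c(\mu)$ for every $\mu$. The step that needs care is the direction of the duality. The discrete value is
\[
V_d=\min_\mu D_d(\mu)=D_d(\mu^d),
\]
because $\mu^d$ solves the first-order (Bayes plausibility) condition of the convex function $D_d$. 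Therefore
\[
V_d = D_d(\mu^d)\le D_d(\mu^c)\le D_c(\mu^c)=V_c .
\]
The first inequality uses the minimality of $D_d$ at $\mu^d$, and the second is the Jensen comparison. This shows that any discrete signal, including the optimal one for each $M$, is weakly dominated by the value in \eqref{opt_value:cont}.

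Third, we handle mixed and general signals by the same conditioning argument, which we expect to be the main technical obstacle. Any signal induces a quantile assignment $u\mapsto$ signal with atoms, and these atoms can be spread uniformly over their quantile intervals. Equivalently, we randomize by adding an independent uniform variable, which turns the signal into a continuous one $\tilde s'$ whose $G^{-1}(Q(\tilde s'))$ refines the original signal. The Jensen step then applies conditionally on each atom, while the continuous parts are unchanged. We also need to check that the discrete dual attains its minimum, i.e.\ that $\mu^d$ exists when some $q(s_m)=0$ or when the $I_m$ are degenerate; there we restrict to states with $q(s_m)>0$, as in Proposition \ref{prop:disc_gen}.

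Finally, strict convexity of $\Phi$ in $x$ (whenever the $v_n$ are not all equal) makes the Jensen inequality strict whenever $\tilde z$ is non-degenerate on some cell, which holds for every finite $M$. So continuous signals are strictly better, and the optimal objects are exactly those described in Proposition \ref{prop:opt_cont}.
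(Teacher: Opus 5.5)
Your proof is correct, and it takes a genuinely different route from the paper's.

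\textbf{The paper's route.} The paper proves monotonicity in the number of signal states. It writes $V^M(p)=\min_\mu\{\widehat V^M(\mu)-\sum_n\mu_n p(v_n)\}$, with the maximization over $q$ placed inside $\widehat V^M(\mu)$. It then splits one atom of the optimal $q^M$ into two adjacent quantile cells and uses strict convexity of $x\mapsto\log\sum_n e^{(v_nx+\mu_n)/\lambda}$ to get $\widehat V^M(\mu)<\widehat V^{M+1}(\mu)$ for each fixed $\mu$. Taking the infimum over $\mu$ gives $V^M\le V^{M+1}$. The step from this monotonicity to optimality of continuous signals is left to the heuristic limit \eqref{lim_IM}.

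\textbf{Your route and what it buys.} You hold $q$ fixed and apply conditional Jensen on all quantile cells at once, using $I_m=\mathbb{E}[\tilde z\mid\tilde m=m]$. This gives $D_d(\mu;q)\le D_c(\mu)$ for every $\mu$, which you then evaluate at $\mu^c$. Three things follow:
\begin{enumerate}
\item[(a)] Every $M$-point signal is compared directly with the continuous value, so no limit in $M$ is needed.
\item[(b)] You get the strict inequality $V^M<V_c$ for every finite $M$.
\item[(c)] You only use duality in the posterior variables with $q$ fixed, where the program is concave with linear constraints. The paper's $\min_\mu\max_q$ representation of $V^M$ tacitly exchanges $\max_q$ and $\min_\mu$ in a problem that is not obviously concave in $q$, since the $I_m$ depend on $q$ nonlinearly.
\end{enumerate}

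\textbf{Your chain needs less than you state.} The bound $V_d(q)\le D_d(\mu^c;q)$ is plain weak duality. So neither the existence of $\mu^d$ nor the identity $V_d=D_d(\mu^d)$ is needed. The only equality used is $V_c=D_c(\mu^c)$, which holds because the logit posterior with $\mu^c$ is primal feasible.

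\textbf{What the paper's route displays.} It shows the monotonicity $V^M\le V^{M+1}$ behind Figure \ref{fig:conv_sym}, which you do not state. Your fixed-$q$ device recovers it: for a refinement $q'$ of $q$,
\[
V(q)\le D_d(\mu^{q'};q)\le D_d(\mu^{q'};q')=V(q').
\]

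\textbf{Your third step on mixed signals.} This goes beyond the statement and is only sketched. One clean way to make it rigorous:
\begin{enumerate}
\item[(1)] Couple $\tilde z\sim G$ comonotonically with $\mathbb{E}[\tilde v\mid\tilde s]$, and conditionally independently of $\tilde v$ given $\tilde s$. Then the profit equals $\mathbb{E}[\tilde v\tilde z]$.
\item[(2)] By data processing, $I(\tilde z;\tilde v)\le I(\tilde s;\tilde v)$.
\item[(3)] The Gibbs variational inequality, applied pointwise in $\tilde z$, bounds $\mathbb{E}[\tilde v\tilde z]-\lambda I(\tilde z;\tilde v)$ by $D_c(\mu^c)=V_c$.
\end{enumerate}
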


To understand the logic behind this result, recall the optimal value \eqref{obj:q_disc}. the informed trader picks the signal marginal $q$ to maximize the first term. The signal marginal is not only used to sum over different log terms, it also impacts $I_m = \frac{1}{q(s_m)} \int_{Q_{m-1}}^{Q_m} G^{-1}(du)$. For a signal state $s_m$, the informed trader can replace it by two new signal states $\underline{s_m}$ and $\overline{s_m}$ such that their associated probabilities satisfy $\underline{q_m} + \overline{q_m} = q(s_m)$, meanwhile all other signal states remain the same. Importantly, for given $(\mu_1, \dots, \mu_N)$, the function $x\mapsto \log \big(\sum_{n=1}^N e^{(v_n x+\mu_n)/\lambda} \big)$ is strictly convex. Therefore, it is strictly better off by splitting one signal states into two, i.e.,
\[
 \log \Big(\sum_{n=1}^N e^{(v_n I_m +\mu_n) /\lambda} \Big) q(s_m) < \log \Big(\sum_{n=1}^N e^{(v_n \underline{I_m}+\mu_n)/\lambda} \Big) \underline{q_m} + \log \Big(\sum_{n=1}^N e^{(v_n \overline{I_m}+\mu_n)/\lambda} \Big) \overline{q_m},
\]
where $\underline{I_m} := \frac{1}{\underline{q_m}} \int_{Q_{m-1}}^{Q_{m-1} + \underline{q_m}} G^{-1}(u) du$ and $\overline{I_m}:= \frac{1}{\overline{q_m}} \int_{Q_m - \overline{q_m}}^{Q_m} G^{-1}(u) du$ satisfy 
\[
I_m = \underline{I_m} \frac{\underline{q_m}}{q_m} + \overline{I_m} \frac{\overline{q_m}}{q_m}.
\]
Therefore, for given $(\mu_1, \dots, \mu_N)$, splitting signals is welfare improving. 

Consider Lagrangian multipliers $(\mu_1, \dots, \mu_N)$ as dual variables, we obtain from the argument above that the dual value function improves as the number of signals increases. Finally, we use a duality argument to show that the primal value function of the prior $p$ also weakly improves  when the number of signals increases, leading to the optimality of continuously distributed signals. 

When there is a finite number of possible actions, a recommendation lemma \cite[Lemma 1]{MatejkaMcKay2015} shows that it is optimal to design a signal one-to-one mapping to each action. This is because information acquisition is costly, it is suboptimal to distinguish two signals which lead to the same action and payoff. Therefore, the number of signals is equal to the number of actions, and discrete signals are optimal. In our setting, the conditional expected payoff $\mathbb{E}[\tilde{v}|\tilde{s}]$ corresponds to actions. Each realization of $\mathbb{E}[\tilde{v}|\tilde{s}]$ corresponds to a different trading strategy in \eqref{opt:trading} and a different conditional profit.  This matches the distribution of noise trades, which is also continuous. Therefore, the number of possible actions is infinite, and number of signals is infinite as well.

\subsection{Examples}\label{sec:cont_example}

In this section, we present comparative statics for several examples of the conditional expected payoff distribution. We will illustrate how the trade-off between the profit potential and the information leakage cost interact with the information acquisition cost in determining the conditional expected payoff.

Let us start from the case with two payoff states and a symmetric prior, i.e., $N=2$ and $p(v_1) = p(v_2) =0.5$ with $v_1 < v_2$. Proposition \ref{prop:opt_cont} implies that it is without loss of generality to consider a normal signal $\tilde{z}\sim N(0,\sigma_Z^2 T)$. Meanwhile, Proposition \ref{prop:two_state_sym} in the appendix shows that $\mu_1 = \mu_2=0$. Therefore, \eqref{exp_payoff:cont} implies that as $\lambda$ converges to zero, the posterior expected payoff $\mathbb{E}[\tilde{v}|\tilde{z}]$ converges to the prior. Specifically, $\lim_{\lambda \rightarrow 0} \mathbb{P}\big(\mathbb{E}[\tilde{v}| \tilde{z}<0] = v_1 \big) = \lim_{\lambda \rightarrow 0} \mathbb{P}\big( \mathbb{E}[\tilde{v}|\tilde{z}>0] = v_2\big) =0.5$. When the information acquisition cost vanishes, the posterior expected payoff is the low payoff state $v_1$ when the informed trader receives a negative signal and the high payoff state $v_2$ when the informed trader observes a positive signal. When $\lambda$ converges to infinity, \eqref{exp_payoff:cont} implies that the posterior expected payoff converges to the prior expectation $(v_1+v_2)/2$. When the information acquisition cost is infinite, the signal is not informative, hence the posterior expectation does not provide additional information beyond the prior mean. 

For an intermediate value of $\lambda$, the density of the posterior expected payoff $\mathbb{E}[\tilde{v}|\tilde{z}]$ is given by \eqref{two_state_density} with $\mu_1 = \mu_2 = 0$. From this expression, we can see that $\frac{\lambda}{\sigma_Z \sqrt{T}}$ is a sufficient statistic for the distribution of the expected payoff. Therefore, the information acquisition cost parameter has exactly the opposite effect as the noise trade volatility. When the information acquisition becomes less costly ($\lambda$ decreases), it has the same effect on the posterior expected payoff as an increasing noise trade volatility. When the noise trade volatility increases, inference by market makers becomes less precise, the informed trader can better hide her informed trades among noise trades and the information leakage cost is reduced. Therefore, the informed trader can acquire a more precise signal to exploit the profit potential. The limit $\lambda\rightarrow 0$ is equivalent to the limit $\sigma_Z \sqrt{T}\rightarrow \infty$. In such a limit, the noise trade is infinitely volatile, the informed trader acquires the perfect signal to maximize the variance of the posterior expected payoff, which indicates the highest profit potential.

\begin{center}
{\bf \autoref{fig:two_state_sym_lam} Here}
\end{center}

Figure \ref{fig:two_state_sym_lam} presents one example with two payoff states and a symmetric prior, but different information acquisition cost parameter $\lambda$. Panel (A) presents the densities of $\mathbb{E}[\tilde{v}|\tilde{z}]$ for different $\lambda$. They are all distributed symmetrically around zero, which is the prior expected payoff. Therefore, the skewness of $\mathbb{E}[\tilde{v}|\tilde{z}]$ is identically zero for different $\lambda$, as the black solid line in Panel (D) shows. However, when $\lambda$ is small, information acquisition is cheap. The informed trader acquires a precise signal, the resulting distribution of $\mathbb{E}[\tilde{v}|\tilde{z}]$ has a large variance and is close to the prior distribution of $\tilde{v}$. When $\lambda = 1$, Panel (A) shows that the density of $\mathbb{E}[\tilde{v}|\tilde{s}]$ peaks at $-2$ and $2$, which are the value of $v_1$ and $v_2$, respectively. However, the distribution of $\mathbb{E}[\tilde{v}|\tilde{z}]$ is distinct from the noise trade distribution $N(0,\sigma_Z^2 T)$, generating a large Wasserstein-2 distance between the centered conditional payoff distribution $F_s^c$ and the normal distribution of noise trades. Therefore, even though the informed trader benefits from a large profit potential from a precise signal and trades aggressively, she also faces high information leakage cost,  because the market markers learn more from her aggressive trades. In the limit $\lambda\rightarrow 0$, the profit potential dominates the information leakage cost. The informed trader acquires a prefect signal. The variance of $\mathbb{E}[\tilde{v}|\tilde{z}]$ is the variance of the prior payoff distribution, which is the largest possible variance of the conditional expected payoff.  

As $\lambda$ increases, information acquisition becomes increasingly costly, hence signal is less informative about payoffs. As Panel (B) of Figure \ref{fig:two_state_sym_lam} shows, the posterior probability of payoffs becomes less sensitive in signal, as $\lambda$ increases. As a result, Panel (A) shows that the distribution of $\mathbb{E}[\tilde{v}|\tilde{z}]$ becomes more concentrated around the prior mean and its variance decreases. As $\lambda$ increases, Panel (C) shows that the Wasserstein-2 distance between $F^c_s$ and $G$ has a U-shape. The reason is the following. When $\lambda$ is close to zero, the distribution of $\mathbb{E}[\tilde{v}|\tilde{s}]$ is close to the prior distribution of $\tilde{v}$ and far away from the distribution of noise trade $Z_T$. As $\lambda$ increases, the distribution of $\mathbb{E}[\tilde{v}|\tilde{s}]$ increasingly resembles the distribution of $Z_T$, reducing their Wasserstein-2 distance. However, as $\lambda$ continues to increase, difference between the variance of $\mathbb{E}[\tilde{v}|\tilde{s}]$ and $Z_T$ increases, which becomes the dominating component in $W_2(F_s^c, G)$. When we exclude the difference of mean and standard deviation, i.e., the Gelbrich lower bound, the orange dash line in Panel (C) shows that $W_2^2(F^c_s, G) - \mathbb{E}[\tilde{v}]^2 - \big(\text{std}(\mathbb{E}[\tilde{v}|\tilde{s}]) - \sigma_Z \sqrt{T}\big)^2$ vanishes as $\lambda$ increases. This implies that $\mathbb{E}[\tilde{v}|\tilde{s}]$ increasingly resembles a norm distribution with mean $\mathbb{E}[\tilde{v}]$ and standard deviation $\text{std}(\mathbb{E}[\tilde{v}|\tilde{s}])$. In particular, the orange dashed line in Panel (D) shows that the Kurtosis of $\mathbb{E}[\tilde{v}|\tilde{z}]$ converges to 3 (the Kurtosis of normal distributions) as $\lambda$ increase. Therefore, the informed trader extracts less profit potential from less precise signal, meanwhile the information leakage also reduces due to less aggressive trades. In the limit $\lambda \rightarrow \infty$, the informed trader no longer acquires signal, the conditional expected payoff $\mathbb{E}[\tilde{v}|\tilde{z}]$ converges to the prior expectation $\mathbb{E}[\tilde{v}]$, the information advantage disappears and the informed trader's value converges to zero.

\begin{center}
{\bf \autoref{fig:two_state_sym_sig} Here}
\end{center}

In the two-state symmetric prior case, $\frac{\lambda}{\sigma_Z \sqrt{T}}$ is a sufficient statics for the distribution of the conditional expected payoff. Therefore, decreasing $\sigma_Z$ has an equivalent impact as increasing $\lambda$. Figure \ref{fig:two_state_sym_sig} presents an example with different $\sigma_Z$. As $\sigma_Z$ decreases, it becomes harder for the informed trader to hide her order among noise trades. Panel (A) shows that the conditional expected payoff becomes more concentrated around the prior mean. Both the Wasserstein-2 distance $W_2(F_s^c, G)$ and the version excluding Gelbrich lower bound vanish, as $\sigma_Z$ approaches zero. The orange dashed line in Panel (D) shows that the Kurtosis of the conditional expected payoff also converges to 3. Meanwhile, as $\sigma_Z$ decreases, Panel (B) shows that the posterior probability of states become less sensitive in signal, i.e., signal becomes less informative. Even if the informed trader has precise information on the payoff, she cannot trade on it aggressively due to severe information leakage in a thin market with low volume of noise trades. Hence, the informed trader optimally acquires low precision signal to save on the information acquisition cost. 

\begin{center}
{\bf \autoref{fig:conv_sym} Here}
\end{center}

Figure \ref{fig:conv_sym} compares discrete and continuous signals. As we have seen from Proposition \ref{prop:cont_optimal}, as the number of signal state $M$ increases, the optimal value improves. This is confirmed by Panel (A). Such improvement in the optimal value is an joint effect of changes in the variance of the conditional expected payoff, the squared Wasserstein-2 distance between $F^c_s$ and $G$, and the information acquisition cost. As $M$ increases, Panel (B) shows that the variance of $\mathbb{E}[\tilde{v}|\tilde{s}]$ reduces. This is because when $M$ increases, there are more signal states with the posterior mean $\mathbb{E}[\tilde{v}|\tilde{s}]$ close to the prior mean $\mathbb{E}[\tilde{v}]$. In contrast, if there were only two signal states $s_1$ and $s_2$, the informed trader would spread out $\mathbb{E}[\tilde{v}|s_1]$ and $\mathbb{E}[\tilde{v}| s_2]$, in order to benefit from their associated profit potential. Hence, the informed trader would not choose signals such that either $\mathbb{E}[\tilde{v}|s_1]$ or $\mathbb{E}[\tilde{v}|s_2]$ is close to the prior mean $\mathbb{E}[\tilde{v}]$.  As a result, the variance of $\mathbb{E}[\tilde{v}|\tilde{s}]$ is larger when $\tilde{s}$ has only two states. As $M$ increases, the distribution of $\mathbb{E}[\tilde{v}|\tilde{s}]$ increasingly resemble a continuous distribution, hence $W^2_2(F_s^c, G)$ reduced, as Panel (C) shows. Meanwhile, when the informed trader chooses the optimal posterior and the signal marginal, Panel (D) shows that the information acquisition cost decreases with $M$ when $M\geq 3$. When $M$ increases, the informed trader redesign the posterior and the signal marginal, achieving a marginally lower information acquisition cost, even though the number of signals states is more. 

The reduction in both costs, mainly from the reduction of the squared Wasserstein-2 distance, dominates the decrease in the variance of conditional expected payoff, leading to a higher insider's value as the number of signal states increases.

We present several comparative statics when the prior distribution is asymmetric in Appendix \ref{app:asy_pri}. All comparative statics discussed above hold for the asymmetric prior case. Two new features emerge: i) due to the asymmetric prior, the distribution of $\mathbb{E}[\tilde{v}|\tilde{s}]$ is asymmetric as well. Therefore, $\mu_1$ and $\mu_2$ are nonzero, and \eqref{two_state_density} shows that $\frac{\lambda}{\sigma_Z \sqrt{T}}$ is no longer a sufficient statics for the conditional expected payoff distribution; ii) the skewness of $\mathbb{E}[\tilde{v}|\tilde{s}]$ converges to zero (the skewness for normal distribution) as $\lambda \rightarrow \infty$ or $\sigma_Z \rightarrow 0$.

\section{Continuously distributed payoff}\label{sec:cont_payoff}

Having understand the optimal signal structure for discretely distributed payoffs, we consider continuously distributed payoffs in this section. We show that the main result for discretely distributed payoffs (Proposition \ref{prop:opt_cont}) has a natural extension in the case with continuously distributed payoffs. 

Consider a continuously distributed payoff $\tilde{v}$ with a density $p$. We assume that $\mathbb{E}[\tilde{v}^2]$ is finite so that the informed trader's value in \eqref{signal_str_pro} is finite.\footnote{Jensen's inequality implies that $\mathbb{E}\big[\mathbb{E}[\tilde{v}|\tilde{s}]^2\big] \leq \mathbb{E}[\tilde{v}^2]$. Both the Wasserstein-2 distance and the information acquisition cost are nonnegative. Hence the value of \eqref{signal_str_pro} is bounded from above.} We focus on continuously distributed signals such that $\text{support}(q) \ni s \mapsto \mathbb{E}[\tilde{v}|s]$ is strictly increasing.\footnote{Proposition \ref{prop:cont_v} below show that the convexity structure of the informed trader's objective function with respect to signals remains the same when the payoff is continuously distributed. Therefore, the same statements of Proposition \ref{prop:cont_optimal} hold for continuously distributed payoffs, indicating that continuous signals are optimal.} Extending Lemma \ref{lem:pro_cont} to the current case, the informed trader's optimization problem is 
\begin{align}
 &\max_{p(\cdot|\cdot), q(\cdot)} \int \mathbb{E}[\tilde{v}|s] G^{-1}\big(Q(s)\big) q(s) \, ds - \lambda \Big\{\iint \log \big(p(v|s) \big) p(v|s) q(s) \, dsdv - \int \log \big( p(v)\big) p(v) \,dv\Big\},\label{obj_cont_v}\\
 &\text{subject to} \nonumber\\
 & \int p(v|s)\, dv = 1, \quad \text{for any } s,\label{con:p_cont_v}\\
 & \int p(v|s) q(s) \, ds = p(v), \quad \text{for any } v. \label{con:Baye_cont_v}
 \end{align}
Compared with \eqref{obj_cont} and its constraints, only the summations over different payoff states are replaced by integrals above. 

The following result is a natural extension of Proposition \ref{prop:opt_cont}.
\begin{proposition}\label{prop:cont_v}
Suppose that $\mathbb{E}[\tilde{v}^2]<\infty$. The following statements hold.
\begin{enumerate}
\item[(i)] The optimal posterior probabilities are 
\begin{equation}
    \label{opt:post_vont_v}
    p(v | s) = \frac{e^{\frac{1}{\lambda}\big[v G^{-1}(Q(s)) + \mu(v)\big]}}{\int e^{\frac{1}{\lambda}\big[v' G^{-1}(Q(s)) + \mu(v')\big]} dv'}, \quad s \in \text{support}(q), v\in \text{support}(p).
\end{equation}
In the previous equation, the function $v \mapsto \mu(v)$ exists and is unique up to additive constants.
\item[(ii)] Any continuous signals $\tilde{s}$, such that $s\mapsto \mathbb{E}[\tilde{v}|s]$ is strictly increasing, achieves the same optimal value 
\begin{equation}
        \label{opt_value:cont_v}
         \lambda \, \mathbb{E}\Big[\log \Big(\int e^{(v \tilde{z} + \mu)/\lambda} dv\Big)\Big] + \lambda \int \log\big(p(v)\big) p(v)\, dv - \int \mu(v) p(v) \,dv,
    \end{equation}
    where $\tilde{z}$ is a random variable with normal distribution $N(0,\sigma_Z^2 T)$.
\item[(iii)] The conditional expected payoff $\mathbb{E}[\tilde{v}|\tilde{s}]$ has the same distribution as 
\begin{equation}\label{exp_payoff_cont_v}
 \frac{\int v e^{(v \tilde{z} + \mu(v))/\lambda} dv} {\int e^{(v \tilde{z} + \mu(v))/\lambda} dv}. 
\end{equation}
In particular, any continuous signal $\tilde{s}$ leads to the same expected payoff distribution.
\end{enumerate}
\end{proposition}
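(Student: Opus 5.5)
The plan is to follow the proof of Proposition \ref{prop:opt_cont}, replacing sums over payoff states by integrals against Lebesgue measure in $v$. The Bayes-plausibility multipliers $(\mu_n)$ then become a function $v\mapsto\mu(v)$.

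\textbf{Step 1: pointwise optimization in the posterior.} Fix a continuous signal with CDF $Q$ and set $z(s):=G^{-1}(Q(s))$. Introduce a multiplier $\nu(s)$ for \eqref{con:p_cont_v} and a multiplier function $\mu(v)$ for \eqref{con:Baye_cont_v}. Using $\mathbb{E}[\tilde{v}|s]=\int v\,p(v|s)\,dv$, the Lagrangian is concave in $p(\cdot|s)$ for each $s$. Its first-order condition gives $\lambda\log p(v|s)=v\,z(s)+\mu(v)-\nu(s)$, and normalization yields \eqref{opt:post_vont_v}.

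Substituting back, the entropy term and the profit term combine into $\lambda\log\int e^{(v z(s)+\mu(v))/\lambda}dv$. The Bayes constraint contributes the adjustment $-\int\mu(v)p(v)\,dv$, and the prior entropy term remains. This is the continuous analogue of \eqref{obj:q_cont}. To avoid justifying a formal Lagrangian, I would phrase it as weak duality. For any admissible $p(\cdot|s)$, the Gibbs variational inequality
\[
\int\big(vz+\mu(v)\big)\pi(v)\,dv-\lambda\int\pi\log\pi\,dv\le\lambda\log\int e^{(vz+\mu(v))/\lambda}\,dv
\]
holds, with equality exactly at the logit density. Integrating this in $q(s)\,ds$ and using \eqref{con:Baye_cont_v} to cancel the $\mu$ terms gives an upper bound, which is attained once $\mu$ makes the logit posterior Bayes plausible.

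\textbf{Step 2: existence and uniqueness of $\mu$ via a continuous Sinkhorn problem.} Change variables $s\mapsto z=z(s)$. Since $\tilde s$ is continuous, $Q(\tilde s)$ is uniform and hence $z(\tilde s)\sim N(0,\sigma_Z^2T)$. Bayes plausibility then reads
\[
\int \frac{e^{(vz+\mu(v))/\lambda}}{\int e^{(v'z+\mu(v'))/\lambda}dv'}\,g(z)\,dz=p(v),
\]
where $g$ is the density of $G$. This is the Schr\"odinger/Sinkhorn system with kernel $k(v,z)=e^{vz/\lambda}$ and marginals $p$ and $g$. Existence of positive scalings $a(v)=e^{\mu(v)/\lambda}$ and $b(z)$, unique up to $a\mapsto ca$, $b\mapsto b/c$, would come from R\"uschendorf--Thomsen (1993). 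Equivalently, the solution is the unique minimizer of the relative entropy to $k\,p\otimes g$ over couplings with marginals $p$ and $g$. The required finiteness condition is that $\iint (vz/\lambda)\,p(v)g(z)\,dv\,dz$-type integrals are controlled, which follows from $\mathbb{E}[\tilde v^2]<\infty$ and Gaussian moments of $z$: one checks that the independent coupling $p\otimes g$ has finite relative entropy with respect to the reference measure.

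I expect this step to be the main obstacle. The issues are the unbounded kernel $e^{vz/\lambda}$, establishing attainment, and identifying the dual potentials $\mu$ as measurable functions rather than only $p$-a.e. defined objects. If the direct citation is delicate, I would truncate $v$ to compact sets, solve there, and pass to the limit using the second-moment bound. Crucially, the equation depends on the signal only through the law of $z(\tilde s)$, which is always $G$, so $\mu$ is the same for every continuous signal.

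\textbf{Step 3: invariance of the value and of the law of the posterior mean.} Given Steps 1 and 2, the optimal value equals $\lambda\,\mathbb{E}\big[\log\int e^{(v\tilde z+\mu(v))/\lambda}dv\big]+\lambda\int p\log p-\int\mu p$ with $\tilde z=z(\tilde s)\sim N(0,\sigma_Z^2T)$. This is \eqref{opt_value:cont_v}, and it is independent of the signal. Likewise, $\mathbb{E}[\tilde v|s]$ is the logit mean evaluated at $z(s)$, so its law is that of \eqref{exp_payoff_cont_v}, giving (iii).

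It remains to check consistency with the standing assumption that $s\mapsto\mathbb{E}[\tilde v|s]$ is strictly increasing. The $z$-derivative of the logit mean equals $\lambda^{-1}$ times the posterior variance of $v$, which is positive because the posterior has a density. Since $z(s)$ is increasing, monotonicity holds. This also ensures $F_s$ is continuous and strictly increasing, so Proposition \ref{prop:back_et_al} and the quantile representation \eqref{expected_profit_cont} apply.
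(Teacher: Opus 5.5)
Your proposal is correct and follows essentially the same route as the paper, whose proof just repeats that of Proposition~\ref{prop:opt_cont} with sums over $v_n$ replaced by integrals in $v$. In both, the first-order condition yields the logit posterior, the multiplier $\mu$ comes from the R\"uschendorf--Thomsen solution of the Schr\"odinger/Sinkhorn system, and invariance follows because $G^{-1}(Q(\tilde s))\sim N(0,\sigma_Z^2T)$ for every continuous signal. Your Gibbs-inequality weak-duality argument, the monotonicity check, and the explicit use of $\mathbb{E}[\tilde v^2]<\infty$ to verify the hypothesis of the Schr\"odinger problem add rigor that the paper omits; one technical point there is that the reference measure $e^{vz/\lambda}\,p\otimes g$ has infinite mass unless the prior has sufficiently light Gaussian tails (it fails for the paper's double-exponential example), so to invoke the standard I-projection results it is cleaner to absorb $e^{\sigma_Z^2T v^2/(2\lambda^2)}$ into $a(v)$ and use the probability reference $p(v)\,dv\otimes N(\sigma_Z^2T\,v/\lambda,\sigma_Z^2T)(dz)$, relative to which $p\otimes g$ has relative entropy $\sigma_Z^2T\,\mathbb{E}[\tilde v^2]/(2\lambda^2)<\infty$.
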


When there are infinite payoff states, the Lagrangian multiplier $\mu$ becomes a function of states. For a given signal marginal $q$ and payoff a prior density $p$, the Sinkhorn problem becomes to find $v\mapsto \mu(v)$ so that 
\[
\int \xi(v) k(v,z) \zeta(z) \, dz = p(v), \quad \text{ for any v},
\]
where $\xi(v) = e^{\mu(v)/\lambda}$, $k(v, z) = e^{vz/\lambda}$, and $\zeta(z) = \frac{q(z)}{\int \xi(v) k(v,z) dv}$. The existence and uniqueness (up to additive constants) are obtained by \cite{RuschendorfThomsen1993}. 

Similar to the discretely distributed payoff case, any continuous signal yields the same optimal value. Therefore, without loss of generality, the informed trader can choose a normal signal $\tilde{z}\sim N(0,\sigma_Z^2 T)$, regardless of the payoff distribution. Therefore, restricting to a normally-distributed signal, such as \cite{kyle1984market} and \cite{BackPedersen1998}, may not restrict the generality of their results.

Figures \ref{fig:cont_lam} and \ref{fig:cont_sig} present an example where the payoff follows a double exponential distribution with density $\rho(v) = e^{-v}$ for $v>0$ and $\rho(v) = e^{v}$ for $v<0$. The informed trader chooses the normal signal $\tilde{z}\sim N(0,\sigma_Z^2 T)$. The comparative statics of the expected payoff distribution with respect to $\lambda$ and $\sigma_Z$ are consistent with those for discretely distributed payoffs in the previous section. 

\begin{center}
{\bf \autoref{fig:cont_lam} Here}
\end{center}

Figure \ref{fig:cont_lam} shows that as $\lambda$ increases, the expected payoff distribution increasingly resembles a normal distribution, with the kurtosis decreases from $6$ (the kurtosis for a double-exponential distribution) to $3$ (the kurtosis for a normal distribution). As the information acquisition becomes more costly, optimal signal becomes less informative, profit potential decreases, and the informed trader focuses more on preventing information leakage, so that the expected payoff becomes more similar to a normal distribution, in order to protect the small informational advantage of the informed trader.

\begin{center}
{\bf \autoref{fig:cont_sig} Here}
\end{center}

Figure \ref{fig:cont_sig} illustrates the opposite impact of $\sigma_Z$. As $\sigma_Z$ increase, the expected payoff increasingly resembles the prior distribution, with the kurtosis increases from $3$ to $6$. As the noise trades become more volatile, it becomes easier for informed trades to hide her trades among noise trades. The information leakage cost decreases, the informed trader acquires a more informative signal to increase the profit potential. 

\section{Conclusion}\label{sec:conclusion}
We study a flexible information acquisition problem in \cite{Kyle1985} model. The prior distribution of the asset payoff can be general and the signal choice is flexible; both can have discrete or continuous, including non-normal, distributions. The information acquisition cost is modelled by mutual information, following \cite{Sims1998, Sims2003}. We provide a complete characterization of the optimal information acquisition decision. We show that continuous signals are optimal regardless of the prior distribution of the asset payoff. Moreover, any continuous signals, which lead to a higher posterior expected payoff after a higher signal realization, lead to the same ex-ante value for the informed trader and the same distribution for the posterior expected payoff. Therefore, restricting to a normal signal is without loss of generality. We show that the entropy information acquisition cost mediates between two tensions --- the desire for high trading profits from a precise signal, which generates a high variance of the posterior expected payoff, and the desire to reduce the information leakage cost induced by market makers' inference. 

Although much of the existing literature adopts mutual information to represent information acquisition costs, some of the behavioral implications of this specification are inconsistent with experimental evidence (see, e.g., \cite{Woodford:2012}, \cite{CaplinDean2013}, and \cite{Dean_Neligh:2020, Dean_Neligh:2023}). In response to these findings, several studies propose more flexible formulations of information acquisition costs, including \cite{CaplinDean2013}, \cite{CaplinDeanLeahy2022}, \cite{Hebert_Woodford:2021}, and \cite{Pomatto:2023}. It would be interesting to investigate implications of these more general information acquisition costs on the optimal signal structure in the Kyle model. 

We consider a static information acquisition problem in this paper. When the informed trader can acquire multiple pieces of information during the trading game, such as the setting of \cite{Banerjee2020, Banerjee2022} or \cite{Han2021}, the incentive to acquire information interacts with future information acquisition and trading opportunities. For uniformly posterior-separable cost functions, a dynamic discrete choice problem is studied by \cite{Miao_Xing:2024}. Integrating the forward-backward algorithm introduced therein with the dynamic flexible information acquisition problem in the Kyle model is another interesting future research topic.


\begin{figure}[ht]
\centering
\includegraphics[scale=0.8]{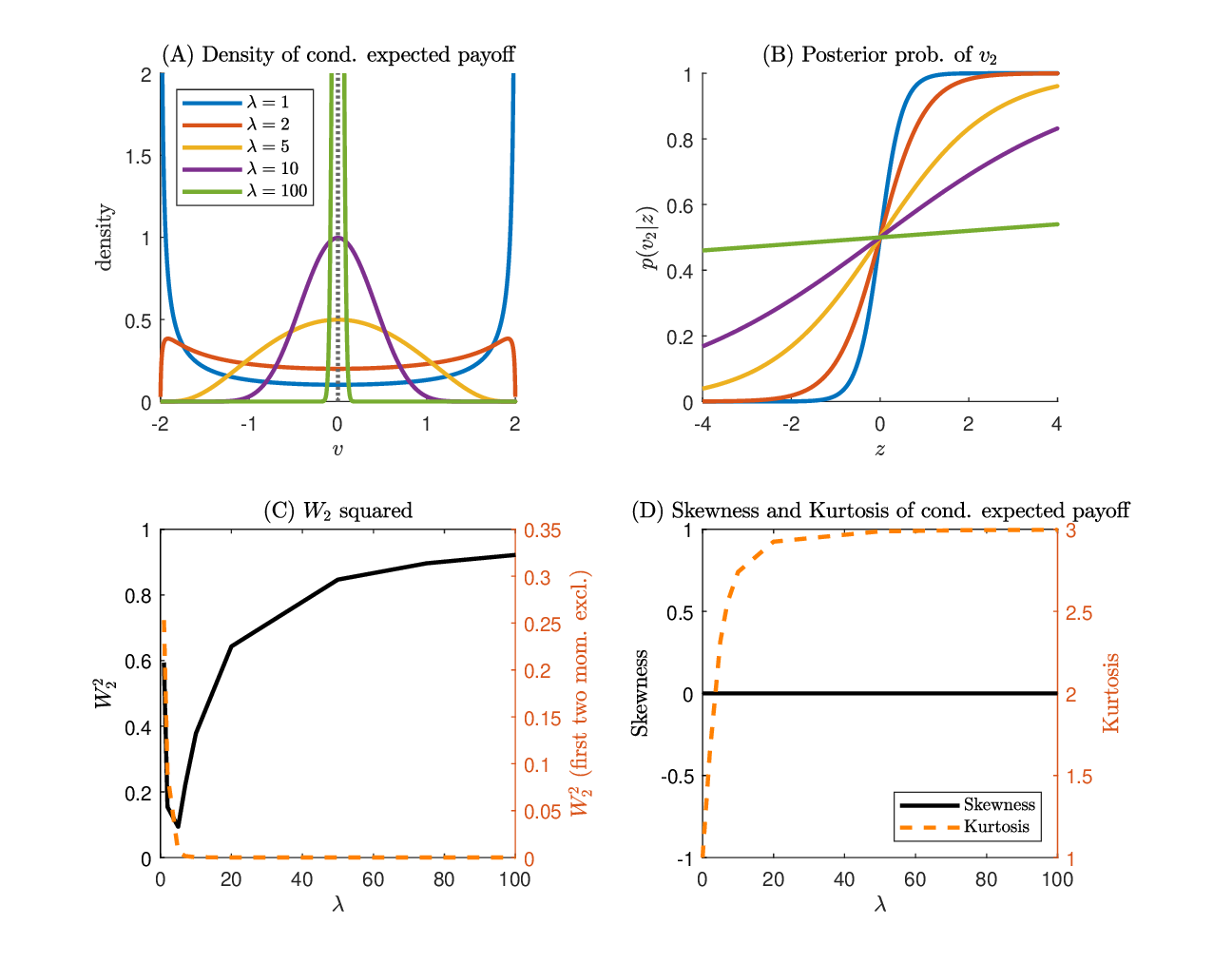}
\caption{Two-state symmetric prior (different $\lambda$)}
    \begin{minipage}{\textwidth}
	\footnotesize
	\vspace{2mm}
This figure presents the distribution of the conditional expected payoff, its skewness, Kurtosis, its $W_2$-distance with noise trade distribution, and the optimal posterior $p(v_2 | z)$ for the two-state symmetric prior case. The signal $\tilde{z}$ has a normal distribution $N(0,\sigma_Z^2 T)$.   
Parameters: $N=2$, $v_1 = -2, v_2 = 2$, $p(v_1) = p(v_2) = 0.5$, $\sigma_Z=1$, and $T=1$.
    \end{minipage}
    \label{fig:two_state_sym_lam}
\end{figure}

\begin{figure}[ht]
\centering
\includegraphics[scale=0.8]{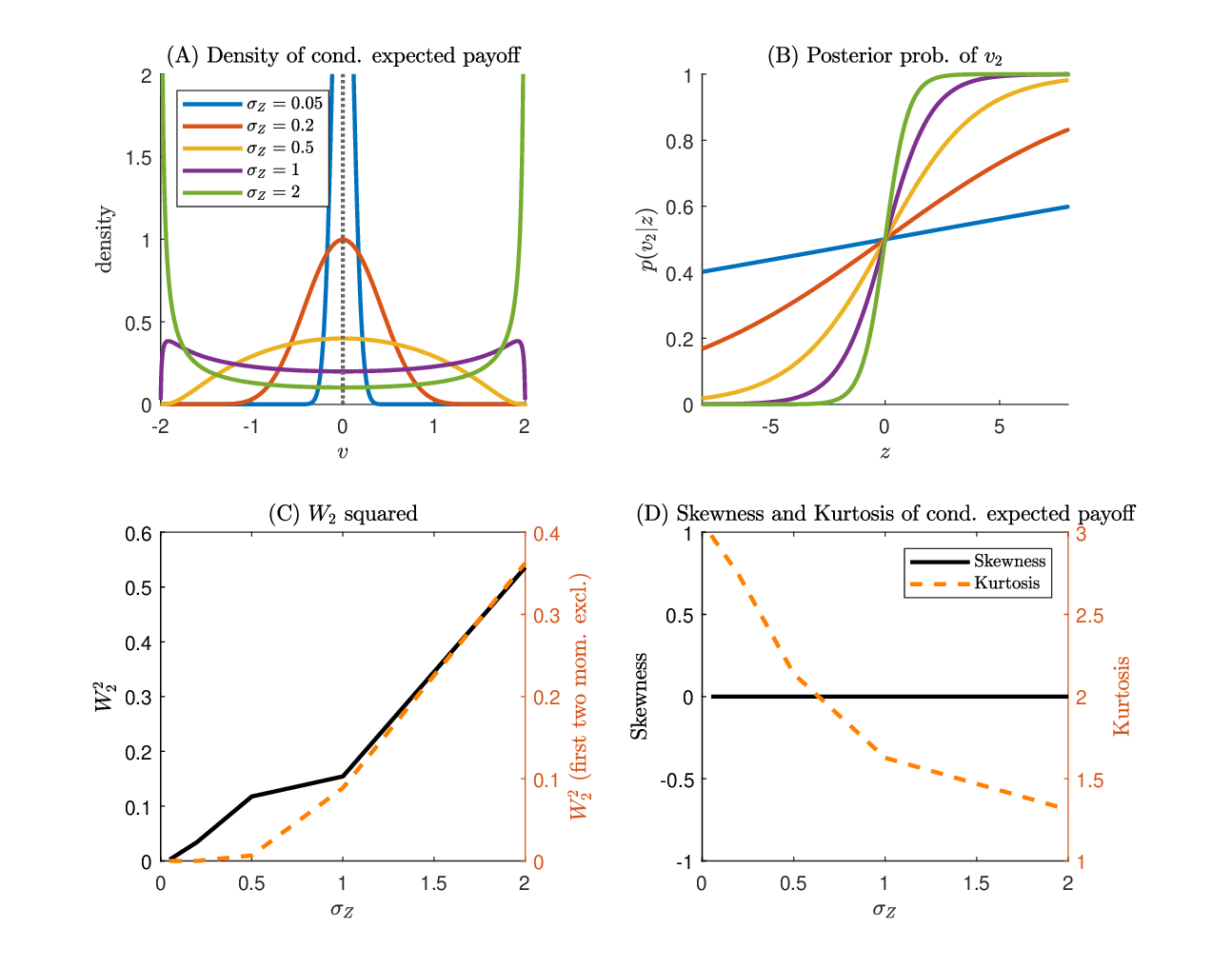}
\caption{Two-state symmetric prior (different $\sigma_Z$)}
    \begin{minipage}{\textwidth}
	\footnotesize
	\vspace{2mm}
This figure presents the distribution of the conditional expected payoff, its skewness, Kurtosis, its $W_2$-distance with noise trade distribution, and the optimal posterior $p(v_2 | z)$ for the two-state symmetric prior case. The signal $\tilde{z}$ has a normal distribution $N(0,\sigma_Z^2 T)$.   
Parameters: $N=2$, $v_1 = -2, v_2 = 2$, $p(v_1) = p(v_2) = 0.5$, $\lambda=2$, and $T=1$.
    \end{minipage}
    \label{fig:two_state_sym_sig}
\end{figure}

\begin{figure}[ht]
\centering
\includegraphics[scale=0.7]{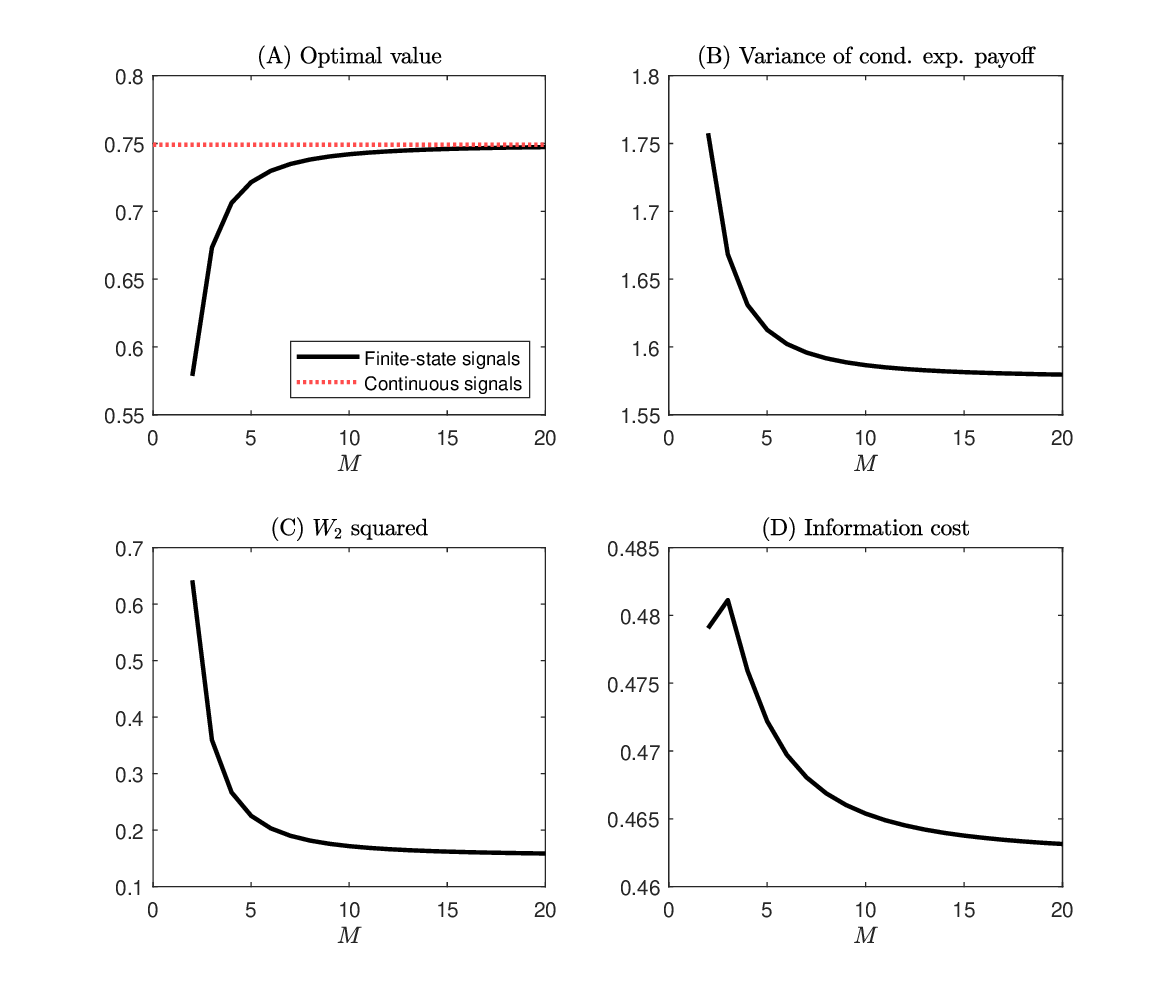}
\caption{Convergence of values (symmetric prior $p(v_1)=0.5$)}

    \begin{minipage}{\textwidth}
	\footnotesize
	\vspace{2mm}
This figure presents the convergence of optimal value and its different components for the informed trader, as the number os signal states $M$ increases.   
Parameters: $N=2$, $v_1 = -2, v_2 = 2$, $p(v_1) = p(v_2) = 0.5$, $\lambda = 2$, $\sigma_Z=1$, and $T=1$.
    \end{minipage}
    \label{fig:conv_sym}
\end{figure}

\begin{figure}[ht]
\centering
\includegraphics[scale=0.6]{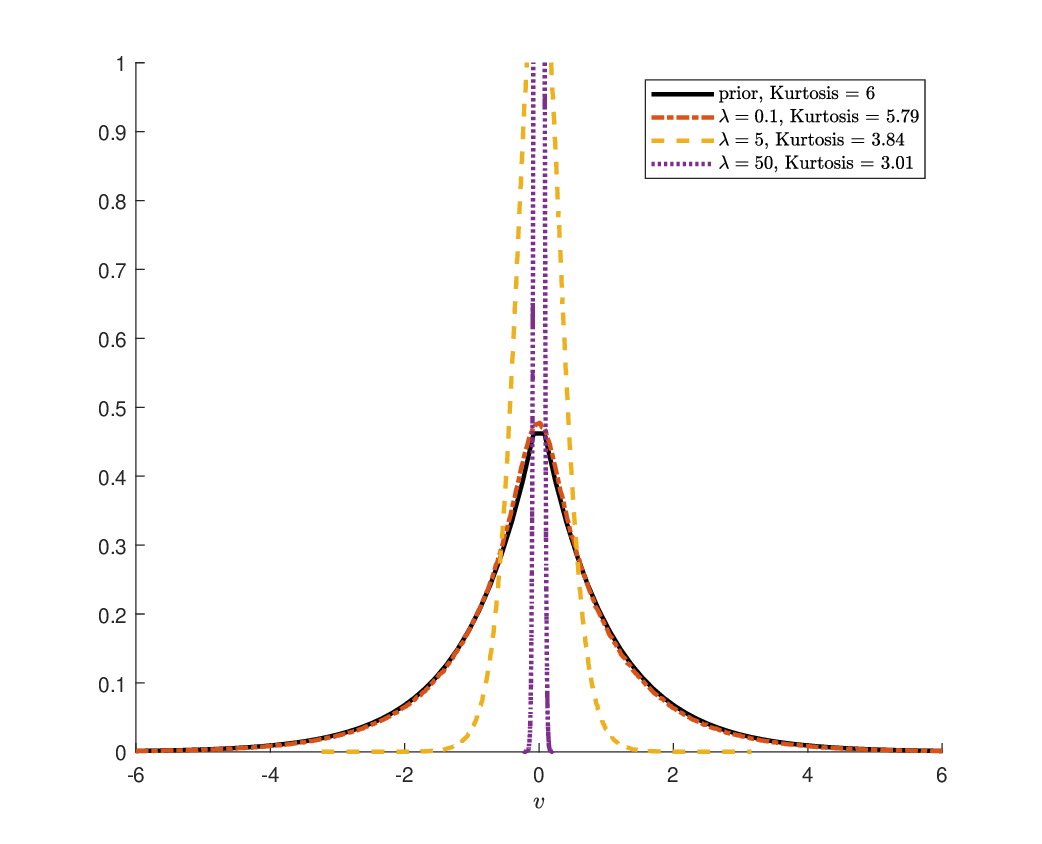}
\caption{Double exponential distributed payoff (different $\lambda$)}
    \begin{minipage}{\textwidth}
	\footnotesize
	\vspace{2mm}
This figure presents the density of the conditional expected payoff and its Kurtosis, when the payoff $\tilde{v}$ has a double exponential distribution.  The signal $\tilde{z}$ has a normal distribution $N(0,\sigma_Z^2 T)$. The density of the prior distribution of $\tilde{v}$ is $\rho(v) = 0.5 e^{-v}$ for $v>0$ and $\rho(v) = 0.5 e^{v}$ for $v<0$. Other parameters are  $\sigma_Z=1$ and $T=1$.
    \end{minipage}
    \label{fig:cont_lam}
\end{figure}

\begin{figure}[ht]
\centering
\includegraphics[scale=0.6]{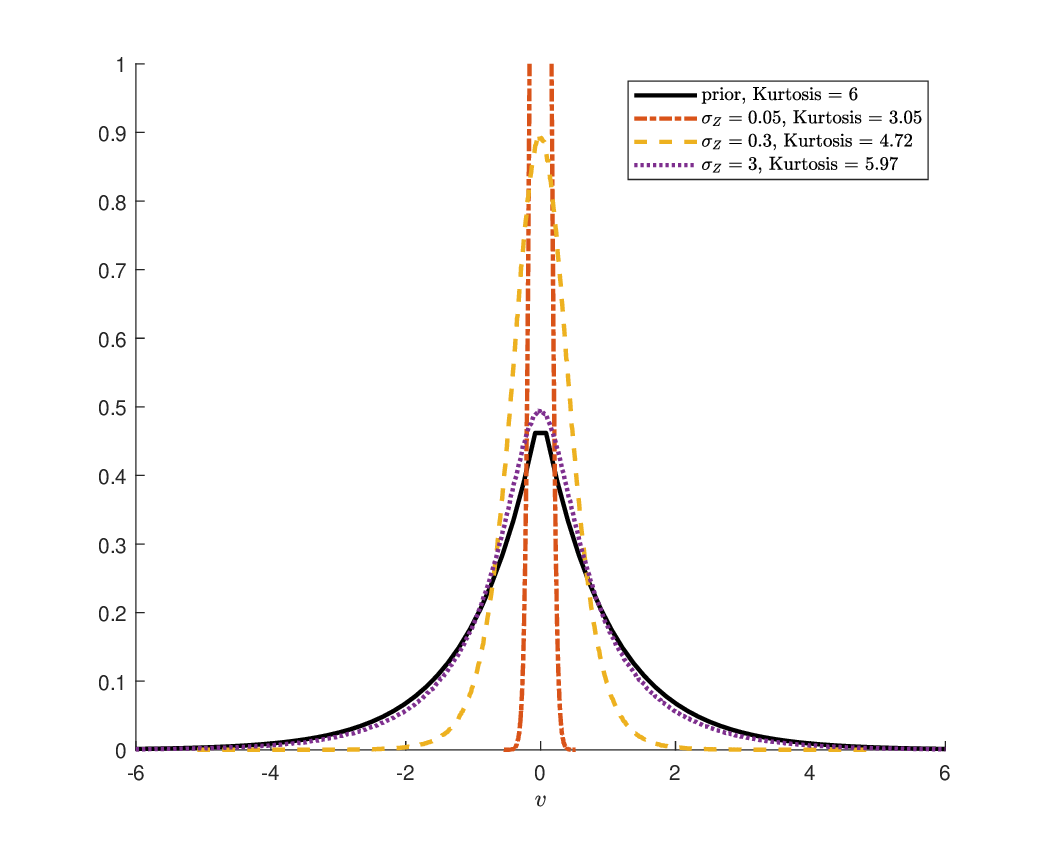}
\caption{Double exponential distributed payoff (different $\sigma$)}
    \begin{minipage}{\textwidth}
	\footnotesize
	\vspace{2mm}
This figure presents the density of the conditional expected payoff and its Kurtosis, when the payoff $\tilde{v}$ has a double exponential distribution.  The signal $\tilde{z}$ has a normal distribution $N(0,\sigma_Z^2 T)$. The density of the prior distribution of $\tilde{v}$ is $\rho(v) = 0.5 e^{-v}$ for $v>0$ and $\rho(v) = 0.5 e^{v}$ for $v<0$. Other parameters are  $\lambda=1$ and $T=1$.
    \end{minipage}
    \label{fig:cont_sig}
\end{figure}


\bibliographystyle{aea}
\bibliography{bibliography}

@article{Sims1998,
  author = {Sims, Christopher A.},
  title = {Stickiness},
  journal = {Carnegie-Rochester Conference Series on Public Policy},
  year = {1998},
  volume = {49},
  number = {1},
  pages = {317--356},
  month = {December},
  publisher = {Elsevier}
}

@article{Sims2003,
  title={Implications of rational inattention},
  author={Sims, Christopher A. },
  journal={Journal of Monetary Economics},
  year={2003},
  volume={50},
  pages={665-690},
}

@article{Kyle1985,
   title = {Continuous auctions and insider trading},
    author = {Kyle, ALbert S.},
    journal = {Econometrica},
    year = {1985},
    volume = {53},
    number = {6},
    pages = {1315--1335},
}

@article{Back1992,
    title = {Insider trading incontinuous time},
    author = {Back, Kerry},
    journal = {Review of Financial Studies},
    year = {1992},
    volume = {5},
    number = {3},
    pages = {387--409},
}

@unpublished{Back_et_al_2021,
    title = {Optimal transport and risk aversion in {K}yle's model of infomred trading},
    author = {Back, Kerry and Cocquemas, Francois and Ekren, Ibrahim and Lioui, Abraham},
    note = {Working paper, Rice University, Florida State University, EDHEC Business School},
    year = {2021},
}

@unpublished{Han2021,
    author = {Han, Brandon},
    title = {Dynamic information acquisition and asset prices},
    note = {Working paper, Smith School, University of Maryland},
    year = {2021},
}

@article{Gelbrich1990,
  title={{On a Formula for the L2 Wasserstein Metric between Measures on Euclidean and Hilbert Spaces}},
  author={Gelbrich, Matthias},
  journal={Mathematische Nachrichten},
  volume={147},
  pages={185-203},
  year={1990},
}

@article{Sinkhorn1964,
  author  = {Sinkhorn, Richard},
  title   = {A Relationship Between Arbitrary Positive Matrices and Doubly Stochastic Matrices},
  journal = {The Annals of Mathematical Statistics},
  year    = {1964},
  volume  = {35},
  number  = {2},
  pages   = {876--879},
}

@article{SinkhornKnopp1967,
  author  = {Sinkhorn, Richard and Knopp, Paul},
  title   = {Concerning nonnegative matrices and doubly stochastic matrices},
  journal = {Pacific Journal of Mathematics},
  year    = {1967},
  volume  = {21},
  number  = {2},
  pages   = {343--348},
}

@article{RuschendorfThomsen1993,
  author  = {R{\"u}schendorf, Ludger and Thomsen, Wolfgang},
  title   = {Note on the {S}chr{\"o}dinger equation and {$I$}-projections},
  journal = {Statistics \& Probability Letters},
  year    = {1993},
  volume  = {17},
  number  = {5},
  pages   = {369--375},
}

@article{PeyreCuturi2019,
  author  = {Peyr{\'e}, Gabriel and Cuturi, Marco},
  title   = {Computational Optimal Transport},
  journal = {Foundations and Trends in Machine Learning},
  year    = {2019},
  volume  = {11},
  number  = {5-6},
  pages   = {355--607},
  doi     = {10.1561/2200000073},
  publisher = {now Publishers}
}

@article{Ruschendorf1995,
  author  = {R{\"u}schendorf, Ludger},
  title   = {Convergence of the {I}terative {P}roportional {F}itting {P}rocedure},
  journal = {The Annals of Statistics},
  year    = {1995},
  volume  = {23},
  number  = {4},
  pages   = {1160--1174},
}

@article{CaplinDeanLeahy2022,
  title = {Rationally Inattentive Behavior: Characterizing and Generalizing Shannon Entropy},
  author = {Caplin, Andrew and Dean, Mark and Leahy, John},
  journal = {Journal of Political Economy},
  volume = {130},
  number = {6},
  pages = {1676--1715},
  year = {2022},
  publisher = {University of Chicago Press},
  doi = {10.1086/719276}
}

@article{CaplinDeanLeahy2019,
  title = {Rational Inattention, Optimal Consideration Sets, and Stochastic Choice},
  author = {Caplin, Andrew and Dean, Mark and Leahy, John},
  journal = {The Review of Economic Studies},
  volume = {86},
  number = {3},
  pages = {1061--1094},
  year = {2019},
  month = {05},
  doi = {10.1093/restud/rdy037},
  publisher = {Oxford University Press}
}

@techreport{CaplinDean2013,
  title = {Behavioral Implications of Rational Inattention with Shannon Entropy},
  author = {Caplin, Andrew and Dean, Mark},
  institution = {National Bureau of Economic Research},
  type = {Working Paper},
  number = {19318},
  year = {2013},
  month = {August},
  doi = {10.3386/w19318},
  url = {https://www.nber.org}
}

@article{MatejkaMcKay2015,
  title = {Rational Inattention to Discrete Choices: A New Foundation for the Multinomial Logit Model},
  author = {Mat\v{e}jka, Filip and McKay, Alisdair},
  journal = {American Economic Review},
  volume = {105},
  number = {1},
  pages = {272--298},
  year = {2015},
  month = {January},
  doi = {10.1257/aer.20130047},
  url = {https://www.aeaweb.org}
}

@article{MackowiakMatejkaWiederholt2023,
  title = {Rational Inattention: A Review},
  author = {Ma\'{c}kowiak, Bartosz and Mat\v{e}jka, Filip and Wiederholt, Mirko},
  journal = {Journal of Economic Literature},
  volume = {61},
  number = {1},
  pages = {226--273},
  year = {2023},
  month = {March},
  doi = {10.1257/jel.20211524},
  url = {https://www.aeaweb.org/articles?id=10.1257/jel.20211524}
}

@book{galichon2016optimal, title={Optimal Transport Methods in Economics}, author={Galichon, Alfred}, year={2016}, publisher={Princeton University Press}, address={Princeton, NJ}, isbn={9780691172767}, url={https://press.princeton.edu/books/hardcover/9780691172767/optimal-transport-methods-in-economics}
}

@incollection{kyle1984market,
  author    = {Kyle, Albert S.},
  title     = {Market Structure, Information, Futures Markets, and Price Formation},
  booktitle = {International Agricultural Trade: Advanced Readings in Price Formation, Market Structure, and Price Instability},
  editor    = {Storey, Gary G. and Schmitz, Andrew and Sarris, Alexander H.},
  pages     = {45--64},
  year      = {1984},
  publisher = {Westview Press},
  address   = {Boulder, CO}
}

@article{admati1988theory,
  title={A Theory of Intraday Patterns: Volume and Price Variability},
  author={Admati, Anat R. and Pfleiderer, Paul},
  journal={The Review of Financial Studies},
  volume={1},
  number={1},
  pages={3--40},
  year={1988},
  publisher={Oxford University Press}
}

@article{vives2014possibility,
  title={On the Possibility of Informationally Efficient Markets},
  author={Vives, Xavier},
  journal={Journal of the European Economic Association},
  volume={12},
  number={5},
  pages={1200--1239},
  year={2014},
  publisher={Oxford University Press}
}

@techreport{khorramithereze2025magic,
  title={The Magic of Markets: Information Acquisition and Aggregation with Strategic Traders},
  author={Khorrami, Paymon and Thereze, Jo\~{a}o},
  institution={Duke University},
  type={Working Paper},
  year={2025},
  url={http://papers.paymonkhorrami.com/Information_Aggregation_in_Financial_Markets.pdf}
}

@article{lambert2018strategic,
  title={Strategic Trading in Informationally Complex Environments},
  author={Lambert, Nicolas S. and Ostrovsky, Michael and Panov, Mikhail},
  journal={Econometrica},
  volume={86},
  number={4},
  pages={1119--1157},
  year={2018},
}

@article{bagnoli2001existence,
  title={On the Existence of Linear Equilibria in Models of Market Making},
  author={Bagnoli, Mark and Viswanathan, S. and Holden, Craig},
  journal={Mathematical Finance},
  volume={11},
  number={1},
  pages={1--31},
  year={2001},
}

@article{BackPedersen1998,
  title = {Long-lived information and intraday patterns},
  author = {Back, Kerry and Pedersen, Hal},
  journal = {Journal of Financial Markets},
  volume = {1},
  number = {3-4},
  pages = {385--402},
  year = {1998},
}

@article{Banerjee2020,
  author = {Banerjee, Snehal and Breon-Drish, Bradyn},
  title = {Strategic trading and unobservable information acquisition},
  journal = {Journal of Financial Economics},
  year = {2020},
  volume = {138},
  number = {2},
  pages = {458--482},
}

@article{Banerjee2022,
  author = {Banerjee, Snehal and Breon-Drish, Bradyn},
  title = {Dynamics of Research and Strategic Trading},
  journal = {Review of Financial Studies},
  year = {2022},
  volume = {35},
  number = {2},
  pages = {908--961},
}

@article{BackBaruch2004,
  author = {Back, Kerry and Baruch, Shmuel},
  title = {Information in Securities Markets: Kyle Meets Glosten and Milgrom},
  journal = {Econometrica},
  year = {2004},
  volume = {72},
  number = {2},
  pages = {433--465},
}

@article{CollinDufresneFos2016,
  author = {Collin-Dufresne, Pierre and Fos, Vyacheslav},
  title = {Insider Trading, Stochastic Liquidity, and Equilibrium Prices},
  journal = {Econometrica},
  year = {2016},
  volume = {84},
  number = {4},
  pages = {1441--1475},
}

@article{CaldenteyStacchetti2010,
  author = {Caldentey, René and Stacchetti, Ennio},
  title = {Insider Trading With a Random Deadline},
  journal = {Econometrica},
  year = {2010},
  volume = {72},
  number = {1},
  pages = {245--283},
  doi = {10.3982/ECTA7884},
  url = {https://doi.org/10.3982/ECTA7884}
}

@article{GrossmanStiglitz1980,
  author = {Grossman, Sanford J. and Stiglitz, Joseph E.},
  title = {On the Impossibility of Informationally Efficient Markets},
  journal = {The American Economic Review},
  year = {1980},
  volume = {70},
  number = {3},
  pages = {393--408},
  publisher = {American Economic Association},
  url = {https://www.aeaweb.org}
}

@article{Hellwig1980,
  author = {Hellwig, Martin F.},
  title = {On the Aggregation of Information in Competitive Markets},
  journal = {Journal of Economic Theory},
  year = {1980},
  volume = {22},
  number = {3},
  pages = {477--498},
  doi = {10.1016/0022-0531(80)90056-3},
  url = {https://doi.org/10.1016/0022-0531(80)90056-3}
}

@article{BreonDrish2015,
  author = {Breon-Drish, Bradyn},
  title = {On Existence and Uniqueness of Equilibrium in a Class of Noisy Rational Expectations Models},
  journal = {The Review of Economic Studies},
  year = {2015},
  volume = {82},
  number = {3},
  pages = {868--921},
  doi = {10.1093/restud/rdv001},
  url = {https://doi.org}
}

@article{Chabakauri2022,
  author = {Chabakauri, Georgy and Yuan, Kathy and Zachariadis, Konstantinos E.},
  title = {Multi-asset Noisy Rational Expectations Equilibrium with Contingent Claims},
  journal = {The Review of Economic Studies},
  year = {2022},
  volume = {89},
  number = {5},
  pages = {2445--2490},
  doi = {10.1093/restud/rdab081},
  url = {https://doi.org}
}

@article{Detemple2020,
  author = {Detemple, Jerome and Rindisbacher, Marcel and Robertson, Scott},
  title = {Dynamic Noisy Rational Expectations Equilibrium With Insider Information},
  journal = {Econometrica},
  year = {2020},
  volume = {88},
  number = {6},
  pages = {2697--2737},
  doi = {10.3982/ECTA17038},
  url = {https://doi.org}
}

@Article{Miao_Xing:2024,
    author = {Jianjun Miao and Hao Xing},
    title = {Dynamic discrete choice under rational inattention},
    journal = {Economic Theory},
    volume = {77},
    pages = {597–652},
    year = {2024},
}

@unpublished{Woodford:2012,
    author = {Michael Woodford},
    title = {Inattentive valuation and reference-dependent choice},
    Institution = {Columbia University},
    Note = {Working Paper},
    year = {2012},
}

@article{Dean_Neligh:2023,
    author = {Mark Dean and Nathaniel Neligh},
    title = {Experimental tests of rational inattention},
    journal = {Journal of Political Economy},
    Volume = {131},
    Number = {12},
    Year = {2023},
    pages = {3415-3461},
}

@article{Dean_Neligh:2020,
    author = {Mark Dean and Nathaniel Neligh},
    title = {Estimating information cost functions in models of rational inattention},
    Journal = {Journal of Economic Theory},
    Volume = {187},
    pages = {1-32},
    year = {2020},
}

@article{Hebert_Woodford:2021,
    author = {Benjamin H\'{e}bert and Michael Woodford},
    title = {Neighborhood-based information costs},
    journal = {American Economic Review},
    volume = {111},
    number = {10},
    pages = {3225-3255},
    year = {2021},
}

@article{Pomatto:2023,
    author = {Luciano Pomatto and Philipp Strack and Omer Tamuz},
    title = {The cost of information: the case of constant marginal costs},
    journal = {American Economic Review},
    volume = {113},
    number = {5},
    pages = {1360-1393},
    year = {2023},
}

@inproceedings{justiniano2025entropy,
  author    = {Justiniano, Jorge and Kleiner, Andreas and Moldovanu, Benny and Rumpf, Martin and Strack, Philipp},
  title     = {Entropy-Regularized Optimal Transport in Information Design},
  booktitle = {Proceedings of the 26th ACM Conference on Economics and Computation (EC '25)},
  year      = {2025},
  month     = {July},
  doi       = {10.1145/3736252.3742617},
  publisher = {ACM}
}

\newpage

\setcounter{section}{0}
\renewcommand{\thesection}{APPENDIX \Alph{section}}
\renewcommand{\theequation}{\Alph{section}.\arabic{equation}}
\renewcommand{\thefigure}{\Alph{section}.\arabic{figure}}
\renewcommand{\thetable}{\Alph{section}.\arabic{table}}

\setcounter{equation}{0} \setcounter{figure}{0}
\setcounter{table}{0}
\appendix
\numberwithin{equation}{section}

\section{Proofs}

\subsection{Proof of Lemma \ref{lem:cond_exp}}

Given that $\mathbb{E}[\tilde{v}|\tilde{s}]$ is measurable with respect to $\sigma(Y_T)$, the definition of conditional expectation, that it minimizes the mean squared error, implies 
\begin{equation}
    \label{proof:cond_exp_ineq1}
    \mathbb{E}\big[\big(\tilde{v} - \mathbb{E}[\tilde{v}|\tilde{s}]\big)^2\big] \geq \mathbb{E}\big[\big(\tilde{v} - \mathbb{E}[\tilde{v}|\mathcal{F}^Y_T] \big)^2\big].
\end{equation}
Meanwhile, because $\tilde{v}$ and $Z$ are independent, $\mathbb{E}\big[\tilde{v}| \mathcal{F}^Z_T \vee \sigma (\tilde{s})\big] = \mathbb{E}[\tilde{v}|\tilde{s}]$. Then $\mathcal{F}^Y_T \subset \mathcal{F}^Z_T \vee \sigma (\tilde{s})$ yields 
\begin{equation}
    \label{proof:cond_exp_ineq2}
    \mathbb{E}\big[\big(\tilde{v} - \mathbb{E}[\tilde{v}| \mathcal{F}^Y_T]\big)^2\big] \geq \mathbb{E}\big[\big(\tilde{v} - \mathbb{E}[\tilde{v}| \mathcal{F}^Z_T\vee \sigma(\tilde{s})]\big)^2\big] = \mathbb{E}\big[\big(\tilde{v} - \mathbb{E}[\tilde{v}| \tilde{s}]\big)^2\big].
\end{equation}
Combining \eqref{proof:cond_exp_ineq1} and \eqref{proof:cond_exp_ineq2} yields
\[
 \mathbb{E}\big[\big(\tilde{v} - \mathbb{E}[\tilde{v}|\tilde{s}]\big)^2\big]  =  \mathbb{E}\big[\big(\tilde{v} - \mathbb{E}[\tilde{v}| \mathcal{F}^Y_T]\big)^2\big].
\]
Due to the uniqueness of conditional expectation, we conclude $\mathbb{E}[\tilde{v}|\tilde{s}] = \mathbb{E}\big[\tilde{v}|\mathcal{F}^Y_T\big]$.

\subsection{Proof of Proposition \ref{prop:corres}}

Consider an equilibrium in Definition \ref{def:equilibrium}. Because the informed agent's trading strategy is measurable with respect to $\sigma(\tilde{s})$, we have seen from the discussion leading to \eqref{def:insider_obj_trans} that 
\begin{equation}\label{proof:equal_exp_pro}
  \mathbb{E} \Big[\int_0^T \big(\mathbb{E}[\tilde{v}] - P_t\big) \theta_t dt \Big] =  \mathbb{E} \Big[\int_0^T \big(\mathbb{E}[\tilde{v}|\tilde{s}] - P_t\big) \theta_t dt \Big].
\end{equation}
Therefore, the informed trader's strategy $(p(\cdot|\cdot), q(\cdot))$ and $\theta_t$ also maximizes the objective in \eqref{def:insider_obj_trans}. Meanwhile, for a Markovian pricing rule, \cite[Lemma 2]{Back1992} shows that the optimal trading strategy satisfies 
\begin{equation}\label{proof:P_T_id}
P_T = H(T,Y_T) = \mathbb{E}[\tilde{v}|\tilde{s}].
\end{equation}
Therefore,
\[
P_t = \mathbb{E}\big[\tilde{v}\,|\, \mathcal{F}^Y_t\big] = \mathbb{E} \Big[\mathbb{E}\big[\tilde{v}| \mathcal{F}^Y_T\big]\, \Big| \,\mathcal{F}^Y_t\Big] = \mathbb{E} \Big[\mathbb{E}\big[\tilde{v}| \tilde{s}\big]\, \Big| \,\mathcal{F}^Y_t\Big], 
\]
where the second equality follows from the law of iterated conditional expectation, the third equality holds due to Lemma \ref{lem:cond_exp}. Therefore, the pricing rule satisfies \eqref{def:price_2}, and the equilibrium satisfies the requirement in Definition \ref{def:equilibrium_2}.

Conversely, consider an equilibrium in Definition \ref{def:equilibrium_2}. Because the informed agent's trading strategy is measurable with respect to $\sigma(\tilde{s})$, Equation \eqref{proof:equal_exp_pro} still holds. Hence the informed trader's strategy $(p(\cdot|\cdot), q(\cdot))$ and $\theta_t$ also maximizes the objective in \eqref{def:insider_obj}. Meanwhile, \cite[Lemma 2]{Back1992} implies \eqref{proof:P_T_id} for a Markovian pricing rule. Therefore, Lemma \ref{lem:cond_exp} implies that
\[
P_t = \mathbb{E} \Big[\mathbb{E}\big[\tilde{v}| \tilde{s}\big]\, \Big| \,\mathcal{F}^Y_t\Big] = \mathbb{E} \Big[\mathbb{E}\big[\tilde{v}| \mathcal{F}^Y_T\big]\, \Big| \,\mathcal{F}^Y_t\Big] = \mathbb{E}\big[\tilde{v}\,|\, \mathcal{F}^Y_t\big].
\]
Therefore, this equilibrium also satisfies all requirements in Definition \ref{def:equilibrium}.

\subsection{Proof of Proposition \ref{prop:normal}}
The first order condition of \eqref{signal_str_pro_normal} in $\xi$ is 
\begin{equation}\label{proof:FOC_xi}
 \xi^{-\frac12} = \frac{\lambda}{\sigma_v \sigma_Z \sqrt{T}} \frac{1}{1-\xi}.
\end{equation}
The left-hand side of the equation above decreases from infinity to zero and the right-hand side increases from $\lambda$ to infinity, as $\xi$ increases from 0 to 1. Therefore, there exists a unique solution $\xi^*\in(0,1)$ to this first order condition. The objective function in \eqref{signal_str_pro_normal} is concave in $\xi$, because the second order derivative of the objective function is negative. Therefore, $\xi^*$ is the unique maximizer. 

When $\frac{\lambda}{\sigma_v \sigma_Z \sqrt{T}}$ increases, the right-hand side of \eqref{proof:FOC_xi} increases for a fixed $\xi$. Therefore, $\xi^*$ is smaller in order to satisfy the first-order condition \eqref{proof:FOC_xi}. 

Given $\xi^*$, $\sigma^2_{\epsilon}$ satisfies $\xi^* = \frac{\sigma_v^2}{\sigma^2_{v} + \sigma^2_{\epsilon}}$. Hence, the optimal conditional precision of the signal satisfies \eqref{normal_opt_precision}. Because the right-hand side of \eqref{normal_opt_precision} increases with $\xi^*$, the observation from the previous paragraph implies that the optimal signal precision $\frac{1}{\sigma^2_{\epsilon}}$ decreases with $\frac{\lambda}{\sigma_v \sigma_Z \sqrt{T}}$.

\subsection{Proof of Proposition \ref{prop:normal_opt}}

We divide the proof into several steps. First, we add another constraint to the problem \eqref{signal_str_pro} that the variance of $\mathbb{E}[\tilde{v}|\tilde{s}]$ is a given constant $K$. We show that the optimal value of the problem \eqref{signal_str_pro} with the additional constraint is bounded from above by the optimal value of an alternative problem. Second, we show that a normal distribution $(p(\cdot|\cdot), q(\cdot))$ achieves the optimal value for this alternative problem, and it also achieves the optimal value for the original problem. Third, we allow the informed trader to vary $K$ and finally confirm the statement in Proposition \ref{prop:normal_opt}. 

\medskip

\noindent\underline{Step 1}:  Consider the problem \eqref{signal_str_pro} with the constraints \eqref{con:q_int}, \eqref{con:p_int}, and \eqref{con:Bayes}. We consider an additional constraint that the variance of $\mathbb{E}[\tilde{v}|\tilde{s}]$ is a given constant $K$. Equivalently, the second moment of $\mathbb{E}[\tilde{v}|\tilde{s}]$ is given by
\begin{equation}\label{con:second_mom}
 \mathbb{E}\big[\mathbb{E}[\tilde{v}\, |\, \tilde{s}]^2\big] = \int \Big(\int v \,p(v|s) dv \Big)^2 q(s) ds = K + \mathbb{E}[\tilde{v}]^2.
\end{equation}

We introduce an alternative problem whose optimal value provides an upper bound for the problem \eqref{signal_str_pro} with the constraints \eqref{con:q_int}, \eqref{con:p_int}, \eqref{con:Bayes}, and \eqref{con:second_mom}. To this end, recall that $\mathbb{E}[\tilde{v}|\tilde{s}]$ and $Z_T$ are independent. For an arbitrary CDF $F_s$ for $\mathbb{E}[\tilde{v}|\tilde{s}]$, its Wasserstein-2 distance admits the following \cite{Gelbrich1990} lower bound : 
\begin{equation}\label{Gelbrich_bd}
 W_2^2(F_s, G) \geq (\mu_G - \mathbb{E}[\tilde{v}])^2 + (\sigma_G - \sqrt{K})^2,
\end{equation}
where $\mu_G = 0$ is the mean of $Z_T$ and $\sigma_G = \sigma_Z \sqrt{T}$ is the standard deviation. Moreover, the lower bound on the right-hand side of \eqref{Gelbrich_bd} is attained when $F_s$ is a normal CDF with mean $\mathbb{E}[\tilde{v}]$ and variance $K$. Let us now replace $W_2^2(F_s, G)$ in \eqref{signal_str_pro} by its lower bound and consider the following problem:
\begin{align}
&\max_{p(\cdot|\cdot), q(\cdot)} \frac12  \mathbb{E}\big[\mathbb{E}[\tilde{v}| \tilde{s}]^2\big]   -\lambda I(\tilde{s}; \tilde{v}) + \frac12 \sigma_Z^2 T - \frac12 \mathbb{E}[\tilde{v}]^2 - \frac12 \big(\sigma_G - \sqrt{K} \big)^2,\label{pro_2}\\
\text{subject to } & \eqref{con:q_int}, \eqref{con:p_int}, \eqref{con:Bayes}, \text{ and } \eqref{con:second_mom}. \nonumber
\end{align}
The value of problem \eqref{pro_2} provides an upper bound to the value of problem \eqref{signal_str_pro}. Moreover, if we show the value of \eqref{pro_2} is attained by a normal distribution $(p^*(\cdot|\cdot), q^*(\cdot))$, then $F_s$ is a normal CDF and the Gelbrich lower bound in \eqref{Gelbrich_bd} is attained. Hence $(p^*(\cdot|\cdot), q^*(\cdot))$ also attains the optimal value of problem \eqref{signal_str_pro}.

\medskip

\noindent \underline{Step 2}: We consider the problem \eqref{pro_2} with the constraints \eqref{con:q_int}, \eqref{con:p_int}, \eqref{con:Bayes}, and \eqref{con:second_mom}. To this end, recall from \eqref{def:info_cost} that 
\begin{align*}
    \lambda I(\tilde{s}; \tilde{v}) = &\lambda \big[\mathbb{H}(p) - \int \mathbb{H}(p(\cdot |s)) q(s) ds\big]\\
    = & \lambda\iint \log\big(p(v|s)\big) p(v|s) q(s) \, dv ds - \lambda\int \log(p(v)) p(v) \, dv,
\end{align*}
Plugging the previous expression of $\lambda I(\tilde{s}; \tilde{v})$ into \eqref{pro_2} and omitting quantities independent of $p(\cdot|\cdot)$ and $q(\cdot)$, we obtain the following problem:
\begin{align}
&\max_{p(\cdot|\cdot), q(\cdot)}  \frac12 \int \Big(\int v p(v|s) dv \Big)^2 q(s) \, ds - \lambda \iint \log\big(p(v|s)\big) p(v|s) q(s) \, dv ds\label{pro_3}\\
\text{subject to } & \eqref{con:q_int}, \eqref{con:p_int}, \eqref{con:Bayes}, \text{ and } \eqref{con:second_mom}. \nonumber
\end{align}
Introducing Lagrangian multiplier for the four constraints above, we obtain the following unconstrained problem
\begin{equation}\label{pro:uncon}
\begin{split}
&\max_{p(\cdot|\cdot), q(\cdot)} \frac12 \int \Big(\int v p(v|s) dv \Big)^2 q(s) \, ds - \lambda \iint \log\big(p(v|s)\big) p(v|s) q(s) \, dv ds \\
&+ \int \mu(v)\Big(\int p(v|s) q(s) ds - p(v)\Big) dv + \tau \Big(\int \Big(\int v p(v|s) dv \Big)^2 q(s) ds - K - \mathbb{E}[\tilde{v}]^2\Big) \\
& + \int \nu(s) \Big(\int p(v|s) dv -1 \Big) ds + \eta \Big(\int q(s) ds -1 \Big).
\end{split}
\end{equation}
The objective function above is linear in $q(s)$. Hence, for $s$ with $q^*(s)\neq 0$, the following first-order condition must be satisfied at optimal
\begin{equation}
    \label{FOC:q}
    \begin{split}
0 = \frac12 \Big(\int v p^*(v|s) dv\Big)^2 - \lambda \int \log (p^*(v|s)) p^*(v|s) dv + \int \mu(v) p^*(v|s) dv + \tau \Big(\int v p^*(v|s) dv \Big)^2 + \eta.
    \end{split}
\end{equation}
For $s$ with $q^*(s)\neq 0$,
the first-order condition in $p(v|s)$ yields 
\begin{equation}
    \label{FOC:p}
    0 =  (1+ 2\tau) v \, \int v' p^*(v'|s) dv'  - \lambda - \lambda \log p^*(v|s) + \mu(v) + \frac{\nu(s)}{q(s)}.
\end{equation}
Because the marginal value of $-\log\big(p(v|s)\big) p(v|s)$ is infinite at $p(v|s) =0$, the optimizer $p^*(v|s)$ must be nonzero except for some $v$'s with measure zero. Therefore \eqref{FOC:p} must be satisfied for almost all $v$.

The equation \eqref{FOC:p} implies 
\begin{equation}
    \label{p:exp}
    p^*(v|s) = M(s) e^{(1+2\tau) v \,\mathbb{E}[\tilde{v}|\tilde{s} = s]/\lambda + \mu(v)/\lambda}, 
\end{equation}
for some function $M(s)$ which ensures the constraint $\int p^*(v|s) dv =1$ to be satisfied for any $s$.
Taking log on both sides of \eqref{p:exp}, multiplying both sides by $p^*(v|s)$, and integrating with respect to $v$, we obtain 
\begin{equation*}
    \int \log(p^*(v|s)) p^*(v|s) dv = \log M(s) \int p^*(v|s) dv + \frac{1+2\tau}{\lambda} \big(\mathbb{E}[\tilde{v} |\tilde{s}= s] \big)^2 + \frac{1}{\lambda} \int \mu(v) p^*(v|s) dv. 
\end{equation*}
For the function $M(s)$ which ensures the constraint $\int p^*(v|s) dv =1$ for any $s$, we obtain from the previous equation that 
\begin{equation*}
    \lambda \int \log(p^*(v|s)) p^*(v|s) dv  - \int \mu(v) p^*(v|s) dv = \lambda \log M(s) + (1+2\tau) \mathbb{E}[\tilde{v}|\tilde{s}= s]^2.
\end{equation*}
Plugging the previous equation back into \eqref{FOC:q}, we obtain 
\[
M(s) = e^{-(1/2 + \tau) \mathbb{E}[\tilde{v} | \tilde{s}=s]^2/\lambda + \eta /\lambda}. 
\]
Combining the previous equation with \eqref{p:exp}, we obtain 
\begin{equation}
    \label{p:exp_2}
    p^*(v|s) = e^{\frac{1/2+\tau}{\lambda} 2 v \mathbb{E}[\tilde{v} |\tilde{s}= s] - \frac{1/2+\tau}{\lambda} \mathbb{E}[\tilde{v}|\tilde{s}= s]^2 + \frac{\mu(v)+ \eta}{\lambda}} = e^{\frac{\mu(v) +\eta}{\lambda} + \frac{1/2+\tau}{\lambda} v^2} \, e^{-\frac{1/2+\tau}{\lambda} \big(v - \mathbb{E}[\tilde{v} |\tilde{s}= s]\big)^2}.
\end{equation}
Denote $\sigma_{v|s} := \sqrt{\frac{\lambda}{1+ 2\tau}}$. Lemma \ref{lem:normal_posterior} below implies that $p^*(v|s)$ must be a normal density with mean $\mathbb{E}[\tilde{v}|\tilde{s}= s]$ and standard deviation $\sigma_{v|s}$. Hence $\mu(v)$ must satisfy
\begin{equation}
    \label{eq:mu}
    e^{\frac{\mu(v) -\eta}{\lambda} + \frac{1/2+\tau}{\lambda} v^2} = \frac{1}{\sqrt{2\pi} \sqrt{\frac{\lambda}{1+2\tau}}}, \quad \text{ for any } v.
\end{equation}
Given that both the posterior density $p^*(v|s)$ and the prior density $p(v)$ are normal, the marginal density $q^*(s)$ must be chosen to satisfy the Bayes constraint  $\int p^*(v|s) q^*(s) ds = p(v)$ and optimize the objective function in \eqref{pro:uncon}. Plugging the densities for $p^*(v|s)$ and $p(v)$, the Bayes constraint is rewritten as 
\begin{equation}\label{normal_bayesian}
    \int \frac{1}{\sqrt{2\pi} \sigma_{v|s}} e^{- \frac{(v- \mathbb{E}[\tilde{v} |\tilde{s}= s])^2}{2\sigma^2_{v|s}}} q^*(s) ds = \frac{1}{\sqrt{2\pi} \sigma_v} e^{-\frac{(v- \mathbb{E}[\tilde{v}])^2}{2\sigma_v^2}}, \quad \text{ for any } v.
\end{equation}
This constraint can be satisfied by choosing a normal density $\widetilde{q}^*$ with mean $\mathbb{E}[\tilde{v}]$ and variance $\sigma^2_v/\xi$, with $\xi = 1- \frac{\sigma_{v|s}^2}{\sigma_v^2}$. Meanwhile, note that the objective function in \eqref{pro:uncon} is linear in $q(s)$, hence \eqref{FOC:q} implies that $\widetilde{q}^*$ does not change the value of the objective function. This choice of $\tilde{q}^*$ also ensures that $\tilde{q}^*(s) \neq 0$ for any $s$, verifying a condition used in \eqref{FOC:p}. Therefore, $(p^*(\cdot|\cdot), \widetilde{q}^*(\cdot))$ is an optimizer for \eqref{pro:uncon}, and both $p^*(\cdot|\cdot)$ and $\widetilde{q}^*(\cdot)$ are normal.

To complete this step, we note that both $\sigma_{v|s}$ and $\xi$ are determined by $\tau$, which is the Langrangian multiplier associated to the constraint \eqref{con:second_mom}. For a given $\tau$, the distribution $(p^*(\cdot|\cdot), \tilde{q}^*(\cdot))$ can be implemented by introducing a single
\begin{equation}\label{proof:signal_impl}
\tilde{s} = \tilde{v} + \epsilon,
\end{equation}
where $\epsilon\sim N(0, \sigma^2_{\epsilon})$ with $\sigma^2_{\epsilon} = \sigma_v^2 (\frac{1}{\xi}-1)$. It follows from \eqref{proof:signal_impl} that $\mathbb{E}[\tilde{v} | \tilde{s}= s] = \mathbb{E}[\tilde{v}] + \xi (s-\mathbb{E}[\tilde{v}])$.
Therefore, 
\begin{align*}
  K=\text{Var}\big(\mathbb{E}[\tilde{v}|\tilde{s}]\big)  = \xi^2 \text{Var}(\tilde{s}) = \xi \sigma_v^2. 
\end{align*}
When $\tau$ increases, $\sigma_{v|s}$ decreases and $\xi$ increases. Hence, there is an 1-to-1 monotone increasing relation between $\tau$ and $K$. 

\medskip
\noindent\underline{Step 3}: As the informed trader varies $K$ to maximize the objective in \eqref{pro_2}, there is a unique associated $\tau$ and the optimal $(p^*(\cdot|\cdot), \tilde{q}^*(\cdot))$ are both normal and their variances are determined by $\tau$, or equivalently, by $\xi$. As we have seen in the proof of Proposition \ref{prop:normal}, there exists a unique $\xi^*$ which maximizes the objective in \eqref{signal_str_pro_normal}. Meanwhile, the associated optimal $(p^*(\cdot|\cdot), \tilde{q}^*(\cdot))$ is normal, hence the Gelbrich lower bound is attained, hence 
$(p^*(\cdot|\cdot), \tilde{q}^*(\cdot))$ is also an optimizer for the problem \eqref{signal_str_pro} with the constraints \eqref{con:q_int}, \eqref{con:p_int}, and \eqref{con:Bayes}.

\subsection{An auxiliary lemma for the proof of Proposition \ref{prop:normal_opt}}

\begin{lemma}\label{lem:normal_posterior}
Suppose that $\frac{d}{ds} \mathbb{E}[\tilde{v}|\tilde{s}=s] \neq 0$ for any $s$ with $q(s) \neq 0$. Then the density $p(v|s)$, which satisfies 
\begin{equation}
    \label{p_normal}
    p(v|s) = e^{\frac{\mu(v) +\eta}{\lambda} + \frac{1/2+\tau}{\lambda} v^2} \, e^{-\frac{1/2+\tau}{\lambda} \big(v - \mathbb{E}[\tilde{v} |\tilde{s} = s]\big)^2}, \quad \text{for any } v \text{ and } s,
\end{equation}
must be a normal density with mean $\mathbb{E}[\tilde{v}|\tilde{s} = s]$ and standard deviation $\sigma_{v|s}=\sqrt{\frac{\lambda}{1+ 2\tau}}$.
\end{lemma}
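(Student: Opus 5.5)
The plan is to reduce the statement to a uniqueness property of two-sided Laplace transforms. Write $m(s) := \mathbb{E}[\tilde{v}|\tilde{s}=s]$ and $c := \frac{1/2+\tau}{\lambda}$. We assume $c>0$, i.e.\ $\tau > -1/2$. This is implicit in the definition of $\sigma_{v|s}=\sqrt{\lambda/(1+2\tau)}$, and it is needed because otherwise \eqref{p_normal} cannot be integrable in $v$.

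Set $g(v) := e^{(\mu(v)+\eta)/\lambda} \geq 0$, which does not depend on $s$. Expanding the square in \eqref{p_normal} gives
\[
p(v|s) = g(v)\, e^{2c\, v\, m(s) - c\, m(s)^2}.
\]
The constraint $\int p(v|s)\,dv = 1$ then reads
\[
\int g(v)\, e^{2c\, v\, m}\, dv = e^{c\, m^2} \quad \text{for every } m \in \mathcal{M},
\]
where $\mathcal{M} := \{ m(s) : q(s)\neq 0 \}$.

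Next we show that $\mathcal{M}$ contains a nondegenerate open interval $J$. This is exactly where the hypothesis $\frac{d}{ds}m(s)\neq 0$ enters. Since $m$ is differentiable with a nonzero derivative on a set of signals of positive mass, $m$ is continuous there and is not constant on some interval of signals. By the intermediate value theorem, its image therefore contains an open interval $J$.

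The key comparison is the Gaussian identity
\[
\int \sqrt{c/\pi}\, e^{-c v^2} e^{2c\, v\, m}\, dv = e^{c\, m^2},
\]
which holds for all real $m$. Consequently the two nonnegative functions $g$ and $g_0(v):=\sqrt{c/\pi}\,e^{-cv^2}$ have identical two-sided Laplace transforms $L(\theta)=\int e^{\theta v}(\cdot)\,dv$ on the open interval $2cJ$.

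The main step, which I expect to be the only real obstacle, is to deduce $g=g_0$ almost everywhere from this agreement on an interval. I would argue as follows.
\begin{itemize}
\item Pick $\theta_0$ in the interior of $2cJ$. Tilt both functions, forming $g(v)e^{\theta_0 v}/e^{c(\theta_0/2c)^2}$ and the analogous expression for $g_0$. Both are probability densities.
\item Their moment generating functions coincide on a neighborhood of $0$.
\item Finiteness of the MGF near $0$ implies that each MGF extends analytically to a complex strip. Restricted to the imaginary axis, this shows the characteristic functions agree, via analytic continuation from the real segment.
\item By uniqueness of characteristic functions the tilted densities coincide, hence $g=g_0$ a.e.
\end{itemize}
The step needing care is the analytic continuation: finiteness on a real interval must give analyticity in the strip $\{\operatorname{Re} z \in 2cJ\}$. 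This follows from dominated convergence applied to $\int e^{zv} g(v)\,dv$.

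With $g=\sqrt{c/\pi}\,e^{-cv^2}$, substituting back gives
\[
p(v|s) = \sqrt{c/\pi}\; e^{-c\,(v-m(s))^2},
\]
which is the normal density with mean $m(s)$ and variance $1/(2c)=\lambda/(1+2\tau)$, i.e.\ standard deviation $\sigma_{v|s}$. The mean is consistent with the definition of $m(s)$, so there is nothing further to check. As a by-product, $\mu(v)$ is pinned down as in \eqref{eq:mu}.
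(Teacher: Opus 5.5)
Your argument is correct, but it takes a different route from the paper.

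\textbf{The paper's route.} The paper starts from the mean identity $\int v\,p(v|s)\,dv=\mathbb{E}[\tilde v|\tilde s=s]$. It differentiates this identity, and then the resulting central-moment identities, repeatedly in $s$, dividing by $\frac{d}{ds}\mathbb{E}[\tilde v|\tilde s=s]\neq 0$ at each step. This produces the Gaussian recursion for the central moments of $p(\cdot|s)$: variance $\sigma^2_{v|s}$, vanishing odd moments, and so on by induction. It concludes because the normal law is determined by its moments.

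\textbf{Your route.} You use only the normalization $\int p(v|s)\,dv=1$. You read it as $\int g(v)e^{\theta v}\,dv=e^{\theta^2/(4c)}$ for $\theta$ in an interval, and identify $g$ with the Gaussian kernel by uniqueness of the Laplace transform. Both arguments express the same fact: an exponential family with quadratic log-partition function is Gaussian. The paper reads this off derivatives of the cumulant function; you get it by analytic continuation.

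\textbf{What your version buys.}
\begin{itemize}
\item It avoids the interchange of $d/ds$ and $\int dv$ that the paper performs infinitely often without comment. That interchange is legitimate precisely because $p(\cdot|s)$ has a finite moment generating function near $0$, which is your identity.
\item It avoids the moment problem.
\item It shows the mean condition is automatic and delivers $\mu(v)$ directly as in \eqref{eq:mu}.
\item It shows the derivative hypothesis is only used to make the set $\mathcal M$ of attained posterior means rich. By the identity theorem it suffices that $\mathcal M$ has an accumulation point in the interior of its convex hull. This is because $\int g(v)e^{\theta v}dv$ is finite, hence analytic, on the strip over the interior of the convex hull of $2c\mathcal M$.
\end{itemize}
The paper's version, in exchange, stays within elementary calculus and shows how each moment is forced.

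\textbf{Two small corrections.}

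First, your justification of $c>0$ is not the right one. Integrability of \eqref{p_normal} in $v$ depends on $g$, which could, for instance, have compact support. The assumption is still harmless because $c>0$ is forced:
\begin{itemize}
\item If $c=0$, then $p(v|s)=g(v)$ is independent of $s$, contradicting $\frac{d}{ds}\mathbb{E}[\tilde v|\tilde s=s]\neq 0$.
\item If $c<0$, then $\log\int g(v)e^{\theta v}dv$ would equal the strictly concave $\theta^2/(4c)$ on an interval. This contradicts its convexity, which follows from H\"older's inequality.
\end{itemize}

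Second, your intermediate-value step producing an open interval $J\subset\mathcal M$ presumes that $\{q\neq 0\}$ contains an interval of signals. This is the same implicit regularity the paper uses when it differentiates in $s$. If you want to avoid it, the accumulation-point version above suffices. The condition $m'(s)\neq 0$ lets you split the positive-measure set $\{q\neq 0\}$ into countably many pieces on each of which $m$ is injective. Hence $\mathcal M$ is uncountable and has accumulation points inside its convex hull.
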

\begin{proof}
 It follows from the definition of conditional expectation that $\int v p(v|s) dv = \mathbb{E}[\tilde{v}|\tilde{s} = s]$. Plugging the expression of $p(v|s)$ from \eqref{p_normal} into the previous equation and taking derivative with respect to $s$ on both sides, we obtain
 \[
 \int v p(v|s) \frac{v-\mathbb{E}[\tilde{v}|s]}{\sigma^2_{v|s}} \frac{d}{ds} \mathbb{E}[\tilde{v}|\tilde{s} = s] \, dv = \frac{d}{ds}\mathbb{E}[\tilde{v}|\tilde{s} = s].
 \]
 Because $\frac{d}{ds} \mathbb{E}[\tilde{v}|\tilde{s} = s]\neq 0$, the previous equation and $\int v p(v|s) dv = \mathbb{E}[\tilde{v}|\tilde{s} = s]$ imply that 
 \begin{equation}
     \label{p_second}
     \int \big(v - \mathbb{E}[\tilde{v}|\tilde{s} = s]\big)^2 p(v|s) dv = \sigma^2_{v|s}.
 \end{equation}
 Taking derivatives with respect to $s$ on both sides of \eqref{p_second}, because $\sigma^2_{v|s}= \frac{\lambda}{2+2\tau}$ is independent of $s$, we obtain
 \[
 -2 \frac{d}{ds} \mathbb{E}[\tilde{v}|\tilde{s} = s] \int \big(v - \mathbb{E}[\tilde{v}\, |\, s]\big) p(v|s) dv + \frac{d}{ds} \mathbb{E}[\tilde{v}|\tilde{s} = s] \int \big(v- \mathbb{E}[\tilde{v}|\tilde{s} = s] \big)^3 \frac{1}{\sigma^2_{v|s}} p(v|s) dv =0.
 \]
 This previous equation, together with $\int v p(v|s) dv = \mathbb{E}[\tilde{v}|\tilde{s} = s]$ and $\frac{d}{ds} \mathbb{E}[\tilde{v}|\tilde{s} = s] \neq 0$, implies that 
 \begin{equation}
     \label{p_third}
     \int \big(v- \mathbb{E}[\tilde{v}|\tilde{s} = s] \big)^3 p(v|s) dv =0.
 \end{equation}
 Taking the derivative with respect to $s$ to both sides of the equation above and using the similar argument as before, we obtain 
 \begin{equation}
     \label{p_fourth}
     \int \big(v- \mathbb{E}[\tilde{v}|\tilde{s} = s] \big)^4 p(v|s) dv = 3 \sigma^2_{v|s}.
 \end{equation}

 Using induction and similar argument as before, we prove that 
 \[
 \int \big(v- \mathbb{E}[\tilde{v}|\tilde{s} = s] \big)^{2n-1} p(v|s) dv = 0 \quad \text{and} \quad \int \big(v- \mathbb{E}[\tilde{v}|\tilde{s} = s] \big)^{2n} p(v|s) dv = \sigma_{v|s}^{2n} \prod_{k=1}^n(2k-1),
 \]
 for any positive integer $n$. These are exactly the central moments of a normal distribution with mean $\mathbb{E}[\tilde{v}|\tilde{s} = s]$ and variance $\sigma_{v|s}^2$. The statement follows, because moments uniquely identify a distribution.
\end{proof}

\subsection{Proof of Lemma \ref{lem:pro_cont}}
We claim that 
\begin{equation}
    \label{Fs_quant_cont}
    F^{-1}_s\big(Q(s)\big) = \mathbb{E}[\tilde{v}|s], \quad \text{for any } s.
\end{equation}
To see this, suppose that $Q(s) = F_s(v)$ for some $v$. Then 
\begin{equation}\label{proof_lem_cont_1}
Q(s) = F_s(v) = \mathbb{P}\big(\mathbb{E}[\tilde{v}|\tilde{s}] \leq v \big).
\end{equation}
Meanwhile,
\begin{equation}\label{proof_lem_cont_2}
Q(s) = \mathbb{P}(\tilde{s} \leq s) = \mathbb{P}\big( \mathbb{E}[\tilde{v}|\tilde{s}] \leq \mathbb{E}[\tilde{v}|s]\big),
\end{equation}
where the second equality holds because $s\mapsto \mathbb{E}[\tilde{v}|s]$ is strictly increasing. Comparing \eqref{proof_lem_cont_1} and \eqref{proof_lem_cont_2}, we obtain $v = \mathbb{E}[\tilde{v}|s]$, and hence \eqref{Fs_quant_cont} is confirmed.

Plug \eqref{Fs_quant_cont} into \eqref{expected_profit_cont} and use change-of-variable $u=Q(s)$, 
\[
\int_0^1 F_s^{-1}(u) G^{-1}(u) \, du = \int F_s^{-1} \big(Q(s) \big) G^{-1}\big( Q(s)\big) q(s) \, ds = \int \mathbb{E}[\tilde{v}|s] G^{-1}\big( Q(s)\big) q(s) \, ds.
\]
The information acquisition cost is 
\[
\lambda I (\tilde{s}; \tilde{v}) = \lambda \Big\{\sum_{n=1}^N \int \log \big(p(v_n|s) \big) p(v_n|s) q(s) \, ds - \sum_{n=1}^N \log \big( p(v_n)\big) p(v_n) \Big\}.
\]
Combining the previous two equations, we obtain the objective function in \eqref{obj_cont}. The constraint \eqref{con:p_cont} ensures that the posterior $p(\cdot|s)$ is a probability for any $s$ and the constraint \eqref{con:Baye_cont} is the Bayes plausibility.

\subsection{Proof of Proposition \ref{prop:opt_cont}}
\noindent\underline{Statements in (i)}:
Introduce the Lagrangian multiplier $\mu_n$ and $\nu(s)$, $1\leq n\leq N, s\in \text{support}(q)$, for the constraints \eqref{con:p_cont} and \eqref{con:Baye_cont}. We obtain the following unconstrained problem for \eqref{obj_cont}:
\begin{align}
\min_{\mu, \nu} \max_{p(\cdot|\cdot), q(\cdot)}  & \int \mathbb{E}[\tilde{v}| s] G^{-1}\big(Q(s)\big) q(s) \nonumber \\
&- \lambda \Big\{\int \sum_{n=1}^N \log \big(p(v_n|s) \big) p(v_n| s) q(s) \, ds -\sum_{n=1}^N \log \big(p(v_n)\big) p(v_n)\Big\} \nonumber\\
 &+ \int \nu(s) \big[\sum_{n=1}^N p(v_n | s) -1\big] \,ds +  \sum_{n=1}^N \mu_n \big[\int p(v_n | s) q(s)ds - p(v_n)\big]. \label{obj_uncon_cont}
\end{align}
Recall that $\mathbb{E}[\tilde{v}| s] = \sum_{n=1}^N v_n p(v_n | s)$. When $q(s) \neq 0$, the first-order condition of $p(v_n|s)$ in \eqref{obj_uncon_cont} yields
\[
v_n G^{-1}\big(Q(s)\big) q(s) - \lambda q(s) - \lambda \log \big(p(v_n|s)\big) q(s) + \nu(s) + \mu_n q(s) =0.
\]
This implies that 
\[
p(v_n | s) = C(s) \, e^{\frac{1}{\lambda}(v_n G^{-1}(Q(s)) +\mu_n)},
\]
for some factor $C(s)$. The constraint $\sum_{n=1}^N p(v_n | s) =1$ pins down the factor $C(s)$ and leading to the expression 
\begin{equation}\label{opt_p_cont_proof}
p(v_n|s) = \frac{e^{\frac{1}{\lambda}(v_n G^{-1}(Q(s)) +\mu_n)}}{\sum_{n'=1}^N e^{\frac{1}{\lambda}(v_{n'} G^{-1}(Q(s)) +\mu_{n'})}}.
\end{equation}
Given that the objective function in \eqref{obj_uncon_cont} is concave in $p(v_n|s)$ due to the entropy cost, $p(v_n| s)$ is also the maximizer for given $q(s)$ and $(\mu_1, \dots, \mu_N)$.

Given $p(v_n |s)$ and $q(s)$ for $1\leq n\leq N$ and $s\in \text{support}(q)$, the existence and uniqueness (up to an additive constant) of $(\mu_1, \dots, \mu_N)$ has been discussed in the text. Such choice of $\mu$'s ensures the Bayes plausibility $\int p(v_n| s) q(s)\, ds = p(v_n)$ for any $n$.

Plugging \eqref{opt_p_cont_proof} into \eqref{obj_uncon_cont}, we obtain the following objective function for $q$:
    \begin{equation}\label{obj:q_cont_proof}
    \begin{split}
        \lambda \int \log \Big(\sum_{n=1}^N e^{\frac{1}{\lambda}\big(v_n G^{-1}(Q(s)) + \mu_n\big)} \Big) q(s) \, ds + \lambda \sum_{n=1}^N \log\big(p(v_n)\big) p(v_n) - \sum_{n=1}^N \mu_n p(v_n) .
    \end{split}
    \end{equation}

\smallskip

\noindent\underline{Statements in (ii)}: For any continuously distributed $\tilde{s}$ with the cumulative distribution function $Q$, $Q(\tilde{s})$ is uniformly distributed on $[0,1]$. Hence $G^{-1}(Q(\tilde{s}))$ has a normal distribution $N(0, \sigma_Z \sqrt{T})$. Therefore, 
\begin{align}
    \int \log \Big(\sum_{n=1}^N e^{\frac{1}{\lambda}\big(v_n G^{-1}(Q(s)) + \mu_n\big)} \Big) q(s) \, ds = & \mathbb{E} \Big[\log \Big(\sum_{n=1}^N e^{\frac{1}{\lambda}\big(v_n G^{-1}(Q(\tilde{s})) + \mu_n\big)} \Big)\Big] \nonumber\\
    = & \mathbb{E} \Big[\log \Big(\sum_{n=1}^N e^{\frac{1}{\lambda}\big(v_n \tilde{z} + \mu_n\big)} \Big)\Big], \label{exp_profit_cont_proof}
\end{align}
where $\tilde{z}$ is a normal random variable with distribution $N(0,\sigma_Z^2 T)$. Plugging \eqref{exp_profit_cont_proof} into \eqref{obj:q_cont_proof}, we obtain \eqref{opt_value:cont}. 
For the constraint \eqref{con:Baye_cont}, it can be stated equivalently as 
\begin{equation}\label{proof:pvn}
\begin{split}
p(v_n) = & \int p(v_n|s) q(s) ds = \int \frac{e^{\frac{1}{\lambda}(v_n G^{-1}(Q(s)) +\mu_n)}}{\sum_{n'=1}^N e^{\frac{1}{\lambda}(v_{n'} G^{-1}(Q(s)) +\mu_{n'})}} q(s) \, ds \\
=&  \mathbb{E} \Big[\frac{e^{\frac{1}{\lambda}(v_n G^{-1}(Q(\tilde{s})) +\mu_n)}}{\sum_{n'=1}^N e^{\frac{1}{\lambda}(v_{n'} G^{-1}(Q(\tilde{s})) +\mu_{n'})}}\Big]
=  \mathbb{E} \Big[\frac{e^{\frac{1}{\lambda}(v_n \tilde{z} +\mu_n)}}{\sum_{n'=1}^N e^{\frac{1}{\lambda}(v_{n'} \tilde{z} +\mu_{n'})}}\Big].
\end{split}
\end{equation}
The previous identities show that the constraint \eqref{con:Baye_cont} is the same for any continuously distributed $\tilde{s}$. Therefore, the constraint \eqref{con:Baye_cont} is satisfied by choosing $(\mu_1, \dots, \mu_N)$ as the solution of the Sinkhorn problem associated with \eqref{proof:pvn}, which exists and is unique (up to additive constants); see \cite{RuschendorfThomsen1993}. Finally, note that the value in \eqref{proof:pvn} does not change if $(\mu_1, \dots, \mu_N)$ is replaced by $(\mu_1 + c, \dots, \mu_N+c)$ for any $c$. The statement in (ii) is confirmed.

\smallskip

\noindent{\underline{Statements in (iii)}}: It follows from \eqref{opt_p_cont_proof} that 
\[
\mathbb{E}[\tilde{v}|\tilde{s}] = \sum_{n=1}^N v_n p(v_n|\tilde{s}) = \frac{\sum_{n=1}^N v_n e^{\frac{1}{\lambda}(v_n G^{-1}(Q(\tilde{s})) +\mu_n)}}{\sum_{n=1}^N e^{\frac{1}{\lambda}(v_{n} G^{-1}(Q(\tilde{s})) +\mu_{n})}} \sim \frac{\sum_{n=1}^N v_n e^{\frac{1}{\lambda}(v_n \tilde{z} +\mu_n)}}{\sum_{n=1}^N e^{\frac{1}{\lambda}(v_{n} \tilde{z} +\mu_{n})}}.
\]
In the equation above $\sim$ means that two random variables share the same distribution. The equal distribution above holds due to the fact that $G^{-1}(Q(\tilde{s}))$ and $\tilde{z}$ share the same distribution.  Hence, any continuous signal leads to the same expected payoff distribution. In particular, the informed trader can choose $\tilde{z}$ as the signal. 

Finally, we verify that $z\mapsto \mathbb{E}[\tilde{v}|z]$ is strictly increasing. To this end, denote $w_n := \frac{e^{(v_n z + \mu_n)/\lambda}}{\sum_{n=1}^N e^{(v_n z+\mu_n)/\lambda}}$. Note that $\sum_{n=1}^N w_n =1$ and $\mathbb{E}[\tilde{v}| z] = \sum_{n=1}^N v_n w_n$. Then 
\[
\frac{d}{dz} \mathbb{E}[\tilde{v}|z] = \sum_{n=1}^n v_n^2 w_n - \Big(\sum_{n=1}^N v_n w_n \Big)^2 >0,
\]
where the inequality follows from Cauchy-Schwarz inequality.

\subsection{Proof of Corollary \ref{cor:two_state_density}}
In this case, $N=2$. For any $v\in (v_1, v_2)$, we compute the CDF for $\mathbb{E}[\tilde{v}|\tilde{z}]$:
\begin{align*}
 F_s(v) =& \mathbb{P}\big(\mathbb{E}[\tilde{v}|\tilde{z}] \leq v \big) = \mathbb{P} \Big(\sum_{n=1}^2 v_n e^{(v_n \tilde{z} +\mu_n)/\lambda} \leq v \sum_{n=1}^2 e^{(v_n \tilde{z} + \mu_n)/\lambda}\Big) \\
 =& \mathbb{P}\Big((v_1 - v) e^{(v_1 \tilde{z} + \mu_1)/\lambda} \leq (v-v_2) e^{(v_2 \tilde{z} +\mu_2)/\lambda}\Big) = \mathbb{P}\left(\tilde{z} \leq \frac{\lambda}{v_1 - v_2}\log \Big(\frac{v_2-v}{v-v_1} \Big) - \frac{\mu_1 - \mu_2}{v_1 - v_2} \right) \\
 = & \mathbb{P}\left(\frac{\tilde{z}}{\sigma_Z \sqrt{T}} \leq \frac{\lambda}{\sigma_Z \sqrt{T}(v_1 - v_2)}\log \Big(\frac{v_2-v}{v-v_1} \Big) - \frac{\mu_1 - \mu_2}{\sigma_Z \sqrt{T}(v_1 - v_2)} \right)
\end{align*}
where the second equality follows from the form of $\mathbb{E}[\tilde{v}|\tilde{z}]$ in \eqref{exp_payoff:cont} and $\frac{\tilde{z}}{\sigma_Z \sqrt{T}}$ has a standard normal distribution. Taking the derivative with respect to $v$ in $F_s(v)$, we obtain the density in \eqref{two_state_density}.

\subsection{Proof of Proposition \ref{prop:disc_gen}}

Introduce the Lagrangian multiplier $\nu_m$ and $\mu_n$, $1\leq n\leq N, 1\leq m\leq M$, for the constraints \eqref{con:p_prob} and \eqref{con:bayesian}. We obtain the following unconstrained problem for \eqref{insider_prob_disc}:
\begin{align}
\min_{\mu, \nu} \max_{p(\cdot|\cdot), q(\cdot)}  & \sum_{m=1}^M \mathbb{E}[\tilde{v}\,|\, s_m] I_m q(s_m) \nonumber \\
&- \lambda \Big\{\sum_{m=1}^M \sum_{n=1}^N \log \big(p(v_n\,|\,s_m) \big) p(v_n\,|\, s_m) q(s_m) -\sum_{m=1}^M \log \big(p(v_n)\big) p(v_n)\Big\} \nonumber\\
 &+ \sum_{m=1}^M \nu_m \Big[\sum_{n=1}^N p(v_n\,|\, s_m) -1\Big]+  \sum_{n=1}^N \mu_n \Big[\sum_{m=1}^M p(v_n\,|\, s_m) q(s_m) - p(v_n)\Big]. \label{obj_uncon_disc}
\end{align}
Recall that $\mathbb{E}[\tilde{v}\,|\, s_m] = \sum_{n=1}^N v_n p(v_n\,|\, s_m)$. When $q(s_m) \neq 0$, the first-order condition of $p(v_n\,|\, s_m)$ in \eqref{obj_uncon_disc} yields
\[
v_n I_m q(s_m) - \lambda q(s_m) - \lambda \log \big(p(v_n\,|\, s_m)\big) q(s_m) + \nu_m + \mu_n q(s_m) =0.
\]
This implies that 
\[
p(v_n \,|\, s_m) = C_m e^{(v_n I_m +\mu_n)/\lambda},
\]
for some factor $C_m$. The constraint $\sum_{n=1}^N p(v_n\,|\, s_m) =1$ pins down the factor $C_m$ and leading to the expression 
\begin{equation}\label{opt_p_disc_proof}
p(v_n\,|\, s_m) = \frac{e^{(v_n I_m +\mu_n)/\lambda}}{\sum_{n'=1}^N e^{(v_{n'} I_m +\mu_{n'})/\lambda}}.
\end{equation}
Given that the objective function in \eqref{obj_uncon_disc} is concave in $p(v_n\,|\, s_m)$ due to the entropy cost, $p(v_n\,|\, s_m)$ is also the maximizer for given $q$'s and $\mu$'s.

Given $p(v_n \,|\, s_m)$ and $q(s_m)$ for $1\leq n\leq N, 1\leq m\leq M$, the existence and uniqueness (up to an additive constant) of $(\mu_1, \dots, \mu_N)$ is the consequence of the (rectangular) Sinkhorn-Knopp problem. Here rectangular means that $M\neq N$. This is the discrete analogue of the (continuous) Sinkhorn-Knopp problem in Proposition \ref{prop:opt_cont}.   Such choice of $\mu$'s ensures the Bayes plausibility $\sum_{m=1}^M p(v_n\,|\, s_m) q(s_m) = p(v_n)$ for any $n$.

Plugging \eqref{opt_p_disc_proof} into \eqref{obj_uncon_disc}, we obtain the objective function for $q$'s in \eqref{obj:q_disc}. Note that for signal state $s_m$ with $q(s_m)=0$, it does not contribute to the objective function in \eqref{obj:q_disc}. Finally, we show that there exists a maximizer of the objective function \eqref{obj_uncon_disc} on the simplex of $\mathbb{R}^M$. To this end, note that $\mathbb{E}[\tilde{v}|s_m]$ is bounded from above by the maximum value of $\{v_1, \dots, v_N\}$. Meanwhile, 
\[
\big|I_m q_m\big| = \Big|\int_{Q_{m-1}}^{Q_m} G^{-1}(d) du\Big| \leq \int_0^\infty G^{-1}(u) du = \frac{\sigma_Z\sqrt{T}}{\sqrt{2\pi}},
\]
where the last equality holds because $G$ is the CDF of $N(0,\sigma_Z^2 T)$. Meanwhile, the information acquisition cost in \eqref{insider_prob_disc} is nonnegative. Therefore, the objective function in \eqref{insider_prob_disc} is uniformly bounded from above. Consequently, the objective function \eqref{obj:q_disc} is bounded from above on the simplex of $\mathbb{R}^{M}$. The objective function is also continuous. Hence there exists an optimizer.

\subsection{Proof of Proposition \ref{prop:cont_optimal}}

If the signal has $M$ different signal states, we denote the value in \eqref{obj_uncon_disc} as $V^M(p)$, where $p = (p(v_1), \dots, p(v_N))$ is the prior payoff distribution. Equation \eqref{obj_uncon_disc} implies that the value function $V^M(p)$ can be rewritten as
\begin{align}
    V^M(p) = &\min_{\mu} \Big\{\widehat{V}^M(\mu) - \sum_{n=1}^N \mu_n p(v_n) \Big\}, \quad \text{where } \mu = (\mu_1, \dots, \mu_N) \text{ and }\label{duality} \\
    \widehat{V}^M(\mu) := &\min_{\nu} \max_{p(\cdot|\cdot), q(\cdot)} \sum_{m=1}^M \mathbb{E}[\tilde{v}\,|\, s_m] I_m q(s_m) \nonumber \\
&- \lambda \Big\{\sum_{m=1}^M \sum_{n=1}^N \log \big(p(v_n\,|\,s_m) \big) p(v_n\,|\, s_m) q(s_m) -\sum_{n=1}^N \log \big(p(v_n)\big) p(v_n)\Big\} \nonumber\\
 &+ \sum_{m=1}^M \nu_m \Big[\sum_{n=1}^N p(v_n\,|\, s_m) -1\Big]+  \sum_{n=1}^N \mu_n \sum_{m=1}^M p(v_n\,|\, s_m) q(s_m).
\end{align}

For each fixed $\mu$, we claim that 
\begin{equation}\label{hatV_ineq}
 \widehat{V}^M(\mu) < \widehat{V}^{M+1}(\mu).
\end{equation}
Then, subtract $\sum_{n=1}^N \mu_n p(v_n)$ on both sides,
\[
\widehat{V}^M(\mu) - \sum_{n=1}^N \mu_n p(v_n) < \widehat{V}^{M+1}(\mu) - \sum_{n=1}^N \mu_n p(v_n).
\]
Taking infimum over $\mu$ first on the left-hand side, then on the right-hand side, we obtain that 
\begin{equation}
 V^M(p) \leq V^{M+1}(p).
\end{equation}

It then remains to prove the claim \eqref{hatV_ineq}. To this end, it follows from Proposition \ref{prop:disc_gen} that 
\[
\widehat{V}^M(\mu) = \max_{q(\cdot)} \Big\{  \lambda \sum_{m=1}^M \log \Big(\sum_{n=1}^N e^{(v_n I_m + \mu_n)/\lambda} \Big) q(s_m)  \Big\} + \lambda \sum_{n=1}^N \log\big(p(v_n)\big) p(v_n). 
\]
Similar argument as in the proof of Proposition \ref{prop:disc_gen} shows that the objective function in $q$ is uniformly bounded from above on the simplex of $\mathbb{R}^M$. Meanwhile, this objective function is continuous. Therefore, an optimizer exists and we call it $q^M$. 

Next, we are going to construct a new signal with $M+1$ signal states which achieves strictly higher value than $q^M$. To achieve this goal, let us consider a signal state $s_m$ with the associated probability $q^M(s_m)$. Split the probability $q^M(s_m)$ into $\overline{q_m}$ and $\underline{q_m}$ such that $\overline{q_m} + \underline{q_m} = q^M(s_m)$. Replace the signal state $s_m$ by two new signal states $\overline{s_m}$ and $\underline{s_m}$, meanwhile, keep all other signal states and associated probabilities the same. Given that all $q^M(s_{m'})$ remain the same for $m'\neq m$, let us compare the following terms affected by the splitted signal:
\[
\log \Big(\sum_{n=1}^N e^{(v_n I_m +\mu_n )/\lambda} \Big) q(s_m) \quad \text{and } \quad \log \Big(\sum_{n=1}^N e^{(v_n \underline{I_m} +\mu_n)/\lambda} \Big) \underline{q_m} + \log \Big(\sum_{n=1}^N e^{(v_n \overline{I_m}+\mu_n)/\lambda} \Big) \overline{q_m},
\]
where $\underline{I_m} := \frac{1}{\underline{q_m}} \int_{Q_{m-1}}^{Q_{m-1} + \underline{q_m}} G^{-1}(u) du$ and $\overline{I_m}:= \frac{1}{\overline{q_m}} \int_{Q_m - \overline{q_m}}^{Q_m} G^{-1}(u) du$. Denote $\alpha := \frac{\underline{q_m}}{q(s_m)}$, then $1-\alpha = \frac{\overline{q_m}}{q(s_m)}$. Observe that 
\[
\alpha \underline{I_m} + (1-\alpha) \overline{I_m} = I_m.
\]
Consider the function 
\[
f(x) : = \log \Big(\sum_{n=1}^N e^{(v_n x+\mu_n)/\lambda}\Big).
\]
We have that 
\[
f^{''}(x) = \sum_{n=1}^N (v_n)^2 w_n - \Big(\sum_{n=1}^N v_n w_n\Big)^2, \quad w_n = \frac{e^{(v_n x+\mu_n)/\lambda}}{\sum_{n'=1}^N e^{(v_{n'}x + \mu_{n'})/\lambda}}. 
\]
The Cauchy-Schwarz inequality implies that $f^{''}(x)>0$, hence $f$ is strictly convex. It then follows from the Jensen's inequality that 
\[
\begin{split}
&\log \Big(\sum_{n=1}^N e^{(v_n I_m+\mu_n)/\lambda} \Big) q(s_m) =  q(s_m) f(I_m) = q(s_m) f\big( \alpha \underline{I_m} + (1-\alpha) \overline{I_m} \big) \\
&< q(s_m) \alpha f(\underline{I_m}) + q(s_m) (1-\alpha) f(\overline{I_m}) =\log \Big(\sum_{n=1}^N e^{(v_n \underline{I_m}+\mu_n)/\lambda} \Big) \underline{q_m} + \log \Big(\sum_{n=1}^N e^{(v_n \overline{I_m}+\mu_n)/\lambda} \Big) \overline{q_m}. 
\end{split}
\]
Therefore, splitting the signal strictly improves the value. Given that $q^M$ is the optimal marginal when $M$ signal states are allowed, splitting $q(s_m)$  strictly improves the value. Optimizing over signals with $M+1$ states weakly improve the value further, hence \eqref{hatV_ineq} is confirmed.

\subsection{Two-state symmetric case}

\begin{proposition}
    \label{prop:two_state_sym}
    Suppose $N=2$ and $p(v_1)=p(v_2) =0.5$. The Lagrangian multiplier $\mu$ are zero for symmetric signals.
\end{proposition}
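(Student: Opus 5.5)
The plan is to use the characterization of the multipliers in the proof of Proposition \ref{prop:opt_cont}. With the normal signal $\tilde{z}\sim N(0,\sigma_Z^2T)$, any continuous signal being equivalent by \eqref{proof:pvn}, the Bayes plausibility constraint reads
\[
p(v_n)=\mathbb{E}\Big[\frac{e^{(v_n\tilde{z}+\mu_n)/\lambda}}{\sum_{n'=1}^2 e^{(v_{n'}\tilde{z}+\mu_{n'})/\lambda}}\Big],\quad n=1,2.
\]
Existence and uniqueness of $(\mu_1,\mu_2)$ up to an additive constant are already known from the Sinkhorn argument \citep{RuschendorfThomsen1993}. It is therefore enough to exhibit one solution with $\mu_1=\mu_2$. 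Normalizing the constant, this is the statement $\mu_1=\mu_2=0$.

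Set $\mu_1=\mu_2=0$ and compute $\mathbb{E}[p(v_2|\tilde z)]$. Dividing through gives $p(v_2|z)=1/(1+e^{-(v_2-v_1)z/\lambda})=\Lambda(cz)$, where $c=(v_2-v_1)/\lambda>0$ and $\Lambda$ is the logistic function. Since $\Lambda(x)+\Lambda(-x)=1$ and $\tilde z$ is symmetric about $0$ (the noise-trade distribution $G$ is centered), we have $\mathbb{E}[\Lambda(c\tilde z)]=\mathbb{E}[\Lambda(-c\tilde z)]=1-\mathbb{E}[\Lambda(c\tilde z)]$. Hence $\mathbb{E}[p(v_2|\tilde z)]=1/2=p(v_2)$, and likewise $\mathbb{E}[p(v_1|\tilde z)]=1/2$. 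Both Bayes constraints hold, and by uniqueness up to constants every multiplier vector has $\mu_1=\mu_2$.

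The point needing care is the meaning of ``symmetric signals''. For a general continuous signal $\tilde s$, \eqref{proof:pvn} shows the constraint depends on $\tilde s$ only through $G^{-1}(Q(\tilde s))\sim\tilde z$, so the argument covers every continuous signal. For the discrete $M$-state signals of Proposition \ref{prop:disc_gen}, the same reasoning needs the $I_m$ to be placed symmetrically, $I_m=-I_{M+1-m}$ with matching $q$. This holds when $q$ is symmetric, because $G^{-1}(1-u)=-G^{-1}(u)$. Pairing $m$ with $M+1-m$ and using $\Lambda(x)+\Lambda(-x)=1$ again gives $\sum_m q_m\Lambda(cI_m)=1/2$.

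The main obstacle is this discrete case. There I would state explicitly that symmetry refers to the signal marginal, and rely on uniqueness of the rectangular Sinkhorn scaling \citep[Proposition 4.3]{PeyreCuturi2019} to conclude $\mu_1=\mu_2$.
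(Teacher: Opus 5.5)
Your proposal is correct and follows the same route as the paper. It shows that $\mu_1=\mu_2=0$ satisfies Bayes plausibility by pairing $I_m$ with $I_{M+1-m}=-I_m$ (or $\tilde z$ with $-\tilde z$) and using $\Lambda(x)+\Lambda(-x)=1$, then concludes by uniqueness of the Sinkhorn multipliers up to an additive constant. The paper writes out only the discrete symmetric-$q$ case and calls the continuous case similar; you lead with the continuous case and correctly observe that it needs no symmetry assumption on $\tilde s$, since $G^{-1}(Q(\tilde s))\sim N(0,\sigma_Z^2T)$ for every continuous signal.
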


\begin{proof}
We consider discrete signal with the number of signal states $M$. The proof for continuous signal is similar. 

Given that $q(s_m) = q(s_{M+1-m})$ for any $m$ and the distribution of $N(0,\sigma_Z^2 T)$ is symmetric around zero, we obtain from the definition of $I_m$ that $I_m = -I_{M+1 -m}$ for any $1\leq m\leq M$. Consider the posterior 
\[
p(v_n | s_m) = \frac{e^{v_n I_m/\lambda}}{e^{v_1 I_m/\lambda} + e^{v_2 I_m/\lambda}}, \quad n = 1, 2.
\]
We claim that this posterior satisfies the Bayesian identity
\begin{equation}\label{Bayes_N_2}
\sum_{m=1}^M p(v_n|s_m) q(s_m) = p(v_n) =0.5, \quad n=1,2. 
\end{equation}
It then follows from the uniqueness of $(\mu_1, \mu_2)$ (up to additive constants) that $\mu_1 = \mu_2 = 0$.

To see why \eqref{Bayes_N_2} holds, we observe that 
\[
\begin{split}
\sum_{m=1}^M p(v_1|s_m) q(s_m) = &\sum_{m=1}^M \frac{q(s_m)}{1 + e^{(v_2 - v_1) I_m/\lambda}} = \sum_{m=1}^M \frac{q(s_{M+1-m})}{1+ e^{(v_1 - v_2) I_{M+1-m} /\lambda}} \\
=& \sum_{m'=1}^M \frac{q(s_{m'})}{1+ e^{(v_1-v_2) I_{m'}/\lambda}} = \sum_{m'=1}^M p(v_2|s_{m'}) q(s_{m'}),
\end{split}
\]
where the second equality follows from $q(s_m) = q(s_{M+1-m})$ and $I_{m} = -I_{M+1-m}$, the third equality holds due to $m'= M+1-m$. Because $p(v_1|s_m) + p(v_2|s_m)=1$ for any $m$. The previous displayed equation implies \eqref{Bayes_N_2}. 
\end{proof}

\subsection{Proof of Proposition \ref{prop:cont_v}}

The proof follows that for Proposition \ref{prop:opt_cont} verbatim, except by replacing all summations over $v_n$ with integrals with respect to $v$. Because $\mathbb{E}[\tilde{v}^2]<\infty$, the objective function in \eqref{signal_str_pro} is bounded from above. Therefore, all objective functions in the proof are bounded from above as well.

\subsection{Sinkhorn algorithm}\label{app:sinkhorn_pos}

The Sinkhorn problem \eqref{Sinkhorn-comp} can be solved by the Sinkhorn algorithm:
\begin{enumerate}
\item[Step 0:] Initalize $\xi^{(0)}_n=1$ for any $n\in\{1,\dots,N\}$;
\item[Step i:] Update $\zeta(z)$ via 
\begin{equation}\label{Sinkhorn:zeta_update}
\zeta^{(i)}(z) = \frac{q(z)}{\sum_{n'=1}^N \xi_{n'}^{(i-1)} k_{n'}(z)},
\end{equation}
Update $\xi_n$ via 
\begin{equation}\label{Sinkhorn:xi_update}
\xi^{(i)}_n = \frac{p_n}{\int k_n(z) \zeta^{(i)}(z) dz}, \quad 1\leq n\leq N.
\end{equation}
Iterate until convergence.
\end{enumerate}
\cite{Ruschendorf1995} prove that this algorithm converges to the unique solution of the Sinkhorn problem \eqref{Sinkhorn-comp}. For the finite state cases, the convergence of the algorithm was proved by \cite{Sinkhorn1964}. See also \cite{PeyreCuturi2019} for a recent survey of the Sinkhorn algorithm. When the payoff is continuous or the signal state is finite, \eqref{Sinkhorn-comp} and \eqref{Sinkhorn:xi_update} by using summations in finite state space cases and integrals in continuous state space cases.

When $p_n$ is larger, $\xi^{(i)}_n$ in \eqref{Sinkhorn:xi_update} increases. Then $\zeta^{(i+1)}(z)$ increases, because the denominator of \eqref{Sinkhorn:zeta_update} decreases. It then follows from \eqref{Sinkhorn:xi_update} for the step $i+1$ that $\xi^{(i+1)}_n$ increases, because the denominator in \eqref{Sinkhorn:xi_update} decreases. Following the previous iteration, the limit $\xi_n$, which is a part of Sinkhorn problem solution, increases. Because $\mu_n = \lambda\log(\xi_n)$, we confirm that $\xi_n$ increases with $p_n$.

\section{Two payoff states with asymmetric prior}\label{app:asy_pri}

Figures \ref{fig:two_state_asy_lam}, \ref{fig:two_state_asy_sig}, and \ref{fig:conv_asy} present examples for two-state asymmetric prior, i.e., $N=2$ and $p(v_1)\neq p(v_2)$. The results are similar to those from the symmetric prior case (see Figures \ref{fig:two_state_sym_lam}, \ref{fig:two_state_sym_sig}, and \ref{fig:conv_sym}). Because the prior distribution is asymmetric, the distributions for $\mathbb{E}[\tilde{v}|\tilde{s}]$ are skewed. However, Panel (D) of Figures \ref{fig:two_state_asy_lam} and \ref{fig:two_state_asy_sig} shows that the skewness of $\mathbb{E}[\tilde{v}|\tilde{s}]$ converges to $0$ (the skewness of a normal distribution) as either the information acquisition cost parameter $\lambda$ approaches infinity or the noise trade volatility $\sigma_Z$ converges to zero.

\begin{figure}[h!]
\centering
\includegraphics[scale=0.8]{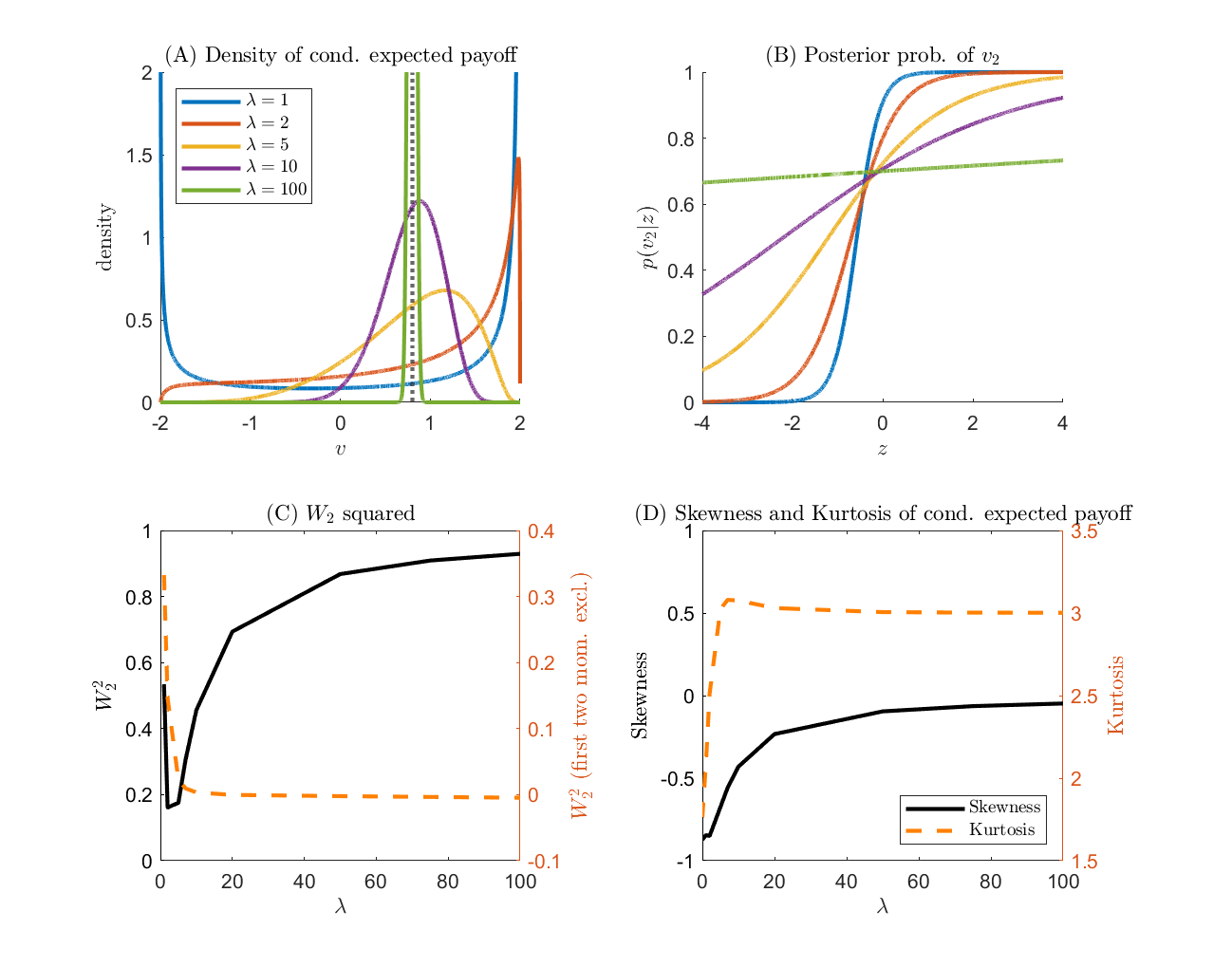}
\caption{Two-state asymmetric prior (different $\lambda$)}
    \begin{minipage}{\textwidth}
	\footnotesize
	\vspace{2mm}
This figure presents the distribution of the conditional expected payoff, its skewness, Kurtosis, its $W_2$-distance with noise trade distribution, and the optimal posterior $p(v_2 | z)$ for the two-state asymmetric prior case. The signal $\tilde{z}$ has a normal distribution $N(0,\sigma_Z^2 T)$.   
Parameters: $N=2$, $v_1 = -2, v_2 = 2$, $p(v_1) = 0.3, p(v_2) = 0.7$, $\sigma_Z=1$, and $T=1$.
    \end{minipage}
    \label{fig:two_state_asy_lam}
\end{figure}

\begin{figure}[h!]
\centering
\includegraphics[scale=0.8]{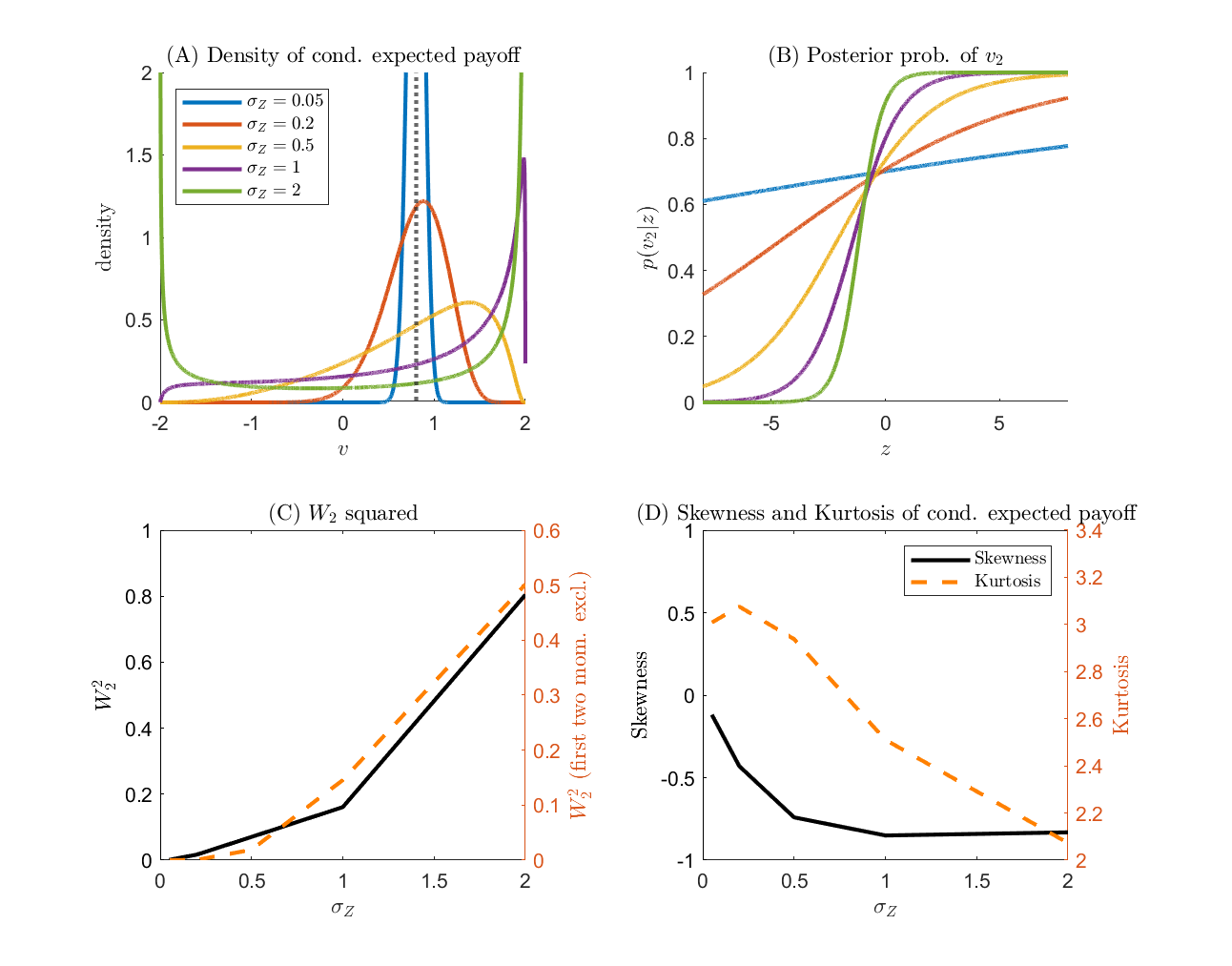}
\caption{Two-state asymmetric prior (different $\sigma_Z$)}
    \begin{minipage}{\textwidth}
	\footnotesize
	\vspace{2mm}
This figure presents the distribution of the conditional expected payoff, its skewness, Kurtosis, its $W_2$-distance with noise trade distribution, and the optimal posterior $p(v_2 | z)$ for the two-state asymmetric prior case. The signal $\tilde{z}$ has a normal distribution $N(0,\sigma_Z^2 T)$.   
Parameters: $N=2$, $v_1 = -2, v_2 = 2$, $p(v_1) = 0.3, p(v_2) = 0.7$, $\lambda=2$, and $T=1$.
    \end{minipage}
    \label{fig:two_state_asy_sig}
\end{figure}

\begin{figure}[h!]
\centering
\includegraphics[scale=0.7]{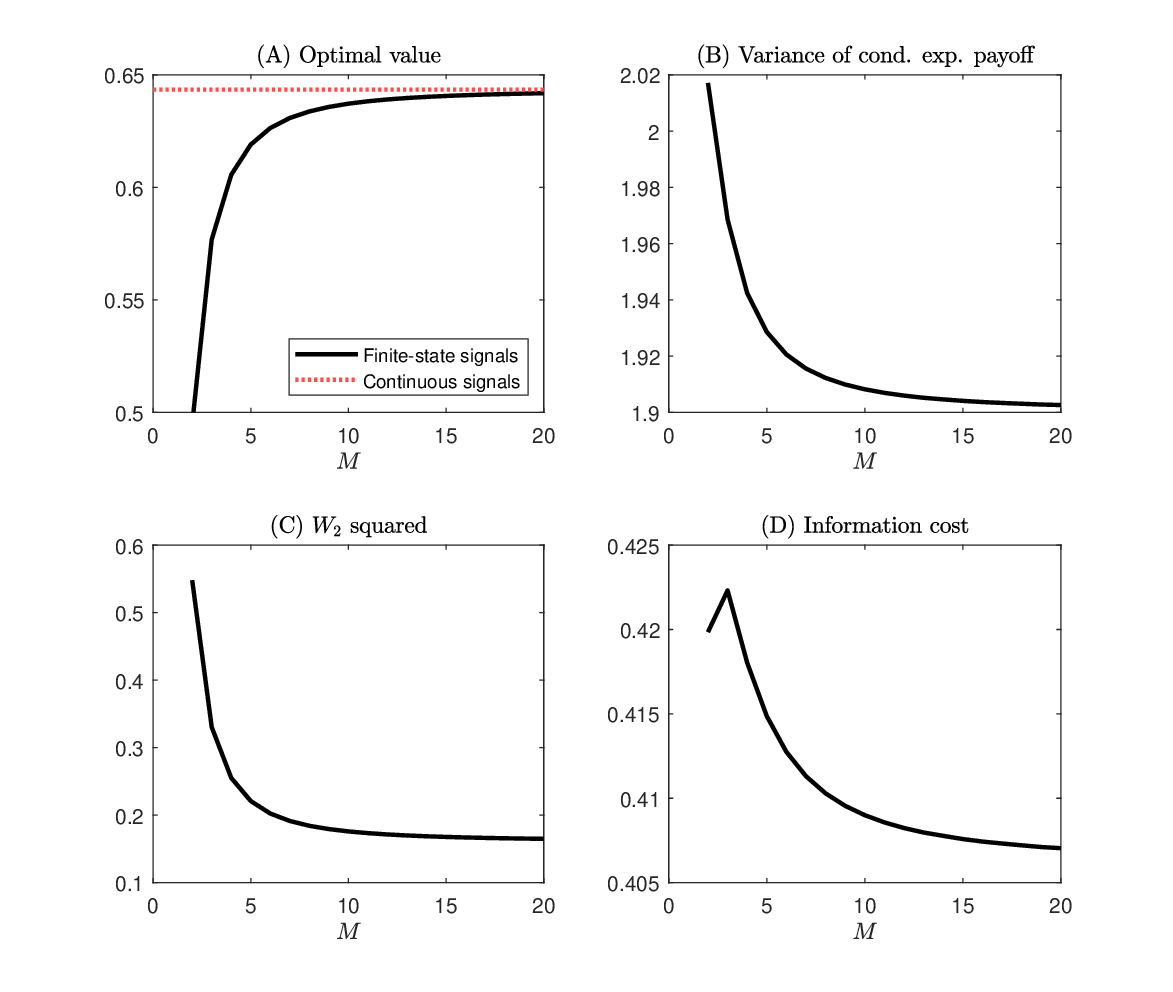}
\caption{Convergence of values (asymmetric prior $p(v_1)=0.3$)}

    \begin{minipage}{\textwidth}
	\footnotesize
	\vspace{2mm}
This figure presents the convergence of optimal value and its different components for the informed trader, as the number os signal states $M$ increases.   
Parameters: $N=2$, $v_1 = -2, v_2 = 2$, $p(v_1) = 0.3, p(v_2) = 0.7$, $\lambda = 2$, $\sigma_Z=1$, and $T=1$.
    \end{minipage}
    \label{fig:conv_asy}
\end{figure}

\end{document}